\documentclass[12pt]{article}
\usepackage{graphicx}
\usepackage{pgfplots}
\usepackage{lineno}
\usepackage[colorlinks=true,citecolor=blue]{hyperref}
\usepackage{natbib}
\usepackage{graphicx}
\usepackage{amsfonts}
\usepackage{amsmath}
\usepackage{amssymb}
\usepackage{url}
\usepackage{fancyhdr}
\usepackage{indentfirst}
\usepackage{enumerate}
\usepackage{titlesec}
\usepackage{amsthm}
\usepackage{dsfont}
\usepackage[misc]{ifsym}

\usepackage{subcaption}

\theoremstyle{definition}

\newtheorem{assump}{Assumption}  
\newenvironment{myassump}[2][]
  {\begin{assump}[#1]} {\end{assump}}

\newtheorem{example}{Example}

\newtheorem{definition}{Definition}

\theoremstyle{plain}
\newtheorem{theorem}{Theorem}
\newtheorem{lemma}{Lemma}
\newtheorem{proposition}{Proposition}

\theoremstyle{remark}

\usepackage{cases}

\def\laweq{\buildrel \mathrm{d} \over =}

\theoremstyle{definition}

\def\P{\mathbb{P}}
\def\p{\mathbb{P}}
\def\E{\mathbb{E}}

\def\R{\mathbb{R}}

\def\X{\mathcal{X}}

\def\X{\mathcal{X}}

\def\d{\,\mathrm{d}}
\DeclareMathOperator*{\esssup}{ess\text{-}sup}
\DeclareMathOperator*{\essinf}{ess\text{-}inf}

\newcommand{\blambda}{{\boldsymbol{\lambda}}}

\renewcommand{\epsilon}{\varepsilon}




 \newcommand{\mo}{\mapsto}







\usepackage[onehalfspacing]{setspace}

\usepackage{bm}
\usepackage{tikz-qtree}
\usepackage{tikz}

\def\id{\mathds{1}}

\setlength{\bibsep}{1pt}

\topmargin -1.30cm \oddsidemargin -0.00cm \evensidemargin 0.0cm
\textwidth 16.56cm \textheight 23.20cm

\parindent 5ex

\title{Optimal risk sharing, equilibria, and welfare with empirically realistic risk attitudes
}
\author{Jean-Gabriel Lauzier\thanks{Dept.\ of Economics, Memorial University of Newfoundland,
Canada. \Letter~{\url{jlauzier@mun.ca}}}
\and Liyuan Lin\thanks{Dept.\ of Econometrics \& Business Statistics, Monash University,
Australia. \Letter~{\url{liyuan.lin@monash.edu}}}
\and Peter Wakker\thanks{Dept.\ of Economics, Erasmus University Rotterdam,
the Netherlands. \Letter~{\url{wakker@ese.eur.nl}}}
 \and Ruodu Wang\thanks{Dept.\ of Statistics and Actuarial Science, University of Waterloo,
Canada. \Letter~{\url{wang@uwaterloo.ca}}}}
\date{July 2026}



\begin{document}
\maketitle
\begin{abstract}
This paper examines optimal risk sharing. It brings in empirical realism, reckoning with the risk seeking found empirically. We provide results on Pareto optimality, competitive equilibria, utility frontiers, and the first and second theorems of welfare. Empirical studies have found prevailing risk seeking in several subdomains. Thus, as a first step to increase empirical realism, we allow for some
 risk-seeking agents, still assuming expected utility. Yet more empirical realism is obtained by generalizing expected utility and allowing agents' attitudes to combine risk aversion in some domains with risk seeking in others. We provide results and show directions for future research. \vspace{1ex}

\noindent {\it Keywords}: Risk sharing,
 Counter-monotonicity, Competitive equilibrium, Risk seeking,
 Rank-dependent utility\vspace{1ex}

\noindent {\it JEL codes}: D91, D81, D60, C60
\end{abstract}

\newpage
\section{Introduction}

\noindent The current literature on optimal risk sharing almost invariably assumes the normative expected utility, and the common universal risk aversion (\citealp{MWG95}, Chapter 10). However, many behavioral studies have argued for relaxing expected utility for empirical and often even for normative reasons. There have now been a few studies on risk sharing that assumed nonexpected utility, however, as yet from a normative perspective with still universal risk aversion assumed. Modern empirical studies have found much risk seeking though, so much that it deserves systematic study. For instance, in the loss domain, relevant in
 cost-sharing problems and in times of economic crises, risk seeking is even prevailing rather than risk aversion.\footnote{See
\citeauthor{{K98}}'s (\citeyear{{K98}})
 meta-analysis of 136 studies,
\citeauthor{{E96}}'s (\citeyear{{E96}}) review in finance,
\citeauthor{{LV19}}'s (\citeyear{{LV19}}) analysis of a
 world-wide representative student sample,
\cite{{MP97}} for competitive equilibria,
\cite{{ABK13}} for financial professional traders,
\cite{{LPC80}} for managers,
\cite{{O97}} for professional investors, and
\cite{{SZ25}} for  109,658 Chinese subjects.
Risk seeking for losses underlies the disposition effect in finance (\citealp{BH08};
\citealp{HIIW25}). \cite{N09} gave evolutionary arguments for risk seeking for losses.  \cite{CET13} wrote, on classical studies: ``Risk seekers seem to be forgotten.''}

It is obviously important to study optimal risk sharing for empirically realistic risk attitudes, and we introduce this approach. We extend classical results on Pareto optimality, competitive equilibria, and utility frontiers, and the first and second  welfare theorems. Our main tool is a new
counter-monotonic improvement theorem (Theorem \ref{thm:CT_improvement}).

Our first,
intermediate step is to relax universal risk aversion, still within the classical expected utility framework. Even if the majority of agents are risk averse, as for gains, then it still is not realistic to assume so for all agents. We thus allow for some agents to be risk seeking, so that they have convex utility functions. Besides increasing empirical realism, these results are also of theoretical interest. To achieve full empirical realism, we allow individual agents to be neither universally risk averse nor universally risk seeking, but have a combination of both, in agreement with the prevailing empirical findings. The latter requires abandoning expected utility, so that we extend our analysis to the currently most popular nonexpected utility model for risk, \citeauthor{{Q82}}'s (\citeyear{{Q82}})
 rank-dependent utility, agreeing with \citeauthor{{TK92}}'s (\citeyear{{TK92}}) prospect theory. Generalizing to nonexpected utility is, of course, important in its own right.

\cite{{M90}}, the first to show that risk aversion is equivalent to concave utility under expected utility, pleaded for universal risk aversion, dismissing risk seekers as follows:
\begin{center}
\smallskip
\it since experience shows that they are likely to engender a restless, feverish character, unsuited for steady work as well as for the higher and more solid pleasures of life.
\end{center}\smallskip
 The common classical belief holds that risk seeking is yet more implausible in stable markets. After all, as soon as there are two or more risk seekers, they can engage in mutual risky
 zero-sum games, using any randomization device. It will end only if some boundary restriction is reached, such as bankruptcy.
Although
 \cite{F53} predicted such behavior and Section \ref{sec:EU_gen} gives
 real-world examples where this actually happened, fundamental laws of economics have been established under the assumption of universal risk aversion (\citealp{AD54}), avoiding the phenomenon. This common classical assumption has guided economics for centuries and the supposed fate of risk seekers was rarely considered.

Since the 1980s, economics has become empirically oriented, and it then became widely understood that there is much risk seeking. \cite{B12} and \cite{ES15} provided formal results on extreme fates of risk seekers, where repeated individual decisions can lead to unstoppable gambling unless bankruptcy, assuming risk seeking as found in prospect theory. However, implementations of nonexpected utility theories in such dynamic settings are controversial (\citealp{M89}). We therefore focus on static decisions. Further, we will study risk sharing rather than individual decisions.

\cite{ABCN17} provided the first formal statement about risk seekers in optimal risk sharing. They considered the special case of infinite sequences with an extra assumption of ``reasonable strict optimism''. Building on that, \cite{ACGN18} established a competitive equilibrium for markets that have risk averters with total prior endowment large enough to clear the market. \cite{HZ22} provided generalizations to incomplete markets. We provide an exact statement of the classically believed fate of risk seekers in full generality, without any assumption about the remaining market (Theorems \ref{thm:EU} and
 \ref{thm:CE_only}). This and other results in our paper essentially invoke external randomizations, providing a new rationalization for using \citeauthor{{CS83}}'s (\citeyear{{CS83}}) sunspots. \cite{DS25}
 allowed general utility functions that may comprise risk seeking but, regarding deviations from expected utility, focused on risk averse deviations through quasiconvexity with respect to
 probability mixing.
 Then, in many situations,
 Pareto-efficient allocations only involve lotteries over
 two-dimensional outcomes.

\citeauthor{LM94}'s (\citeyear{LM94}) comonotonic improvement theorem is an important tool for classical risk sharing: optimality is only possible for
 risk-averse agents if their allocations are mutually comonotonic.
Comonotonicity means that the individual risks of the agents are maximally aligned, so that all mutual hedging possibilities have been used up. Our new
 counter-monotonic improvement theorem provides the analogous tool for risk seeking: optimality and stability are only possible for
 risk-seeking agents if their allocations are mutually
 counter-monotonic. Counter-monotonicity means that the individual risks of the agents are minimally aligned, so that all mutual leveraging possibilities have been used up. We use this theorem to derive our formal results.

Counter-comonotonicity, also called
anti-comonotonicity and first studied by \cite{D72},
  appears as an optimal structure in risk sharing under quantile models
(\citealp{ELW18,ELMW20}).   
Quantile models are important in finance
 (``Value-at-Risk''), but do not provide empirically realistic decision models. This paper focuses on the latter.
\cite{LLW23a} obtained a stochastic representation of
 counter-monotonicity, used in
 Section \ref{sec:countermon} to introduce jackpot allocations
(``winner-take-all'') and their duals, scapegoat allocations. These allocations formalize the above boundary restrictions for
risk-seeking agents, to some extent confirming Marshall's pessimistic view and the classical economic views on risk seeking, but showing what remains possible.
For expected utility agents, some of whom are risk seeking,
Theorems \ref{thm:Pareto}--\ref{thm:EU} in Section \ref{sec:EUPO} analyze Pareto optimality
and Theorems \ref{thm:1WT}--\ref{thm:CE} in Section \ref{sec:CEWT}
establish welfare and competitive equilibria under different conditions.

To achieve full empirical realism, another refinement is
 needed. All decision models discussed so far were normatively oriented and assumed for each agent either entire (for all lotteries) risk aversion or entire risk seeking. However, the prevailing empirical finding is that agents do not have such ``global''
domain-independent risk attitudes. Risk aversion is prevailing for gains of moderate and high probability, but risk seeking is prevailing for
 small-probability gains (\citealp{FE12}; \citealp{TK92}; \citealp{LV19}; \citealp{W10}), with these phenomena reflected for losses. Such probability dependence cannot be accommodated by expected utility, and generalized models are called for.\footnote{\citeauthor{{FS48}}'s (\citeyear{{FS48}})
 famous attempt to incorporate partial risk seeking into EU did not work empirically; see \citet[p.\ 227]{M18}.}
 This
probability dependence explains, for instance, the coexistence of gambling and insurance, a paradox for classical EU.
 There are many other empirical reasons why expected utility has to be abandoned when seeking empirical realism (\citealp{A53}; \citealp{S00}) 
.
 We will consider the most popular generalization of EU, \citeauthor{{Q82}}'s (\citeyear{{Q82}})
 rank-dependent utility (RDU), which for gains agrees with \citeauthor{{TK92}}'s (\citeyear{{TK92}}) prospect theory.\footnote{The
 quantile models mentioned above are special cases of RDU, and they are neither universally risk averse nor universally risk seeking. But they are not empirically realistic for decision making. \cite{BW23} also did not need universal risk aversion or seeking for their
first-order optimality conditions for risk sharing. However, their results give conditions in terms of preference functionals, rather than preferences, and only apply to interior solutions. The new jackpot and scapegoat allocations in this paper are not interior.}
 Unlike with the classical models considered as yet, for behavioral models there is a special role for one outcome, formalized as the reference outcome and scaled as outcome 0. Better, positive outcomes are gains and worse, negative outcomes are losses.
  
We achieve full empirical realism in the sense that all our assumptions about the
 risk-preference functional are based on majority findings in the empirical literature. We then show that there cannot be very general simple results, but the results depend on subtle interplays of various factors. Section \ref{sec:RDU} provides first results for several decision situations, but the topic is too complex to be entirely resolved in one paper. Thus we will, for simplicity, assume homogeneous preferences and gain-outcomes. There have been thousands of papers on risk sharing,
   %
  but to our best knowledge we are the first to consider risk preferences where all properties assumed are the prevailing ones found in empirical studies.
 Given the importance of risk sharing and empirical realism, we encourage future research in this direction. The main purpose of this paper is to initiate this new direction of research.

Proofs are in Appendices \ref{app:lemmas} and \ref{app:proof}.
Online Appendices \ref{app:CT}--\ref{onl.app.:scapegoats} provide additional results. Numbers such as O.1 refer to the online appendices.

\section{Model setting} \label{sec:preliminaries}

\noindent We consider a
 one-period economy, with uncertainty realized at the end of the period. By $(\Omega,\mathcal F,\p)$ we denote a probability space, with $\mathcal F$ the
$\sigma$-algebra of events, and by $\E$ we denote expectation under $\p$. Let $\X$ be a set of random variables 
on $\Omega$, referred to as
 \emph{payoffs}, which represent random monetary payoffs at the end of the period. We assume that $\X$ is a convex cone, i.e., it is closed under addition and positive scalar multiplication.
For instance, $\X$ may be the space $L^1$ of integrable random variables. Two random variables $X$ and $Y$ are almost surely equal if $\p(X=Y)=1$, and we identify them, omitting ``almost surely'' in equalities unless we want it emphasized.
 Let $\R_+=[0,\infty)$. We assume $n$ agents for some $n>0$ and write $[n]=\{1,\dots,n\}$.\footnote{\cite{A66}
 derived a competitive equilibrium in a riskless market for possibly nonconvex preferences. Crucial in his analysis is the idealized assumption that there is a continuum of agents, so that the nonconvexity of preference of each single agent does not impact the market. We prefer to avoid such an idealized effect and to investigate real effects of risk seeking.}
 Let
\begin{align*}
 \Delta_n=\left\{(\theta_1,\dots,\theta_n)\in  \mathbb R_+^n: \sum_{i=1}^n \theta_i=1  \right\}
 \end{align*}
be the standard simplex in $\R^n$. We write $\Delta_n(v)=v \Delta_n=\{v\boldsymbol \theta:\boldsymbol\theta \in\Delta_n\}$ for $v>0$.
Throughout, for a scalar $z$ and a vector $\mathbf y=(y_1,\dots,y_n)$, we write
$z\mathbf y$ for the vector $(zy_1,\dots,zy_n)$.
Denote by $\mathbf 0$ and $\mathbf 1$ the vectors $(0,\dots,0)$ and $(1,\dots,1)$ in $\R^n$. Thus, $y\mathbf 1=(y,\dots,y)$ and $\mathbf 1/y=(1/y,\dots,1/y)$ for $y>0$.
We use boldface capital letters for (possibly random)
 $n$-dimensional vectors. Throughout, $0/0 = 0$.

Our setting of risk sharing concerns $n$ agents who share a random variable
 $X\in \X$, the \emph{total payoff}, interpreted as the total random wealth to be allocated among the agents.
The set of all \emph{allocations} of $X\in \X$ is
 $$
\mathbb A_n(X)= \left\{(X_1,\dots,X_n)\in \X^n : \sum_{i=1}^n X_i =X\right\}.
 $$
 That is, an allocation of $X$ is a random vector whose components $X_i$ sum to $X$. This means the wealth is completely redistributed among the agents without any transfers outside the group.
Note that the choice of $\X$ is important in the definition of $\mathbb A_n$ as it restricts the possible allocations. An allocation is \emph{nontrivial} if it has at least two
non-zero components.

 For each agent $i$, her preference relation $\succsim_i$ is represented by a preference functional $\mathcal U_i$, that is, $$
X\succsim_i Y ~\iff ~ \mathcal U_i(X)\ge \mathcal U_i(Y).
$$
The value $\mathcal U_i(X)$ is the \emph{utility} of $X$ for agent $i$.
We assume that if $X$ and $Y$ are equally distributed, denoted by $X\laweq Y$, then $\mathcal U_i(X)=\mathcal U_i(Y)$. This means that $\mathcal U_i$ represents a decision model under risk, that all agents agree on the probability measure $\p$.
For instance, $\mathcal U_i$ may be an EU preference functional $
\mathcal U_i:
X\mapsto \E[u_i(X)] $  for some increasing function $u_i:\R\to \R$ (called a \emph{utility function});
such agents are called \emph{EU agents} or \emph{EU maximizers}.
Throughout, the terms ``increasing" and ``decreasing" are in the
 non-strict sense.
We will study Pareto optimality and
 Arrow--Debreu competitive equilibria in risk sharing, explained next.

\textbf{Pareto optimality.}
For two allocations $\mathbf X=(X_1,\dots,X_n) $ and $\mathbf Y=(Y_1,\dots,Y_n) $ in $ \mathbb A_n(X)$,
we say that $\mathbf X$ \emph{dominates}  $\mathbf Y$
 if $\mathcal U_i(X_i)\geq\mathcal U_i(Y_i)$ for all $i$, and the domination is \emph{strict} if at least one of the inequalities is strict.
The allocation $\mathbf X$ is \emph{Pareto optimal}
if it is not strictly dominated by any allocation in $\mathbb{A}_n(X)$.
Pareto optimality is closely connected to the optimization of a linear combination of the utilities.
For $\boldsymbol{\lambda}=(\lambda_1,\dots, \lambda_n)\in \R_+^n\setminus \{\mathbf 0\}$, an allocation $\mathbf X$ is
 \emph{$\blambda$-optimal}  in $\mathbb{A}_n(X)$  if $\sum_{i=1}^n\lambda_i \mathcal U_i(X_i)$ is maximized over $\mathbb A_n(X)$ at $\mathbf X$. Here, the vector $\boldsymbol \lambda$ is called a \emph{Negishi weight vector} (\citealp{N60}).
We use the term sum optimality for the case $\boldsymbol \lambda=\mathbf 1$. It is
 well known and straightforward to check that
 $\blambda$-optimality for $\blambda$ with positive components implies Pareto optimality. The converse holds under some additional conditions (\citealp[Chapter 16]{{MWG95}}). This paper will provide new results of this kind.

\textbf{Competitive equilibria.}  Suppose that each agent has an initial endowment, summarized by the vector $\boldsymbol \xi =(\xi_1,\dots,\xi_n)\in \mathbb A_n(X)$.
Consider the individual optimization problem for agent $i$:
  \begin{equation}
  \label{eq:indi}
  \mbox{maximize}~  \mathcal U_i(X_i) ~~~~ \mbox{over $X_i\in \mathcal X_i$} ~~~~\mbox{subject to} ~\E^Q [ X_i] \le \E^Q[ \xi _i], \end{equation}
where  $Q$ is a probability measure  representing a linear pricing mechanism,
\footnote{It is without loss of generality to assume that  $Q$ is absolutely continuous with respect to $\p$, because $Q$ can be arbitrary on events with $0$ probability under $\p$, which does not affect equilibria.} $\E^Q$ is the expected value under $Q$, and
$\E^Q [ X_i] \le \E^Q[ \xi _i]$ is the budget condition. For each $i$, $\X_i$ is the set of possible choices $X_i$ for agent $i$. In Section \ref{sec:CEWT}, we will choose $\X_i$ to be the set of all random variables $Y$ satisfying $0\le Y\le X$, where $X$ is assumed to be nonnegative.

The tuple $(X_1,\dots,X_n,Q) $
 is a \emph{competitive equilibrium} if  (a) individual optimality holds: $X_i$ solves \eqref{eq:indi} for each $i$;  and  (b) market clearance holds:
  $\sum_{i=1}^n X_i=X$.
 Then $(X_1,\dots,X_n)$ is an \emph{equilibrium allocation},
 and $Q$ is an \emph{equilibrium price}.
We then often do not mention the initial endowments, meaning that \eqref{eq:indi} is solved for some $\boldsymbol \xi$.

\textbf{Individual rationality.} For an initial endowment vector $\boldsymbol\xi \in \mathbb A_n(X)$, an allocation $\mathbf X\in \mathbb A_n(X)$ is \emph{individually rational} if it dominates $\boldsymbol\xi$. Then risk sharing is beneficial for each agent.

An equilibrium allocation is always individually rational because of individual optimality.
Pareto-optimal allocations and equilibrium allocations are intimately connected through the two fundamental theorems of welfare economics. Under certain conditions, the first welfare theorem states that every equilibrium allocation is Pareto optimal, and the second welfare theorem states that every
 Pareto-optimal allocation is an equilibrium allocation for some initial endowments and equilibrium price. Again, this paper will provide new results of this kind.

 \section{Counter-monotonic improvement}
\label{sec:countermon}

\noindent This section introduces our new tools for analyzing risk sharing. They concern mathematical relations between lotteries and make no assumptions about risk attitudes yet. Later sections will then use them to analyze risk attitudes and risk sharing. We assume $\X=L^1$.

\subsection{Convex order, risk aversion, and comonotonicity}

\noindent A random variable $X$ is \emph{smaller than} a random variable $Y$ in \textit{convex order}, denoted $X\le_{\rm{cx}} Y$, if $\E[\phi(X)] \le \E[\phi(Y )]$ for every convex function $\phi: \R \to \R$ provided that both expectations exist (\citealp{{R13}}; \citealp{SS07}).
That is, $X$ is less risky than $Y$ in the sense of \cite{RS70}.
If $X\le_{\rm{cx}} Y$, then $\E[X]=\E[Y]$, meaning that the relation of convex order compares random variables with the same mean; hence, $Y$ is also called a
 mean-preserving spread of $X$.
Preference functional $\mathcal U$ is \emph{(strongly) risk averse} (\citealp{RS70}) if
 $$X\le_{\rm cx} Y ~\Longrightarrow~ \mathcal  U(X)\ge \mathcal
 U(Y).$$
We usually omit ``strongly'' because we do not consider weak versions.
\emph{Strict risk aversion} holds if  $X<_{\rm {cx}} Y$ (meaning $X\le_{\rm cx} Y$ and $Y \not \le_{\rm cx}X$) implies $\mathcal U(X)>\mathcal U(Y)$.
Similarly, $$X\le_{\rm cx} Y ~\Longrightarrow~ \mathcal  U(X)\le \mathcal
 U(Y)$$
defines \emph{risk seeking}, and the strict version is analogous. Under EU, (strict) risk aversion is equivalent to (strictly) concave utility, and (strict) risk seeking is equivalent to (strictly) convex utility.

Two random variables $X,Y$ are \emph{comonotonic} if
 \begin{align*}
    (X(\omega)-X(\omega'))(Y(\omega)-Y(\omega')) \ge 0~~\mbox{for $(\p\times
 \p)$-almost every~}(\omega,\omega')\in \Omega^2.
 \end{align*}
An allocation $(X_1,\dots, X_n)\in \mathbb A_n(X)$ is \emph{comonotonic} if every pair of its components is. The random variables $X_1,\dots,X_n$ are comonotonic if and only if there exists a random variable $Z$ such that each $X_i$ is an increasing transformation of $Z$. We can take $Z=\sum_{i=1}^n X_i$
(\citealp{D94}, Proposition 4.5).

The comonotonic improvement theorem (\citealp[Theorem 10.50]{R13})
states that, for every $X\in \X=L^1$ and every $(X_1,\dots, X_n)\in \mathbb A_n(X)$, there exists a comonotonic allocation $(Y_1,\dots, Y_n)\in \mathbb A_n(X)$ such that
$Y_i \le_{\rm{cx}} X_i$ for every $i$.
 Consequently, when all agents are strictly risk averse,
Pareto-optimal allocations must be comonotonic (\citealp{CDG12}). A similar result holds under a different label: in an exchange economy with aggregate risk, the individual consumption of strictly
 risk-averse EU maximizers is increasing in aggregate wealth.

\subsection{Counter-monotonicity and jackpot allocations}
\label{sec:ct.1}

\noindent When agents are risk averse, comonotonicity, an extreme type of positive dependence, will appear. When, to the contrary, agents are risk seeking, a form of negative dependence will appear, defined next. Two random variables $X,Y$ are
\emph{counter-monotonic} if   $X,-Y$ are comonotonic.
 An allocation $(X_1,\dots, X_n)\in \mathbb A_n(X)$ is
 \emph{counter-monotonic} if every pair of its components is. Unlike comonotonicity, which allows for arbitrary marginal distributions,
 counter-monotonicity in dimension $n\ge3 $ puts strong restrictions on the marginal distributions (\citealp{{D72}}).  Appendix \ref{app:CT} collects technical background on
 counter-monotonicity.

Let $\Pi_n$ be the set of all
 $n$-compositions (ordered partitions) of $\Omega$, that is, $$\Pi_n=\left\{(A_1,\dots,A_n)\in \mathcal F^n: \bigcup_{i\in [n]} A_i=\Omega \mbox{~and~$A_1,\dots,A_n$ are disjoint}\right\}.$$
The indicator function $\id_A$  for an event $A$ is defined by $\id_A(\omega)=1$ if $\omega\in A$ and $\id_{A}(\omega)=0$ otherwise. \citet[Theorem 1]{LLW23a} obtained
 a stochastic representation of
 counter-monotonic random vectors $(X_1,\dots,X_n)$ with at least three
 non-constant components,
 and they have the form $X_i = Y \id_{A_i} +m_i$ for some $m_1,\dots,m_n\in \R$, $(A_1,\dots,A_n)\in \Pi_n$, and $Y\ge 0$ or $Y\le 0$;
 a precise statement is in Proposition \ref{prop:PCT} in Appendix \ref{app:CT}. It formalizes the
 ``winner-take-all" or ``loser-lose-all" structure of
counter-monotonic allocations.
The introduction discussed the special case where in every state
 all-but-one get ruined. It can also happen that
 all-but-one achieve a best outcome.
 We are particularly interested in the special case
\begin{align}\label{eq:jackpot}
X_i=X\id_{A_i}~\mbox{for all $i\in [n]$, with $(A_1,\dots,A_n)\in \Pi_n$},
\end{align}
where either $X\ge 0$ or $X\le0$.
\begin{definition}
An allocation $(X_1,\dots,X_n)$ is a \emph{jackpot allocation}
if \eqref{eq:jackpot} holds
for some $X\ge 0$, and it is
 a \emph{scapegoat allocation} if \eqref{eq:jackpot} holds
for some  $X\le 0$.
\end{definition}
\noindent Thus, the 0 outcome serves as a maximal or minimal outcome, resulting for all but one agent. It does not have to be a status quo or reference point, and, for instance, outcomes denoted as positive numbers may still refer to losses, as we will discuss later. We formulate the main results of this paper for the case of a minimal outcome, leading to jackpot allocations. For brevity, similar results for maximal outcomes with scapegoat allocations, readily following as dual reformulations, are given in Appendix \ref{onl.app.:scapegoats}. As explained above, all
 counter-monotonic random vectors with at least three
non-constant components can be obtained by adding a
deterministic vector to a jackpot or scapegoat allocation.

Figure \ref{fig:1} displays a jackpot allocation and a comonotonic allocation.
In a jackpot allocation, the random vector $(\id_{A_1},\dots,\id_{A_n})$ can be arbitrarily correlated with $X$. For instance, it may be independent of $X$ or may be fully determined by $X$ (as in Figure \ref{fig:1}).

\begin{figure}[t]
\caption{An illustration of a comonotonic allocation $(X/2,X/2)$  of $X$ and a jackpot allocation $(X\id_{A} ,X\id_{A^c})$ of $X$. In this example, $A$ coincides with the event $\{X\ge 1\}$.}
\label{fig:1}
\begin{center}
\begin{subfigure}[b]{0.23\textwidth}
\centering
\begin{tikzpicture}[scale=2.4]
\draw[->] (0,0) --(1.05,0) node[below] {$\omega$};
 \draw[->] (0,-0.2) --(0,0.8) node[above] {\footnotesize payoff};
\draw[very thick, blue] (0,0.5) sin (0.25,0.75) cos (0.5,0.5)  sin (0.75,0.25) cos (1,0.5) ;
\draw[dotted] (0,0.5) -- (1,0.5);
\node[below] at (-0.06,0.09) {\scriptsize $0$};
\node[below] at (-0.06,0.59) {\scriptsize $1$};
\end{tikzpicture}
\caption{$X(\omega)$}
\end{subfigure}
\begin{subfigure}[b]{0.23\textwidth}
\centering
\begin{tikzpicture}[scale=2.4]
\draw[->] (0,0) --(1.05,0) node[below] {$\omega$};
\draw[->] (0,-0.2) --(0,0.8) node[above] {\footnotesize payoff};
\draw[very thick, blue] (0,0.5/2) sin (0.25,0.75/2) cos (0.5,0.5/2)  sin (0.75,0.25/2) cos (1,0.5/2) ;
\draw[dotted] (0,0.5) -- (1,0.5);
\node[below] at (-0.06,0.09) {\scriptsize $0$};
\node[below] at (-0.06,0.59) {\scriptsize $1$};
\end{tikzpicture}
\caption{$X(\omega)/2$}
\end{subfigure}
\begin{subfigure}[b]{0.23\textwidth}
\centering
\begin{tikzpicture}[scale=2.4]
\draw[->] (0,0) --(1.05,0) node[below] {$\omega$};
\draw[->] (0,-0.2) --(0,0.8) node[above] {\footnotesize payoff};
\draw[very thick, blue] (0,0.5) sin (0.25,0.75) cos (0.5,0.5) ;
\draw[very thick, blue] (0.5,0)--(1,0);
\draw[dashed] (0.5,0.5) -- (0.5,0);
\draw[dotted] (0,0.5) -- (1,0.5);
\node[below] at (-0.06,0.09) {\scriptsize $0$};
\node[below] at (-0.06,0.59) {\scriptsize $1$};
\node[below] at (0.25,0) {$A$};
\node[below] at (0.75,0) {$A^c$};
\end{tikzpicture}
\caption{$(X\id_A)(\omega)$}
\end{subfigure}
\begin{subfigure}[b]{0.23\textwidth}
\centering
\begin{tikzpicture}[scale=2.4]
\draw[->] (0,0) --(1.05,0) node[below] {$\omega$};
\draw[->] (0,-0.2) --(0,0.8) node[above] {\footnotesize payoff};
\draw[very thick, blue] (0,0)--(0.5,0);
\draw[very thick,  blue] (0.5,0.5)  sin (0.75,0.25) cos (1,0.5) ;
\draw[dashed] (0.5,0.5) -- (0.5,0);
\draw[dotted] (0,0.5) -- (1,0.5);
\node[below] at (-0.06,0.09) {\scriptsize $0$};
\node[below] at (-0.06,0.59) {\scriptsize $1$};
\node[below] at (0.25,0) {$A$};
\node[below] at (0.75,0) {$A^c$};
\end{tikzpicture}
\caption{$(X \id_{A^c})(\omega)$}
\end{subfigure}
\end{center}
\end{figure}
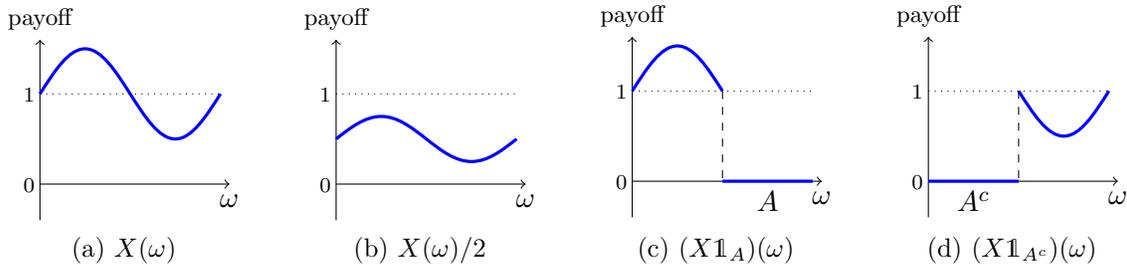

We call $(\id_{A_1},\dots,\id_{A_n})$ in  \eqref{eq:jackpot} a \emph{jackpot vector}, and denote by $\mathbb J_n$ the \emph{set of all jackpot vectors} in $\R^n$, that is,
$$
\mathbb J_n=\{(\id_{A_1},\dots,\id_{A_n}) :(A_1,\dots,A_n)\in \Pi_n\}.
$$
The set $\mathbb J_n$ is precisely the set of all random vectors with a generalized Bernoulli distribution (also known as a multinomial distribution with $1$ trial).
With this,
any jackpot allocation or scapegoat allocation has the form $X\mathbf J$ for some $\mathbf J \in \mathbb J_n$. Both types of allocations are often observed in daily life. For instance, the simple lottery ticket (only one winner) is
a jackpot allocation, and
the ``designated driver" of a party is a scapegoat allocation.\footnote{The
 ``designated driver" is the randomly selected driver in a party who cannot drink.}

An equivalent condition for a random vector $(X_1,\dots, X_n)$  to be a jackpot allocation is
\begin{align}\label{eq:jackpot2}
X_i\ge 0~\mbox{and}~ X_iX_j=0 \mbox{ for all $i\ne j$.}
\end{align}
A \emph{probabilistic mixture} of two random vectors with   distributions $F$ and $G$
is a random vector  distributed as
 $\lambda F + (1-\lambda)G $
for some $\lambda \in [0,1]$. Using  \eqref{eq:jackpot2}, we arrive at the following result.
\begin{proposition} \label{pr:3} A probabilistic mixture of two jackpot allocations is a jackpot allocation.\end{proposition}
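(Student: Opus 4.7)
The plan is to use the equivalent characterization in \eqref{eq:jackpot2}, which captures the jackpot property purely through the joint distribution of the components. Since \eqref{eq:jackpot2} depends only on the joint law (almost-sure nonnegativity and almost-sure vanishing of products), it suffices to exhibit \emph{any} random vector with the prescribed mixture distribution satisfying \eqref{eq:jackpot2}; every other random vector with that same law is then automatically a jackpot allocation.

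First I would build the mixture concretely. Let $\mathbf X=(X_1,\dots,X_n)$ and $\mathbf Y=(Y_1,\dots,Y_n)$ be the two jackpot allocations with joint distributions $F$ and $G$. Passing to an enlarged probability space if necessary, I would take mutually independent copies of $\mathbf X$ and $\mathbf Y$ together with an independent Bernoulli selector $\xi$ with $\p(\xi=1)=\lambda$, and set
\[
\mathbf Z=\xi\mathbf X+(1-\xi)\mathbf Y.
\]
By the standard construction of mixtures, $\mathbf Z$ has distribution $\lambda F+(1-\lambda)G$.

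Next I would verify \eqref{eq:jackpot2} pointwise for $\mathbf Z$. Nonnegativity $Z_i\ge0$ is immediate from $X_i,Y_i\ge0$ and $\xi,1-\xi\in\{0,1\}$. For $i\ne j$, expanding $Z_iZ_j$ and using the Bernoulli identities $\xi^2=\xi$, $(1-\xi)^2=1-\xi$, and $\xi(1-\xi)=0$ collapses the cross terms, giving
\[
Z_iZ_j=\xi X_iX_j+(1-\xi)Y_iY_j=0,
\]
since $X_iX_j=0$ and $Y_iY_j=0$ by the jackpot property of $\mathbf X$ and $\mathbf Y$. Thus $\mathbf Z$ satisfies \eqref{eq:jackpot2}, which by the preceding observation finishes the proof.

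There is essentially no obstacle here; the only thing worth pausing on is the distributional nature of the jackpot property, which legitimizes the freedom to pick any convenient joint coupling of $\mathbf X$, $\mathbf Y$, and the selector $\xi$. The Bernoulli idempotencies $\xi^2=\xi$ and $\xi(1-\xi)=0$ are what make the cross terms in $Z_iZ_j$ vanish and are the single algebraic fact driving the argument.
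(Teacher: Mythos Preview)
Your proof is correct and follows exactly the approach the paper takes: the paper's own proof simply states that the conditions in \eqref{eq:jackpot2} are preserved under probabilistic mixing, and you have spelled out that observation in full detail via the Bernoulli-selector coupling and the idempotency identities. The one point you make explicit that the paper leaves implicit is that \eqref{eq:jackpot2} is a purely distributional condition, which justifies checking it on any convenient realization of the mixture.
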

\noindent 
Proposition \ref{pr:3}  will be used to justify that  the utility possibility set of all jackpot allocations is a convex set for EU agents.
 For two general
 counter-monotonic allocations other than jackpot and scapegoat allocations,
their mixture is not necessarily
 counter-monotonic.



\subsection{Counter-monotonic improvement theorem}
\label{sec:ct.improve}

\noindent A simple assumption of external randomization is important in the subsequent analysis.
 \begin{myassump}{ER}\label{ass:ER}
There exists a standard uniform (i.e., uniformly distributed on $[0,1]$)  random variable $U$ on $(\Omega,\mathcal F,\p)$  independent of $X$.
\end{myassump}
\noindent Assumption \ref{ass:ER} implies that the probability space $(\Omega,\mathcal F,\p)$ is \emph{atomless}---that is, there exists a standard uniform random variable defined on this space. Assumption \ref{ass:ER} is easy to satisfy in any practical situation, as an independent uniform random variable can be achieved by  a sequence of dice rolls,   spins of roulette wheels, or sunspots. The assumption
has been used before to provide divisibility if goods themselves are indivisible
(\citeauthor{{KLP02}},  \citeyear{{KLP02}}). It did not yet receive attention in the risk sharing literature with divisible goods because universal risk aversion was commonly assumed there and then no agent is interested in adding further randomness. As soon as there are two or more
 risk-seeking agents, though, they will want to involve randomization devices and, thus, we provide a new rational for using sunspots.
Randomization has indeed been widely employed when useful, for instance to achieve efficiency in allocation problems, similar to the jackpot allocations that we derive (\citealp{BCKM13})
and for behavioral stopping rules (\citeauthor{{HHOZ17}},  \citeyear{{HHOZ17}}). Even in individual decision making, people often prefer to invoke randomizations
(\citeauthor{{AO17}},  \citeyear{{AO17}}).
 \cite{GKKL21} found that
 risk-seeking subjects in an experiment indeed preferred to resort to gambling rather than sharing risks,
 confirming the empirical realism of randomization.
The following simple example illustrates the use of randomization.

\begin{example}\label{ex:simple}
 Suppose that the total wealth is a constant $X=1$, initially equally endowed among $n$ strictly
 risk-seeking EU agents. Without external randomization, under individual rationality, no redistribution is possible. The allocation $\mathbf 1/n$ cannot be
 Pareto-improved. But with external randomization, the jackpot allocation that assigns all $X=1$ to agent $i$ if, say, side $i$ comes up of a fair $n$-sided die, with 0 left for the others, is a strict Pareto improvement. All agents are better off because they are strictly risk seeking. In many arguments in this paper, the random variable $U$ of Assumption \ref{ass:ER} is similarly used to generate a jackpot vector.
\end{example}

We are now ready to present the main technical result in this section.

\begin{theorem}[Counter-monotonic improvement]\label{thm:CT_improvement}
Assume that $X_1,\dots,X_n\in L^1$ are nonnegative, $X=\sum_{i=1}^n X_i$, and Assumption \ref{ass:ER} holds.
Then there exists $(Y_1,\dots,Y_n)\in \mathbb A_n(X)$ such that
\begin{enumerate}[(i)]
\item $(Y_1,\dots,Y_n)$ is counter-monotonic;
\item $Y_i\ge_{\rm cx} X_i$ for all $ i \in [n]$;
\item $ Y_1,\dots,Y_n $ are nonnegative.
\end{enumerate}
Moreover, $(Y_1,\dots,Y_n)$ can be chosen as a jackpot allocation of $X$.
\end{theorem}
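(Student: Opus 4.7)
The plan is to construct the desired allocation explicitly as a jackpot allocation of $X$, with ``winning probabilities'' equal to the normalized conditional expectations $\E[X_i\mid X]/X$, generated from the independent uniform $U$ of Assumption \ref{ass:ER}.

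First I would dispense with the trivial case $\E[X]=0$, in which $X=X_i=0$ a.s.\ for every $i$; from now on assume $\P(X>0)>0$. For $x>0$, put $\pi_i(x):=\E[X_i\mid X=x]/x$, taking a regular version of the conditional expectation, and set $\pi_i(0):=1/n$. Since $0\le X_i\le X$ a.s.\ and $\sum_{j=1}^n X_j=X$, we have $\pi_i(x)\ge 0$ and $\sum_{i=1}^n\pi_i(x)=1$ for a.e.\ $x$. Define a measurable ordered partition of $[0,1)$ by $I_i(x):=\bigl[\sum_{j<i}\pi_j(x),\,\sum_{j\le i}\pi_j(x)\bigr)$, let $A_i:=\{U\in I_i(X)\}$, and set $Y_i:=X\,\id_{A_i}$. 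Then $(A_1,\dots,A_n)\in\Pi_n$ and $(\id_{A_1},\dots,\id_{A_n})\in\mathbb J_n$, so $(Y_1,\dots,Y_n)$ is a jackpot allocation of $X$. This already delivers (i) (counter-monotonicity of jackpot allocations), (iii) (nonnegativity), and $\sum_{i=1}^n Y_i=X$ with each $Y_i\in L^1$.

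It remains to verify (ii). Because Assumption \ref{ass:ER} makes $U$ independent of $X$, for every $i$ and every convex $\phi:\R\to\R$,
\begin{equation*}
\E[\phi(Y_i)\mid X]=\pi_i(X)\,\phi(X)+(1-\pi_i(X))\,\phi(0).
\end{equation*}
Conditional on $X=x>0$, the law of $X_i$ is supported in $[0,x]$ with mean $x\pi_i(x)$. Convexity of $\phi$ on $[0,x]$ gives $\phi(z)\le(z/x)\phi(x)+(1-z/x)\phi(0)$ for $z\in[0,x]$, so taking conditional expectation yields
\begin{equation*}
\E[\phi(X_i)\mid X=x]\le\pi_i(x)\,\phi(x)+(1-\pi_i(x))\,\phi(0)=\E[\phi(Y_i)\mid X=x].
\end{equation*}
Integrating over $X$ gives $\E[\phi(X_i)]\le\E[\phi(Y_i)]$, which is $Y_i\ge_{\rm cx}X_i$.

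The subtle point to watch is what one may condition on. Assumption \ref{ass:ER} only provides $U\perp X$, not $U\perp(X_1,\dots,X_n)$, so the construction must depend on the $X_i$ only through the aggregate $X$, and the convex-order comparison must be performed conditionally on $X$. This is why the ``extremal'' two-point-distribution argument on $[0,X]$ with matched conditional mean $\E[X_i\mid X]$ is exactly what is needed: it yields a pointwise-in-$X$ inequality that survives taking outer expectation. The only routine verification is measurability of $x\mapsto(\pi_1(x),\dots,\pi_n(x))$ and of each $A_i$, guaranteed by regular conditional expectations; the convention $0/0=0$ handles the event $\{X=0\}$.
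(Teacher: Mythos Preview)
Your proof is correct and takes a genuinely different, more direct route than the paper. The paper first proves the result under the stronger Assumption~\ref{ass:ERS} (a uniform $U$ independent of the full vector $(X_1,\dots,X_n)$): it sets $A_i=\{\sum_{j<i}X_j/X\le U<\sum_{j\le i}X_j/X\}$ and obtains the martingale coupling $\E[Y_i\mid X_1,\dots,X_n]=X_i$, whence $Y_i\ge_{\rm cx}X_i$ by Jensen. It then passes from Assumption~\ref{ass:ERS} to the weaker Assumption~\ref{ass:ER} by a separate distributional transfer (Lemmas~\ref{lem:existence}--\ref{lem:construct}): one replaces $(X_1,\dots,X_n)$ by an identically distributed copy that \emph{does} admit an independent uniform. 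You bypass this two-step reduction entirely by making the winning probabilities depend on the $X_i$ only through $\E[X_i\mid X]$, so that independence of $U$ from $X$ alone suffices; the convex-order comparison is then carried out conditionally on $X$ via the elementary fact that the two-point law on $\{0,x\}$ is convex-order-maximal among laws on $[0,x]$ with a prescribed mean. Your argument is more self-contained and avoids the auxiliary lemmas; the paper's approach, on the other hand, yields the stronger Strassen-type coupling $\E[Y_i\mid X_1,\dots,X_n]=X_i$ under Assumption~\ref{ass:ERS}, which may be of independent interest.
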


\noindent Theorem \ref{thm:CT_improvement} is the
counter-monotonic analog of \citeauthor{LM94}'s (\citeyear{LM94}) comonotonicity improvement theorem.
 When  applied to
 risk-seeking agents (not necessarily EU), it has a clear implication: $Y_1,\dots,Y_n$ will be preferred over the original payoffs $X_1,\dots,X_n$.
 Therefore, one may anticipate for strictly
 risk-seeking agents, constrained to the set of nonnegative random variables, that any
Pareto-optimal allocation or equilibrium allocation, if it exists, must be a jackpot allocation; this is formalized for EU agents in Theorem \ref{thm:Pareto}.
 As another immediate consequence, for any vector of initial endowments, a jackpot allocation obtained in Theorem \ref{thm:CT_improvement} is individually rational for any
 risk-seeking agent.

 \noindent \textbf{Proof sketch of Theorem \ref{thm:CT_improvement}.}
Here we provide a proof under a condition stronger than Assumption~\ref{ass:ER}, which allows for an explicit construction of $(Y_1,\dots,Y_n)$.
 \begin{myassump}{ER*}\label{ass:ERS}
There exists a standard uniform random variable $U$ on $(\Omega,\mathcal F,\p)$  independent of $(X_1,\dots,X_n)$.
\end{myassump}
\noindent Write $Z_i= ({\sum_{j=1}^i X_j}/{X} ) \id_{\{X>0\}}$ for $i\in [n]$ and $Z_0=0$.
Let $A_i =\{ Z_{i-1}\le U  < Z_i\}$ for $i \in [n]$, which are disjoint events and satisfy $\p(\bigcup_{i=1}^n A_i)=\p(X>0)$.
Let
$
Y_i =X\id_{A_i}  ~\mbox{for $i\in [n]$}.
$
Note that $\sum_{i=1}^n Y_i=X\id_{\{X>0\}} =X.$
Clearly, $(Y_1,\dots,Y_n)$ is
 counter-monotonic and it is a jackpot allocation of $X$.
For $i \in [n]$, we have
\begin{align*}
 \E\left[  Y_i \mid X_1,\dots,X_n\right]
& = \E\left[X \id_{\{Z_{i-1}\le U  < Z_i\}}\mid X_1,\dots,X_n \right] \\
&  = \E\left[X  (Z_i-Z_{i-1})\mid X_1,\dots,X_n \right]  =   X \frac{X_i}{X}\id_{\{X>0\}} = X_i.
\end{align*}
Hence, Jensen's inequality yields  $X_i\le_{\rm cx} Y_i$.
This proves the statement of Theorem \ref{thm:CT_improvement} under Assumption \ref{ass:ERS}.
To show the result under Assumption \ref{ass:ER}, we will use a few technical lemmas in Appendix \ref{app:lemmas}. The weaker Assumption \ref{ass:ER} is more desirable in the study of risk sharing, as it requires randomization only for the aggregate payoff $X$, rather than for each allocation.  The difference between Assumptions \ref{ass:ER} and  \ref{ass:ERS} is subtle, which affects the applicability of some results. This issue is discussed in detail in Appendix \ref{app:ER}.

\begin{example}
\label{ex:th1fails}
We give an example illustrating how the conclusions in Theorem \ref{thm:CT_improvement} fail without the external randomization in Assumption \ref{ass:ER}.
Consider $n=3$ and a finite space $\Omega=\{\omega_1,\dots,\omega_4\}$ of 4 states with equal probability.  Let $X_i=3 \id_{\{\omega_i\}}+ \id_{\{\omega_4\}}$ for $i\in [3]$ and $X=X_1+X_2+X_3$ (thus $X=3$).  Suppose that  there exists $(Y_1,Y_2,Y_3)\in \mathbb A_3(X)$ satisfying (i)--(iii) in Theorem \ref{thm:CT_improvement}. Then it has the form $Y_i = a \id_{A_i} +m_i$ for some $a,m_1,m_2,m_3\in \R$, $(A_1,A_2,A_3)\in \Pi_3$   (see Proposition \ref{prop:PCT}).
To fulfill the conditions  $Y_i\ge_{\rm cx} X_i$ and $0\le Y_i\le 3$, each $Y_i$ must only take the values $0$ and $3$. Hence, the mean of $Y_i$ is  a multiple of $3/4$, violating $\E[Y_i]=\E[X_i]=1$.
Therefore, such $(Y_1,Y_2,Y_3)$ does not exist.
Under Assumption \ref{ass:ER} (here $\Omega$ cannot be finite),
one can take $\p(A_i)=1/3$ to satisfy all desired conditions.
\end{example}


The assumption that $X_1, \dots, X_n$ are nonnegative implies that there is a minimum outcome $0$ at which
 risk seekers are  prevented from further
 zero-sum gambling, and this is necessary to obtain the jackpot allocations of Theorem \ref{thm:CT_improvement}. A similar statement can be made for scapegoat allocations. Then an assumption
 $X_i\leq 0$ for all $i$ implies that there is a maximum outcome $0$ at which
risk seekers are prevented from further
 zero-sum gambling, and this is necessary to obtain scapegoat allocations (Theorem \ref{thm:dualCT_improvement}). Because the convex order is invariant under constant shifts, Theorem \ref{thm:CT_improvement} and the above statement on negative payoffs immediately imply the following result, which is presented in the same form as its comonotonic counterpart.
\begin{proposition}
\label{coro:ct_improve}
 Suppose that $X_1,\dots,X_n\in L^1$ are all bounded from above or all bounded from below, $X=\sum_{i=1}^n X_i$, and that Assumption \ref{ass:ER} holds. Then there exists a
 counter-monotonic allocation $(Y_1,\dots,Y_n)\in \mathbb A_n(X)$ such that  $Y_i\ge_{\rm cx} X_i$ for all $i$.
\end{proposition}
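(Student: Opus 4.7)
\textbf{Plan of proof for Proposition \ref{coro:ct_improve}.} The strategy is to reduce to Theorem \ref{thm:CT_improvement} by a constant shift (and, in the bounded-above case, possibly by an additional sign flip), exploiting two elementary invariances: (a) convex order is preserved under the addition of a common constant, since $\phi(\cdot+c)$ is convex whenever $\phi$ is; and (b) counter-monotonicity of a pair $(X,Y)$, characterized by $(X(\omega)-X(\omega'))(Y(\omega)-Y(\omega'))\le 0$ almost surely, is preserved under translating each coordinate by its own constant. In addition, Assumption \ref{ass:ER} survives such deterministic transformations because any uniform $U$ that is independent of $X$ is automatically independent of $X-c$ for a constant $c$, and likewise of $-X$.

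First, suppose every $X_i$ is bounded from below, say $X_i\ge c_i$ with $c_i\in\R$. Set $\tilde X_i=X_i-c_i\ge 0$ and $\tilde X=\sum_{i=1}^n \tilde X_i=X-\sum_{i=1}^n c_i$. Assumption \ref{ass:ER} applied to $X$ transfers to $\tilde X$ (same $U$), so Theorem \ref{thm:CT_improvement} delivers a jackpot allocation $(\tilde Y_1,\dots,\tilde Y_n)\in \mathbb A_n(\tilde X)$ that is counter-monotonic, nonnegative, and satisfies $\tilde Y_i\ge_{\rm cx}\tilde X_i$ for each $i$. Define $Y_i=\tilde Y_i+c_i$. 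Then $\sum_{i=1}^n Y_i=\tilde X+\sum_{i=1}^n c_i=X$, the vector $(Y_1,\dots,Y_n)$ remains counter-monotonic by invariance (a), and $Y_i\ge_{\rm cx} X_i$ by invariance (b).

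Next, suppose every $X_i$ is bounded from above, say $X_i\le c_i$. One can either invoke the scapegoat analogue of Theorem \ref{thm:CT_improvement} (obtained by negating payoffs, as the paper's discussion right before the proposition points out) or reduce to the previous case directly. Concretely, set $\tilde X_i=c_i-X_i\ge 0$ and $\tilde X=\sum_{i=1}^n c_i - X$; apply Theorem \ref{thm:CT_improvement} to get a counter-monotonic nonnegative jackpot allocation $(\tilde Y_1,\dots,\tilde Y_n)\in \mathbb A_n(\tilde X)$ with $\tilde Y_i\ge_{\rm cx}\tilde X_i$; and set $Y_i=c_i-\tilde Y_i$. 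Summing gives $\sum_{i=1}^n Y_i=X$. Counter-monotonicity is preserved because negating every coordinate flips each pairwise difference's sign but keeps the product unchanged. Convex order is preserved under negation (if $\phi$ is convex then so is $\phi(-\cdot)$), so $-\tilde Y_i\ge_{\rm cx}-\tilde X_i$, and then invariance (a) yields $Y_i=c_i-\tilde Y_i\ge_{\rm cx} c_i-\tilde X_i=X_i$.

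This covers both cases and proves the result. There is no substantive obstacle; the only point that needs a moment's care is verifying that counter-monotonicity behaves well under coordinate-wise affine transformations $y\mapsto \alpha_i y+\beta_i$ with a common sign of $\alpha_i$, which follows immediately from the definition. The resulting $(Y_1,\dots,Y_n)$ is a translate of a jackpot (respectively, scapegoat) allocation, matching the structural description of counter-monotonic vectors given in Proposition \ref{prop:PCT}.
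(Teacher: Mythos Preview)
Your proof is correct and follows essentially the same route as the paper: shift each $X_i$ by a constant to make it nonnegative (or, in the bounded-above case, negate and shift), apply Theorem~\ref{thm:CT_improvement}, and then undo the transformation using the invariance of convex order and counter-monotonicity under coordinatewise affine maps. One tiny slip: in your first case you cite invariance~(a) for counter-monotonicity and~(b) for convex order, but your labels are the other way around.
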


Whereas results in this section are formulated on $L^1$ for generality,
the next few sections will focus on bounded random variables in the analysis of risk sharing problems.

\section{Pareto-optimal allocations for EU agents} \label{sec:EUPO}

\noindent This section analyzes Pareto optimality under EU without the restriction of universal risk aversion.

\subsection{Setting}

\noindent To include
 risk-seeking agents, it is warranted to specify bounds for the set of feasible allocations. Otherwise, these agents can keep on improving through continued mutual
 zero-sum gambles.
In the main text we focus on the case of lower bounds, leading to jackpot allocations. Results for upper bounds, leading to scapegoat allocations, readily follow as dual reformulations. For brevity, they are given in Appendix \ref{onl.app.:scapegoats}.

For convenience, we will denote the lower bound by 0. It can be interpreted as ruin, or as a
 no-short selling/borrowing constraint. We emphasize that, until Section
\ref{sec:RDU}, the 0 outcome does not refer to the status quo or to a reference point. It can designate a serious loss, such as ruin, and outcomes denoted as positive real numbers can then still designate losses.

Because risk seeking, the topic of this section, is the prevailing empirical finding for losses (as in times of economic decay; see the footnote in the introduction), the results in this section and Appendix \ref{onl.app.:scapegoats} are more relevant for losses than classical results in the literature. The latter invariably assumed universal risk aversion. Cases of merely loss outcomes occur when there is a loss upper bound for the outcomes, as in
 cost-sharing problems. For convenience, the main text considers the traditional outcome domain $\R_+$, where there is a lower bound. The dual results for upper rather than lower bounds are in Appendix \ref{onl.app.:scapegoats}. The flexible interpretation of our domain $\R_+$ (until Section \ref{sec:RDU}), with 0 a minimal outcome that may be a gain but also a loss, and dual results in Appendix \ref{onl.app.:scapegoats}, should be understood.

 Our main setting is described by the following assumption, followed by special cases.

\begin{myassump}{EU}\label{ass:EU}
 Each agent maximizes  EU with a strictly increasing continuous utility function $u_i:\R_+ \rightarrow \R_+$ with $u_i(0)=0$.
The domain of allocations is $\X=L^\infty_+$, where $L^\infty_+$ is the set of all nonnegative bounded random variables. The total payoff $X\in \X$ satisfies $\p(X>0)>0$.
\end{myassump}

\noindent Here, $\p(X>0)>0$ avoids triviality and $u_i(0)=0$ is a normalization.


 For EU agents,
the case of universal risk aversion is well understood in the classic risk sharing literature (e.g., \citealp{MWG95}).  We will focus on the following few settings, besides the classic case of universal risk aversion.
\begin{myassump}{EURS}\label{ass:EURS}
On top of Assumption \ref{ass:EU}, all agents are strictly risk seeking.
\end{myassump}

\begin{myassump}{EUM}\label{ass:EUM}
On top of Assumption \ref{ass:EU},  agents in a subgroup
$S\subseteq[n]$ are strictly risk seeking and the others in  $T=[n]\setminus S$ are strictly risk averse, with $S,T$ nonempty.
\end{myassump}

\noindent We also consider the following subcase of Assumption \ref{ass:EURS}.
\begin{myassump}{H-EURS}\label{ass:HO}
On top of Assumption \ref{ass:EURS}, agents are {homogeneous}; that is, $u_1 = \dots = u_n = u$ holds.
\end{myassump}

In informal discussions we often take
 risk seeking and risk aversion strict, as in the above assumptions.
 For the total wealth $X$, the \emph{utility possibility set} (\emph{UPS}) is the set of utility vectors $ (\E[u_1(X_1)],\dots, \E[u_n(X_n)])$
for $(X_1,\dots, X_n)\in \mathbb A_n(X)$, denoted by $\mathrm{UPS}(X)$.
The \emph{utility possibility frontier} (\emph{UPF}) is the subset of utility vectors achieved by
 Pareto-optimal allocations, formally denoted by $\mathrm{UPF}(X) $. We denote by $\mathrm{UPJ}(X)$ the subset of jackpot utility vectors.

We do not involve the initial endowments in this section because they are irrelevant for Pareto optimality. They are needed though for individual rationality, used in the next section. In our setting, individually rational
 Pareto-optimal allocations always exist (Lemma \ref{lem:IR} in Appendix \ref{app:lemmas}).

\subsection{Risk-seeking EU agents}

\noindent This subsection provides an explicit characterization of Pareto optimality under universal risk seeking.
The following function will serve as a useful tool. For any Negishi weight vector $\blambda=(\lambda_1, \dots, \lambda_n) \in \R_+^n\setminus \{\mathbf 0\}$ we define
\begin{equation} \label{eq:V} V_{\blambda}: x \mo  \max_{i\in [n]} \lambda_i u_i(x). \end{equation}
 Under Assumption \ref{ass:EURS}, for each Pareto-optimal allocation $\mathbf X$ there exists a Negishi weight vector $\blambda$ such that $V_{\blambda}$ is an affine combination of the individual utility functions (Statement (iii) below) and is maximized at $\mathbf X$ (Statement (ii)). That is, $V_{\blambda}$  can be interpreted as the utility function of a representative agent, maximized at that particular
 Pareto-optimal allocation. Like the individual agents, this representative agent is risk seeking with $V_{\blambda}$ convex.

\begin{theorem}[Pareto optimality for risk seekers]\label{thm:Pareto}
Suppose that Assumptions \ref{ass:ER} and \ref{ass:EURS} hold. For an allocation $\mathbf X=(X_1,\dots, X_n)\in \mathbb A_n(X)$,
the following statements are equivalent:
\begin{enumerate}[(i)]
 \item $\mathbf X$ is Pareto optimal;
 \item $\mathbf X$ is
 $\blambda$-optimal for some $\boldsymbol{\lambda} \in \Delta_{n}$;
\item  $\mathbf X$ satisfies $\sum_{i=1}^n \lambda_iu_i(X_i) = V_{\boldsymbol{\lambda}}(X)$ for some $\boldsymbol{\lambda}=(\lambda_1,\dots,\lambda_n)\in \Delta_{n}$;
 \item  $\mathbf X$ satisfies $\sum_{i=1}^n \lambda_i\E[u_i(X_i)] = \E\left[V_{\boldsymbol{\lambda}}(X)\right]$ for some $\boldsymbol{\lambda}=(\lambda_1,\dots,\lambda_n)\in \Delta_{n}$;
 \item $\mathbf X $
 is a jackpot allocation, written as $\mathbf X=X(\id_{A_1},\dots,\id_{A_n})$, satisfying, for some $\boldsymbol{\lambda}=(\lambda_1,\dots,\lambda_n)\in \Delta_{n}$,
 $ \lambda_i u_i(X)\id_{A_i}  =V_{\blambda}(X) \id_{A_i}$ for each $i\in [n]$.

 \end{enumerate}
 \end{theorem}

\noindent Statement (iv) is a remarkable weakening of Statement (iii). The proof that $\mathbf X$ is a jackpot allocation in (v) is elementary, formalizing Marshall's intuition from the Introduction. The derivation of the explicit expression given after is not elementary, involving the
 Hahn--Banach Theorem. With this explicit expression, Theorem  \ref{thm:Pareto} has completely solved the case of
 risk-seeking agents under EU.

\noindent\textbf{Proof sketch of Theorem \ref{thm:Pareto}.}
The equivalence (i)$\Leftrightarrow$(ii), a special case of Proposition \ref{prop:convex} below, is based on the
Hahn--Banach Theorem
and the convexity of UPS, which relies on Assumption \ref{ass:ER}.
The equivalence (ii)$\Leftrightarrow$(iv)
can be proved by verifying that $\E[V_\blambda(X)]$ is the maximum of $\sum_{i=1}^n \lambda_i\E[u_i(X_i)]$ over $(X_1,\dots,X_n)\in \mathbb A_n(X)$. 
The implications
(v)$\Rightarrow$(iii)$\Rightarrow$(iv)  are straightforward.
The final (i)$\Rightarrow$(v) is proved by arguing that the above maximum can only be attained by the jackpot allocations in (iv) with Theorem \ref{thm:CT_improvement} and some techniques from probability theory.

Theorem \ref{thm:Pareto} immediately yields the relation
$\mathrm{UPF}(X) \subseteq \mathrm{UPJ}(X) \subseteq \mathrm{UPS}(X).$ In particular,
 Pareto-optimal allocations are jackpot allocations.
The converse does not hold in general if agents are not homogeneous. For example, if there are two agents, one with a more convex utility but receiving a smaller payoff on a larger event, as in the following example.
There, $\mathrm{UPF}(X)$ is a curve, different from $\mathrm{UPJ}(X)$, which is a convex set.

\begin{example}\label{ex:UPF_not_simplex}
 Set $u_1(x)= 3x^2$ and $u_2(x)=4x^3$ for $x\geq 0$ and let $X$ be uniformly distributed over $[0,1]$. We have $\E[u_1(X)]=\E[u_2(X)]=1$, and for any composition $(A_1,A_2)$ independent of $X$, we have
    $$\E[u_1(X\id_{A_1})] +\E[u_2(X\id_{A_2})]= \P(A_1)\E[u_1(X)] + \P(A_2)\E[u_2(X)]=1.$$
Now consider $A_1=\{X\in [0,3/4]\}$ and $A_2=\{X\in [3/4,1]\}$ so that $(A_1, A_2)$ is a composition that is not independent of $X$. We can compute
$
\E[u_1(X\id_{A_1})]
= \int_0^{3/4} 3x^2\d x \approx 0.422
$
and
$
\E[u_2(X\id_{A_2})]
= \int_{3/4}^1 4x^3\d x  \approx 0.684,
$
and hence this allocation is better than some jackpot allocations built using events independent of $X$.
Intuitively, the allocation that maximizes the equally weighted sum of the welfare gives everything to the agent who has the highest utility pointwise; see the left panel of Figure \ref{fig:example_UPF_not_simplex}.
Maximizing differently weighted sums of the welfare gives the UPF by Theorem \ref{thm:Pareto}, plotted in the right panel of Figure \ref{fig:example_UPF_not_simplex}.
\end{example}

\begin{figure}[t]
    \centering
    \caption{An illustration of Example \ref{ex:UPF_not_simplex}. Left panel: the utility functions. Right panel: the utility possibility frontier.}
    \label{fig:example_UPF_not_simplex}
    \vspace{0.5cm}
\resizebox{0.45\textwidth}{0.4\textwidth}
{
\begin{tikzpicture}
    \begin{axis}[
        domain=0:1,
        samples=100,
        axis lines=middle,
        xlabel={$x$},
        ylabel={$u(x)$},
        every axis x label/.style={at={(ticklabel* cs:1)}, anchor=west},
        legend pos=north west,
        y label style={anchor=west},
        width=10cm,
        height=10cm,
    ]
    \addplot[blue, thick] {3*x^2};
    \addlegendentry{$u_1(x) = 3x^2$}
    \addplot[red, thick] {4*x^3};
    \addlegendentry{$u_2(x) = 4x^3$}
    \end{axis}
\end{tikzpicture}
}~~~
\resizebox{0.45\textwidth}{0.4\textwidth}{
\begin{tikzpicture}
    \begin{axis}
    [
        axis x line=middle,
        axis y line=middle,
        width=10cm,
        height=10cm,
        xlabel={$\E[u_1(X_1)]$},
        ylabel={$\E[u_2(X_2)]$},
        every axis x label/.style={at={(ticklabel* cs:1)}, anchor=south},
    xmin=0, xmax=1,
        ymin=0, ymax=1,
       enlargelimits=upper,
                 legend style={at={(0.5,-0.1)},anchor=north}
    ]
       \addplot[color=black, thick] coordinates {
        (0,1)
        (1.0494569670357E-06,	0.999999999760532)
        (1.35696733779666E-05,	0.999999855151481)
        (3.59157928601271E-05,	0.999995191453758)
        (0.000578531108409359,	0.99995062699511)
        (0.00234841048119861,	0.999683499990672)
        (0.006591796875,	0.998764038085938)
        (0.015625,	0.99609375)
        (0.0332060131684642,	0.989325623249334)
        (0.0658561561967106,	0.973405295974267)
        (0.125,	0.9375)
        (0.231064701737553,	0.858210112746497)
        (0.421875,	0.68359375)
        (0.770254630117638,	0.293933254376659)
        (1,0)

    };
        \addplot[
            color=lightgray,
            dashed,
            line width=1pt
        ] coordinates {(0,1) (1,0)};

        \addplot[
            only marks,
            mark=*,
            mark size=2pt,
            color=red
        ] coordinates {(0.421875, 0.68359375)};
       \node at (axis cs:0.421875, 0.68359375) [anchor=west, color=red] {$(0.42, 0.68)$};
    \end{axis}
\end{tikzpicture}
}
\end{figure}
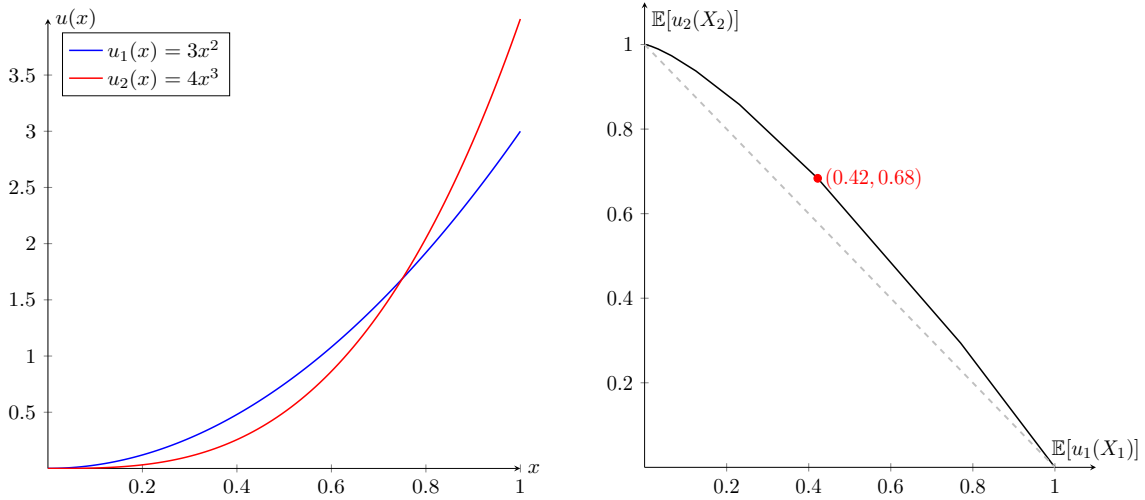

\cite{GKKL21} provided empirical support for our analysis. They found, in the gain domain with the classical risk aversion prevailing, that subjects preferred to share risks, as in classical theorems on risk sharing. However, in the loss domain subjects preferred to resort to gambling, in agreement with our claims that risk seeking is prevailing there, leading to phenomena as in Theorem \ref{thm:Pareto}.

Theorem \ref{thm:Pareto} considers the special case of a minimal outcome, denoted 0. Similar results hold if there is a maximal outcome but no minimal outcome (Theorem \ref{thm:dualPareto}). We then obtain scapegoat allocations in Statement (v). If there are both a minimal and a maximal outcome, then risk seekers take up more risk until they hit the first boundary, either the maximal or minimal outcome. Further examination of the latter case is left to future studies.

Although generally
$\mathrm{UPF}(X) \ne \mathrm{UPJ}(X)$,
there are two special cases in which equality holds and thus all jackpot allocations are Pareto optimal.
The first special case of $\mathrm{UPF}(X)=\mathrm{UPJ}(X)$ occurs when the total payoff is a constant.
\begin{proposition} \label{prop:PO_jackpot}
If $X=x>0$ is a constant and Assumptions \ref{ass:ER} and \ref{ass:EURS} hold, then all jackpot allocations of $X$ are Pareto optimal.
\end{proposition}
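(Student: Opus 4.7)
The plan is to invoke Theorem \ref{th:Pareto}, specifically the implication (iv)$\Rightarrow$(i), by exhibiting explicit Negishi weights that make every jackpot allocation of the constant $x$ satisfy the equality in (iv).

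First I would fix a jackpot allocation $\mathbf X = x(\id_{A_1},\dots,\id_{A_n})$ with $(A_1,\dots,A_n)\in\Pi_n$. Since $x>0$ and each $u_i$ is strictly increasing with $u_i(0)=0$, we have $u_i(x)>0$ for every $i\in[n]$. Define
\begin{equation*}
\lambda_i = \frac{1/u_i(x)}{\sum_{j=1}^n 1/u_j(x)}, \qquad i\in[n],
\end{equation*}
so that $\boldsymbol\lambda=(\lambda_1,\dots,\lambda_n)\in\Delta_n$. A one-line calculation shows that $\lambda_i u_i(x)$ is the same constant $c:= 1/\sum_{j=1}^n 1/u_j(x)$ for every $i$, and therefore
\begin{equation*}
V_{\boldsymbol\lambda}(x) = \max_{i\in[n]} \lambda_i u_i(x) = c = \lambda_i u_i(x) \quad \text{for each } i.
\end{equation*}

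Next I would verify condition (iv) of Theorem \ref{th:Pareto}. Because $X\equiv x$ is deterministic, for every $i\in[n]$,
\begin{equation*}
\lambda_i u_i(X)\id_{A_i} = \lambda_i u_i(x)\id_{A_i} = V_{\boldsymbol\lambda}(x)\id_{A_i} = V_{\boldsymbol\lambda}(X)\id_{A_i},
\end{equation*}
which is exactly the equation required in (iv). Since $\boldsymbol\lambda\in\Delta_n$, the equivalence (iv)$\Leftrightarrow$(i) in Theorem \ref{th:Pareto} then yields that $\mathbf X$ is Pareto optimal.

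There is essentially no hard obstacle once Theorem \ref{th:Pareto} is available: the whole content of the argument is to choose Negishi weights inversely proportional to $u_i(x)$, which equalizes the weighted utilities of agents on a constant payoff. The only small thing to keep track of is the normalization $u_i(0)=0$ and the strict monotonicity of $u_i$ to guarantee $u_i(x)>0$, so that the reciprocals and the weights are well defined; the assumption $\p(X>0)>0$ is automatic here since $x>0$. Note that Assumption \ref{ass:ER} is used only implicitly through Theorem \ref{th:Pareto}.
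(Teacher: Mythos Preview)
Your argument is correct, but it differs from the paper's proof. The paper argues by contradiction: if a jackpot allocation $x\mathbf J$ were strictly dominated by some $\mathbf Y$, apply Theorem~\ref{thm:CT_improvement} to $\mathbf Y$ to get a jackpot allocation $x\mathbf J'$ dominating $\mathbf Y$; then for each $i$ the utility inequality $u_i(x)\p(A_i')\ge u_i(x)\p(A_i)$ forces $\p(A_i')\ge \p(A_i)$ componentwise with a strict inequality somewhere, contradicting $\sum_i\p(A_i')=\sum_i\p(A_i)=1$. Your route instead exhibits the explicit Negishi weights $\lambda_i\propto 1/u_i(x)$ that equalize $\lambda_i u_i(x)$ across agents, and then reads off Pareto optimality from (iv)$\Rightarrow$(i) in Theorem~\ref{th:Pareto}. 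Both are short; the paper's version is slightly more self-contained in that it only needs the improvement theorem rather than the full equivalence in Theorem~\ref{th:Pareto}, while your version has the advantage of being constructive (it identifies the weight vector for which the allocation is $\boldsymbol\lambda$-optimal) and avoids any contradiction step. There is no circularity, since the proof of Theorem~\ref{th:Pareto} does not rely on Proposition~\ref{prop:PO_jackpot}.
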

We next turn to Assumption \ref{ass:HO}, where all utility functions are the same, implying that the condition on $(A_1,\dots,A_n)$ in Statement (v) holds true for $\blambda =\mathbf 1/n$ because $V_{\blambda}(x)=u(x)/n$ for $x\in \R_+$.
This is
formally stated in the result below, which further characterizes the UPF as a simplex.

\begin{proposition}\label{prop:UPF}
 Under Assumptions \ref{ass:ER} and  \ref{ass:HO},
 $\mathrm{UPF}(X)=\mathrm{UPJ}(X)=\Delta_n(\E[u(X)])$.
\end{proposition}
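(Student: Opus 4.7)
\textbf{Proof plan for Proposition \ref{prop:UPF}.}
The plan is to prove the two equalities $\mathrm{UPF}(X)=\mathrm{UPJ}(X)$ and $\mathrm{UPJ}(X)=\Delta_n(\E[u(X)])$ separately, using Theorem \ref{th:Pareto} and Assumption \ref{ass:ER} as the two main ingredients.

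First, I would show the chain $\mathrm{UPJ}(X)\subseteq\Delta_n(\E[u(X)])$ and $\mathrm{UPJ}(X)\subseteq\mathrm{UPF}(X)$. Fix any jackpot allocation $\mathbf{X}=X(\id_{A_1},\dots,\id_{A_n})$ with $(A_1,\dots,A_n)\in\Pi_n$. Since $u(0)=0$ and $X\ge 0$, one has $u(X_i)=u(X)\id_{A_i}$, so
\[
\sum_{i=1}^n \E[u(X_i)]=\E\!\left[u(X)\sum_{i=1}^n \id_{A_i}\right]=\E[u(X)],
\]
which places the utility vector of $\mathbf{X}$ in $\Delta_n(\E[u(X)])$. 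For Pareto optimality, apply Theorem \ref{th:Pareto}(iv) with $\blambda=\mathbf{1}/n$: homogeneity gives $V_\blambda(x)=u(x)/n$, and the required identity $\lambda_i u(X)\id_{A_i}=V_\blambda(X)\id_{A_i}$ becomes $(u(X)/n)\id_{A_i}=(u(X)/n)\id_{A_i}$, which is trivially satisfied for every $i$. Hence every jackpot allocation is Pareto optimal, giving $\mathrm{UPJ}(X)\subseteq\mathrm{UPF}(X)$.

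Second, I would show the reverse inclusion $\Delta_n(\E[u(X)])\subseteq\mathrm{UPJ}(X)$, which I expect to be the main obstacle, as it requires realizing \emph{every} target vector on the simplex as the utility profile of some jackpot allocation. Let $\mathbf v=(v_1,\dots,v_n)\in \Delta_n(\E[u(X)])$ and set $\lambda_i=v_i/\E[u(X)]$ (with $\E[u(X)]>0$ by $\p(X>0)>0$ and strict monotonicity of $u$; if $\E[u(X)]=0$ the statement is vacuous). Using Assumption \ref{ass:ER}, pick $U\sim\mathrm{Unif}[0,1]$ independent of $X$, and slice it by the cumulative weights: set $\Lambda_0=0$, $\Lambda_i=\sum_{j\le i}\lambda_j$, and $A_i=\{\Lambda_{i-1}\le U<\Lambda_i\}$. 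Then $(A_1,\dots,A_n)\in\Pi_n$, and independence of $U$ and $X$ yields $\E[\id_{A_i}\mid X]=\lambda_i$, whence
\[
\E[u(X_i)]=\E[u(X)\id_{A_i}]=\E\!\left[u(X)\,\E[\id_{A_i}\mid X]\right]=\lambda_i\E[u(X)]=v_i.
\]
Thus the jackpot allocation $\mathbf{X}=X(\id_{A_1},\dots,\id_{A_n})$ realizes $\mathbf v$, establishing the inclusion.

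Combining with the inclusion $\mathrm{UPF}(X)\subseteq\mathrm{UPJ}(X)$ that already follows from Theorem \ref{th:Pareto} (every Pareto-optimal allocation is a jackpot allocation under Assumption \ref{ass:EURS}), the three sets coincide. The only delicate point is step two, where one must transform an abstract weight $\blambda\in\Delta_n$ into a genuine partition of $\Omega$ with the right conditional mass on $X$; here Assumption \ref{ass:ER} is used in its minimal form (independence of $U$ from the aggregate $X$ suffices, not from individual allocations), which is exactly why the proof goes through in this homogeneous setting without needing the stronger Assumption \ref{ass:ERS}.
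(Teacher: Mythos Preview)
Your proof is correct and follows essentially the same approach as the paper: establish $\mathrm{UPJ}(X)=\Delta_n(\E[u(X)])$ by direct computation and an explicit construction of $(A_1,\dots,A_n)$ independent of $X$ via Assumption \ref{ass:ER}, then invoke Theorem \ref{th:Pareto} for the identification with $\mathrm{UPF}(X)$. The only minor variation is that you obtain $\mathrm{UPJ}(X)\subseteq\mathrm{UPF}(X)$ by verifying condition (iv) of Theorem \ref{th:Pareto} with $\blambda=\mathbf{1}/n$, whereas the paper argues that points on the simplex cannot strictly dominate one another and combines this with $\mathrm{UPF}(X)\subseteq\mathrm{UPJ}(X)$; your route is arguably more direct.
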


\noindent Proposition \ref{prop:UPF}
 involves a structure that is uncommon for EU agents:
 Pareto optimality is not only implied by maximizing the social planner's problem where all the Negishi weights are normalized to 1, but also implies it, so that it is equivalent. This equivalence also holds for agents using quantiles or monetary risk measures as their preference functionals (\citealp[Proposition~1]{ELW18}).
Proposition \ref{prop:UPF} will be useful in Section \ref{sec:CEWT} to unify all four types of allocations: Pareto optimal, equilibrium, sum-optimal, and jackpot.

\subsection{General EU agents}
\label{sec:EU_gen}

\noindent Even in the gain domain, where risk aversion is prevailing, the classical assumption that all agents are risk averse is empirically unrealistic. It is desirable to still allow for at least some risk seeking agents. This section, thus, considers mixes of risk averse and risk seeking agents, which of course is very desirable for the loss domain.

   \begin{proposition}
    \label{prop:convex}
    Under Assumptions \ref{ass:ER} and \ref{ass:EU},
    \begin{enumerate}[(i)]
    \item
    both $\mathrm{UPJ}(X)$
    and $\mathrm{UPS}(X)$  are convex;
    \item  an allocation of $X$ is Pareto optimal if and only if
it is $\blambda$-optimal for some $\boldsymbol{\lambda} \in \Delta_{n}$.
    \end{enumerate}
\end{proposition}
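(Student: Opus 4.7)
The plan is to prove (i) by constructing mixtures of any two allocations via the external randomization of Assumption \ref{ass:ER}, which yields convexity of both $\mathrm{UPS}(X)$ and $\mathrm{UPJ}(X)$, and then to deduce (ii) by applying the Hahn--Banach separation theorem to the convex set $\mathrm{UPS}(X)$. For (i), fix $\mathbf{X}^{(1)},\mathbf{X}^{(2)}\in \mathbb{A}_n(X)$ with utility vectors $\mathbf{u}^{(1)},\mathbf{u}^{(2)}$ and $\lambda\in[0,1]$. I would form regular conditional distributions $\mu_k(x,\cdot)=\mathcal L(\mathbf X^{(k)}\mid X=x)$, which are a.s.\ supported on the fiber $\{(z_1,\dots,z_n)\in \R^n_+:\sum_i z_i=x\}$. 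The mixture kernel $\mu=\lambda\mu_1+(1-\lambda)\mu_2$ charges the same fiber. Using the uniform $U$ from Assumption \ref{ass:ER} (independent of $X$), realize $\mathbf Z$ as a jointly measurable function of $(X,U)$ whose conditional law given $X$ is $\mu(X,\cdot)$. Then $\sum_i Z_i=X$ a.s., so $\mathbf Z\in \mathbb A_n(X)$, and the tower property yields $\E[u_i(Z_i)]=\lambda\E[u_i(X_i^{(1)})]+(1-\lambda)\E[u_i(X_i^{(2)})]$, giving convexity of $\mathrm{UPS}(X)$. If $\mathbf X^{(1)},\mathbf X^{(2)}$ are jackpot allocations, each $\mu_k(x,\cdot)$ charges only $\{xe_1,\dots,xe_n\}$, hence so does $\mu(x,\cdot)$, so $\mathbf Z$ is itself a jackpot allocation; this is the allocation-level counterpart of Proposition \ref{pr:3} and gives convexity of $\mathrm{UPJ}(X)$.

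For (ii), $(\Rightarrow)$: set $\mathbf v=(\E[u_i(X_i)])_{i\in[n]}$. Pareto optimality of $\mathbf X$ makes the open convex set $\mathbf v+\R^n_{++}$ disjoint from $\mathrm{UPS}(X)$, so Hahn--Banach (using the convexity from (i)) gives a nonzero $\blambda\in\R^n$ with $\blambda\cdot\mathbf u\le \blambda\cdot \mathbf v$ for every $\mathbf u\in \mathrm{UPS}(X)$; testing against $\mathbf v+\epsilon e_i$ forces $\lambda_i\ge 0$, and rescaling yields $\blambda\in\Delta_n$, so $\mathbf X$ is $\blambda$-optimal. $(\Leftarrow)$: suppose $\mathbf X$ is $\blambda$-optimal with $\blambda\in\Delta_n$ but some $\mathbf Y\in \mathbb A_n(X)$ strictly dominates $\mathbf X$, with the strict improvement at coordinate $j$. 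Matching $\blambda$-sums forces $\lambda_j=0$. Strict monotonicity of the $u_i$ together with $\blambda$-optimality of $\mathbf X$ then forces $X_j=0$ a.s., since otherwise transferring the nonzero mass of $X_j$ to any $k$ with $\lambda_k>0$ would strictly increase $\sum_i\lambda_i\E[u_i(X_i)]$. Hence $\E[u_j(X_j)]=0$, and strict dominance gives $Y_j>0$ on a positive-probability event. Performing the same transfer on $\mathbf Y$ (sending $Y_j$ to some $k$ with $\lambda_k>0$) produces an allocation whose $\blambda$-sum strictly exceeds $\sum_i\lambda_i\E[u_i(Y_i)]=\sum_i\lambda_i\E[u_i(X_i)]$, contradicting $\blambda$-optimality of $\mathbf X$.

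The main obstacle is the measure-theoretic construction of $\mathbf Z$ in (i) under the weaker Assumption \ref{ass:ER}: $U$ is independent of $X$ but not of $(\mathbf X^{(1)},\mathbf X^{(2)})$, so a direct componentwise randomization of the kind available under Assumption \ref{ass:ERS} is out of reach. The conditional-disintegration scheme above randomizes only within the fibers $\{X=x\}$, which is precisely the role Assumption \ref{ass:ER} is designed to play; the formal verification will rely on the technical lemmas in Appendix \ref{app:lemmas}.
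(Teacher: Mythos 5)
Your proposal is correct and follows essentially the paper's route: in part (i) you realize the probabilistic mixture of two allocations through the external randomization (mixing the conditional laws given $X$ and re-realizing the result via the disintegration argument of Lemma \ref{lem:existence}), whereas the paper switches between suitably chosen versions of the two allocations with an independent event and invokes Proposition \ref{pr:3} for the jackpot case---distributionally the same construction---and your part (ii) is the paper's Hahn--Banach separation combined with the same mass-transfer argument for zero-weight coordinates, using $u_i(0)=0$ and strict monotonicity. The only cosmetic point is that in the separation step you should test against points of the open cone $\mathbf v+\R^n_{++}$ (e.g., $\mathbf v+\epsilon e_i+\delta\mathbf 1$ with $\delta\downarrow 0$) rather than the boundary point $\mathbf v+\epsilon e_i$, which does not lie in that set; the conclusion $\blambda\ge 0$ is unaffected.
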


 %
 %
 %
\noindent For a given  $\boldsymbol{\lambda} \in \Delta_{n}$ or $\boldsymbol{\lambda} \in \R_+^n\setminus\{\mathbf 0\}$, computing the corresponding
 $\blambda$-optimal allocations is a standard numerical task, explained in
 Section \ref{sec:details_ex2}.

We next consider the remaining case under EU, where there are both
 risk-seeking and risk-averse agents.
For $\mathbf X=(X_1,\dots,X_n)$ and $S\subseteq[n]$, write
$\mathbf X_S=(X_i)_{i\in S} $.
\begin{theorem}[Subgroups]
\label{thm:EU}
Suppose that Assumptions \ref{ass:ER} and \ref{ass:EU} hold, and that $\mathbf X \in \mathbb A_n(X)$ is Pareto optimal.
 For any set $S\subseteq[n]$ of strictly
 risk-seeking agents, $\mathbf X_S$ is a jackpot allocation.
 For any set $T\subseteq[n]$ of strictly
 risk-averse agents,
 $\mathbf X_T$ is a comonotonic allocation.
\end{theorem}

\noindent Theorem \ref{thm:EU} directly follows from the two improvement theorems and some standard arguments. It is of course of theoretical interest to settle the case of risk-seeking agents, and to formalize the classical intuition of Marshall described in the introduction, as Theorem \ref{thm:EU} does. The main motivation for the theorem, though, is to increase empirical relevance. Thus, \cite{JS00} suggested that British racetrack bettors are part of the
 risk-seeking subgroup in society.
So are clients in the gambling industry. The theorem documents incessant risk taking until some outside boundary is hit. In big markets it is unlikely to be the idealized case in the theorem of one agent taking the whole market endowment, and there will be other
 context-dependent boundaries where the whole market is allocated to some agents.
 An impactful
 real-world example comes from Nick Leeson in 1995, a derivatives dealer who, in the loss domain below his usual status quo, exhibited risk seeking. He continued gambling until he hit a boundary, which in his case was the lower bound of ruin (outcome 0 in our notation) of the
 Barings Bank. Similar cases will exist but did not become publicly known. Even if we agree with Marshall's criticisms, we should not ignore the empirical existence of the phenomena described in Theorem \ref{thm:EU} and their dangers. Gambling addiction
(\citeauthor{{DHMRH25}},  \citeyear{{DHMRH25}};
\citeauthor{{TWC24}},  \citeyear{{TWC24}}) provides another impactful
 real-world example.
 These are all
 real-world examples of the extreme behavior displayed in the above theorems, more precisely, their dual versions in Appendix \ref{onl.app.:scapegoats} with an extreme scapegoat allocation, hitting a ruinous lower bound for an agent.

 \cite{LP26} derived equilibria when there are two commodities and two agents who maximize expected utility, where one agent is fully risk averse and the other is risk averse for one commodity but risk seeking for the other. Unlike Theorem \ref{thm:EU}, the solution may then be entirely interior.


Theorem \ref{thm:EU} considered
 within-group exchanges, within the
 risk-averse group and then within the
 risk-seeking group. We next consider
 between-group  exchanges for these two groups.
For a given $\blambda\in \R_+^n\setminus\{\mathbf 0\}$, a
$\blambda$-optimal allocation $(X_1,\dots,X_n)$ can be obtained in two steps. Let $\blambda_S=(\lambda_i)_{i \in S} $, $\blambda_T=(\lambda_i)_{i \in T} $, and $\blambda^{ST}=(\sum_{i\in S}\lambda_i, \sum_{i\in T}\lambda_i)$. First, for each pair $(X_S,X_T)$ that sums to $X$, we find a $\blambda_S$-optimal allocation of $X_S$ to agents in $S$, which is a jackpot allocation and a $\blambda_T$-optimal allocation of $X_T$ to agents in $T$, which is a comonotonic allocation (Theorem~\ref{thm:EU}). By doing this, we obtain the total weighted utility of each group applied to $X_S$ and $X_T$. Second, we determine a $\blambda^{ST}$-optimal $(X_S,X_T)$  via a
 one-dimensional optimization  by maximizing the total weighted utility
 point-wise; see \eqref{eq:2groups}. The resulting optimal allocation $(X_S,X_T)$  may be a jackpot allocation itself if agents in $S$ have larger Negishi weights (``risk seeking prevails", see Example \ref{ex:RARS}.a), but it may also be a proportional allocation if agents in $T$ have larger Negishi weights (``risk aversion prevails", see Example \ref{ex:RARS}.b), or it may be somewhere in between. The $\blambda^{ST}$-optimal $(X_S,X_T)$ in the second step  then yields an $\blambda$-optimal allocation $(X_1,\dots,X_n)$ in the first step. The above steps are numerically efficient and sometimes they admit explicit formulas. Details of these steps, as well as computations for the next example, are in Section \ref{sec:details_ex2}.

\begin{example}
\label{ex:RARS}

Let $t$ be the cardinality of $T$.
For $i\in S$, let  $u_{i}$
be the convex function $u_i(x)= 3x+x^2 $
and for $i\in T$, let $u_i$ be
a common strictly increasing and strictly concave function satisfying
 $u_i(x)= 5 x - tx^2  $ on $[0, 2/t]$.
Let the aggregate payoff $X$ be distributed on $[0,2]$.
We consider two cases of $\blambda =(\lambda_1,\dots,\lambda_n)\in \R_+^n$ with $\lambda_i=\lambda_S>0$ for $i\in S$ and $\lambda_i=\lambda_T>0$ for $i\in T$; here we do not normalize $\blambda $.

\begin{enumerate}
    \item[(a)] Let $\lambda_S=5/4$ and $\lambda_T=1$.
A $\blambda$-optimal allocation  $(X_1,\dots,X_n)$ is
given by
\begin{align}
\label{eq:RARS_a}X_i  &= X J_i \id_{\{X> c\}} , ~i\in S
\mbox{~~~and~~~}
X_i = \frac{X}{t}\id_{\{X\le c\}}, ~i\in T,
\end{align}
where $c=5/9$ and $(J_i)_{i\in S}$ is any jackpot vector.
    \item[(b)] Let  $\lambda_S=1$ and $\lambda_T=2$. A
$\blambda$-optimal allocation may not be a jackpot allocation. For instance, if we take $X=2$, then   $X_S=1/2$ and $X_T =3/2$  necessarily hold, and hence any
 $\blambda$-optimal allocation cannot be a jackpot.
\end{enumerate}
 In case (a), agents in $T$ collectively gamble with agents in $S$, whereas they do not in case (b).
\end{example}


\section{Competitive equilibria and welfare theorems}
\label{sec:CEWT}


\noindent We now analyze competitive equilibria of the risk sharing economy allowing for
 risk-seeking agents.  On traditional equilibria we can be brief: they will not exist with risk seeking agents because, without a bound imposed, those agents will always continue to increase stakes. However, this result is not empirically possible. In practice, there will always be boundaries. We therefore introduce bounded competitive equilibria, with an empirically realistic upper bound: the total supply of the market. In classical analyses with only risk averse agents, this condition is automatically implied by the other conditions and in this sense our condition is not an extra restriction. However, \cite{D59} already showed that this condition, when imposed as a prior restriction, essentially changes the strategic nature of the economy, e.g., by involving the total endowment of all agents, and it can lead to different efficient allocations and equilibria. We think that our new bounded equilibria are more useful for studying risk seeking than the traditional unbounded equilibria because they are empirically realistic whereas classical equilibria do not even exist. For
 risk-seeking EU agents, we fully characterize our competitive equilibria and obtain welfare theorems.

\subsection{General results}

\noindent For EU agents,
the individual optimization problem for agent $i$ that we consider is
  \begin{align}
  \label{eq:indi_2}
&  \mbox{maximize}~ \E[u_i(X_i)] ~~~~ \mbox{over $0\le X_i\le X$} ~~~~\mbox{subject to} ~\E^Q [ X_i] \le \E^Q[ \xi _i], \end{align}
where $(\xi_1,\dots,\xi_n)\in \mathbb A_n(X)$ is the vector of initial endowments and
$Q$ is a probability measure.
 A {\it bounded equilibrium allocation\/} 
$(X_1,\dots,X_n)\in \mathbb A_n(X)$  solves
 \eqref{eq:indi_2} for some  $(\xi_1,\dots,\xi_n) $ and  $Q$, which can be left unspecified, and market clearance holds. The constraint $X_i\ge 0$  in \eqref{eq:indi_2} reflects our setting of nonnegative random variables in $L^\infty_+$.
Our new constraint, $X_i\le X$, means that the agent's payoff $X_i(\omega)$ at each state $\omega$ cannot exceed  the total supply $X(\omega)$ in the economy. In what follows, we will omit ``bounded'' for brevity, where it is understood that all our equilibria henceforth are bounded.

With the above formulation of individual optimization,
Statement (ii) below provides a version of the second welfare theorem with the equilibrium price $Q$ explicitly given.

\begin{theorem}[Welfare]\label{thm:1WT}
The following statements hold.
\begin{enumerate}[(i)]
\item   Under Assumption \ref{ass:EU}, every equilibrium allocation of $X$ is Pareto optimal.
\item Under Assumptions \ref{ass:ER} and  \ref{ass:EURS}, every
Pareto-optimal allocation of $X$ is an equilibrium allocation,
with an equilibrium price $Q$ given by
\begin{align}\label{eq:general_price}
      \frac{\d Q}{\d \p} = \frac{V_{\boldsymbol{\lambda}}(X)}{X}\frac{1}{\E[V_{\boldsymbol{\lambda}}(X)/X]} ,
\end{align}
 for some  $ \boldsymbol{\lambda}\in \Delta_n$, where  $V_{\blambda}$ is defined in \eqref{eq:V}.
\end{enumerate}
\end{theorem}

\noindent Statement (i) of Theorem \ref{thm:1WT} follows from standard techniques (see \citealp[Proposition 16.C.1]{MWG95} for a finite space), and our main novelty lies in Statement (ii), on which we make a few remarks.
In Theorem \ref{thm:Pareto}, a
 Pareto-optimal allocation (which is necessarily a jackpot allocation) is
 $\blambda$-optimal for some $\blambda \in \Delta_n$;
this vector $\blambda$ is the same one in \eqref{eq:general_price}, shown in the proof of Theorem \ref{thm:1WT}.
The vector of initial endowments associated with the equilibrium allocation $(X_1,\dots,X_n)$  is not unique, and it can be $(X_1,\dots,X_n)$ itself.
Another possibility concerns the proportional endowments
$$
\xi_i = \frac{\E^Q[X_i]}{\E^Q[X]}X, ~~~  i\in [n] .
$$

The equilibrium price $Q$ is generally not unique, even for a given vector of initial endowments. Uniqueness will be studied in Section \ref{sec:HEU} below.
The component
 $\E[V_{\boldsymbol{\lambda}}(X)/X]^{-1}$ in \eqref{eq:general_price} is a normalization to guarantee that $Q$ is a probability measure. The key property of $Q$ is that, since $u_1,\dots,u_n$ are convex, so is $V_{\boldsymbol{\lambda}}$. Hence, $V_{\boldsymbol{\lambda}}(x)/x$ is increasing in $x$. This implies that $V_{\boldsymbol{\lambda}}(X)/ X $ and $X$ are comonotonic. That is, the equilibrium price density $\d Q/\d \p$ increases as the aggregate endowment becomes more abundant.
This comonotonicity property is in stark contrast to the classical setting with strictly
risk-averse EU agents, where the equilibrium price is decreasing in the aggregate endowment $X$. Yet, this is not surprising, as the price structure reflects the marginal utility of agents. For strictly
 risk-averse EU agents, consumption is cheaper in
 high-endowment states, whereas for strictly
 risk-seeking EU agents it is reversed.

Theorem \ref{thm:1WT} implies that for
 risk-seeking agents,
 Pareto-optimal allocations and equilibrium allocations are equivalent.
This is not the case for general EU agents. The following result gives a necessary condition for equilibrium allocations.

\begin{theorem}
    \label{thm:CE_only}
   Suppose that Assumptions \ref{ass:ER} and \ref{ass:EU} hold, and  $\mathbf X \in \mathbb A_n(X)$ is an equilibrium allocation.
 For any set $S\subseteq[n]$ of strictly
 risk-seeking agents, $(\mathbf X_S,Y)$ is a jackpot allocation, where
 $Y=X-\sum_{i\in S} X_i.$
\end{theorem}

\noindent The equilibrium allocations in
 Theorem \ref{thm:CE_only} may exhibit, besides gambling within
 risk-seeking agents as in Theorem \ref{thm:EU}, also gambling between
 risk-seeking and other agents.
By Theorem \ref{thm:CE_only},
the Pareto-optimal allocation in
 Example \ref{ex:RARS}.b (``risk aversion prevails") cannot be an equilibrium allocation. The
 Pareto-optimal allocations in
Example \ref{ex:RARS}.a (``risk seeking prevails")
can be equilibrium allocations though; see Appendix \ref{app:mixed}.

A full understanding of competitive equilibria for the general mixed case is not yet available.
The next subsection focuses on homogeneous
 risk-seeking EU agents, where we can provide full understanding. We can then explicitly construct the equilibrium from any initial endowment vector.

\subsection{Homogeneous risk-seeking EU agents}
\label{sec:HEU}

\noindent For risk-seeking EU agents,
 Pareto-optimal allocation, $\blambda$-optimal allocations, and equilibrium allocations are equivalent, and all of those allocations are jackpot allocations. The reverse implication holds in the homogeneous case, as summarized next.

\begin{proposition} \label{prop:equiv}
Suppose that Assumptions \ref{ass:ER} and \ref{ass:HO} hold. For an allocation $\mathbf X=(X_1,\dots, X_n)\in \mathbb A_n(X)$, the following statements are equivalent:
(i) $\mathbf X$ is a jackpot allocation;
(ii) $\mathbf X$ is Pareto optimal;
(iii) $\mathbf X$ is sum-optimal;
(iv) $\mathbf X$ is an equilibrium allocation;
 (v)
      $\sum_{i=1}^n \E[u(X_i)]  = \E\left[u(X)\right]$.

\end{proposition}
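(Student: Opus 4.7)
The plan is to close the cycle $(\mathrm{i}) \Rightarrow (\mathrm{v}) \Rightarrow (\mathrm{iii}) \Rightarrow (\mathrm{ii}) \Rightarrow (\mathrm{i})$ by means of a single pointwise identity coming from superadditivity of $u$, and then to deduce $(\mathrm{ii}) \Leftrightarrow (\mathrm{iv})$ from the welfare theorem.

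The algebraic input is that $u\colon\R_+\to\R_+$, being strictly convex with $u(0)=0$, is strictly superadditive on $\R_+$: for $a,b\ge 0$, convexity and $u(0)=0$ give $u(a)\le \tfrac{a}{a+b}u(a+b)$ and $u(b)\le \tfrac{b}{a+b}u(a+b)$, so $u(a)+u(b)\le u(a+b)$, strictly when both are positive. Applied pointwise to any nonnegative allocation $\mathbf X\in\mathbb A_n(X)$, this yields
\begin{equation*}
\sum_{i=1}^n u(X_i)\le u(X)\ \text{a.s., with equality iff at each } \omega \text{ at most one } X_i(\omega) \text{ is nonzero,}
\end{equation*}
and hence the universal bound $\sum_i \E[u(X_i)]\le \E[u(X)]$. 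Now $(\mathrm{i})\Rightarrow(\mathrm{v})$ is immediate, since for a jackpot allocation $X_i=X\id_{A_i}$ one has $u(X_i)=u(X)\id_{A_i}$ (using $u(0)=0$), so the bound is saturated. Since (v) is exactly the statement that the universal bound is attained, $(\mathrm{v})\Leftrightarrow(\mathrm{iii})$. The implication $(\mathrm{iii})\Rightarrow(\mathrm{ii})$ is the standard fact from Section \ref{sec:preliminaries} that $\blambda$-optimality with strictly positive weights implies Pareto optimality, applied here with $\blambda=\mathbf 1$.

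For $(\mathrm{ii})\Rightarrow(\mathrm{i})$ I would invoke Theorem \ref{th:Pareto}, whose implication $(\mathrm{i})\Rightarrow(\mathrm{iv})$ already records that any Pareto-optimal allocation takes the jackpot form. Finally $(\mathrm{ii})\Leftrightarrow(\mathrm{iv})$ is a direct appeal to the two parts of Theorem \ref{th:1WT}, whose hypotheses (Assumptions \ref{ass:ER} and \ref{ass:EURS}) follow from the standing Assumptions \ref{ass:ER} and \ref{ass:HO}. The argument is not technically demanding; the only point worth stressing is that strict convexity together with $u(0)=0$ turns the pointwise inequality $\sum_i u(X_i)\le u(X)$ into an equality exactly on jackpot configurations, which is the hinge binding all five conditions together.
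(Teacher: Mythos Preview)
Your proof is correct and follows essentially the same route as the paper's. The paper's proof simply cites Theorem \ref{th:Pareto}, Proposition \ref{prop:UPF}, and Theorem \ref{th:1WT}; you invoke Theorems \ref{th:Pareto} and \ref{th:1WT} as well, but in place of the citation to Proposition \ref{prop:UPF} you unpack the superadditivity argument (which is the content of Lemma \ref{lem:sum_op} specialized to the homogeneous case) directly, making the $(\mathrm{i})\Leftrightarrow(\mathrm{iii})\Leftrightarrow(\mathrm{v})$ equivalence self-contained.
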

\noindent Proposition \ref{prop:equiv}
 follows by combining Proposition \ref{prop:UPF} and Theorems \ref{thm:Pareto} and   \ref{thm:1WT}.
Statement (v) of Proposition \ref{prop:equiv} implies that there is a vector $(\theta_1,\dots,\theta_n)\in \Delta_n$ such that for every $i$ we have  $\E[u(X_i)]=\theta_i\E[u(X)]$. The value $\theta_i$ can represent both the  probability of winning the lottery for agent $i$ and the relative purchase power of agent $i$ at equilibrium, formalized in Theorem \ref{thm:CE} below.


 Next, we explicitly solve the competitive equilibrium  for any initial endowment vector $(\xi_1, \dots, \xi_n)\in \mathbb{A}_n(X)$, and show that when $(\xi_1, \dots, \xi_n)$ is nontrivial, the equilibrium price $Q$ is uniquely given by
 \begin{equation}\label{eq:price}
 \frac{\d Q}{\d \p} = \frac{u(X)}{X}\frac{1}{\E[u(X)/X]} ~ ,
 \end{equation}
 which is precisely
\eqref{eq:general_price} in Theorem \ref{thm:1WT} under Assumption \ref{ass:HO}.\footnote{Uniqueness should be interpreted in the  $\p$-almost sure sense, and  without loss of generality  we here assume that $Q$ is absolutely continuous with respect to $\p$.} In case  $(\xi_1,\dots,\xi_n)$ is trivial (e.g., $\xi_1=X$, $\xi_2 =\cdots = \xi_n=0$), the equilibrium allocation is just $(\xi_1,\dots,\xi_n)$ and the equilibrium price can be any probability measure  $Q$ with $\d Q/\d \p>0$.

  \begin{theorem}[Uniqueness]\label{thm:CE}
Suppose that Assumptions \ref{ass:ER} and  \ref{ass:HO} hold, and fix an initial endowment vector $\boldsymbol\xi \in \mathbb{A}_n(X)$. Let $Q$ be given by \eqref{eq:price}
and $\boldsymbol \theta=\E^Q[\boldsymbol\xi]/\E^Q[X]$. 
\begin{enumerate}[(i)]
\item  The tuple $(X\mathbf J,Q)$ is a competitive equilibrium for any $\mathbf J\in \mathbb J_n$  independent of $X$
with $\E[\mathbf J]= \boldsymbol \theta$.
  \item The equilibrium price $Q$  is uniquely given by \eqref{eq:price} if $\boldsymbol\xi$ is nontrivial.
  \item  The utility vector of any equilibrium allocation is uniquely given by $ \E[u(X)] \boldsymbol \theta $.
  \end{enumerate}
   \end{theorem}

\noindent Theorem \ref{thm:CE} implies, in particular, that competitive equilibria always exist for any initial endowment vector.
 Most remarkably, although the equilibrium allocation and its utility vector depend on the initial endowments, the equilibrium price does not. This is different from the case of heterogeneous agents, as \eqref{eq:general_price} depends on the initial endowments implicitly through the parameter $\blambda\in \Delta_n$.
For a given $(\xi_1,\dots,\xi_n)$ and
a competitive equilibrium $(X\mathbf J,Q)$, the jackpot vector $\mathbf J$ does not have to be independent of $X$ as in Statement (i) of Theorem \ref{thm:CE}. Any choice of $\mathbf J=(J_1,\dots,J_n)$
satisfying the budget constraint $\E^Q[XJ_i]=\E^Q [\xi_i]$ for all $i$
is indeed suitable for $(X\mathbf J,Q)$ to be a competitive equilibrium. Thus, for a given $\boldsymbol \xi$, the equilibrium allocation  is not unique, although $Q$ is unique.

Appendix \ref{app:CE} provides two further results on the existence of a competitive equilibrium under Assumption \ref{ass:EURS}  for any initial endowment.
In particular, a competitive equilibrium exists for (a) any initial endowment and  $n=2$, and (b) for initial endowments that are proportional to $X$.
These results also illustrate that the equilibrium price is not necessarily unique.


\section{Rank-dependent utility agents} \label{sec:RDU}

\noindent The previous two sections assumed EU, and special attention was paid to agents who are entirely (on the whole domain) risk seeking or entirely risk averse. Whereas the assumption of risk seeking is a conceptually desirable addition to classical analyses, yet further empirical improvements are desirable. Empirical studies have shown that most people are neither entirely risk averse nor entirely risk seeking, but exhibit both attitudes in different subdomains (\citealp{BH08}; \citealp{ BMOT13}; \citealp{HIIW25}; \citealp{B25}). Thus, although risk seeking prevails for losses in general, there still is prevailing risk aversion, rather than seeking, for
 small-probability losses, e.g., as common in insurance. Similarly, even in the gain domain there is some risk seeking, which is even prevailing for small probabilities as in lotteries. These phenomena require deviating from classical EU.

This section provides results reckoning with the above 
empirically prevailing patterns of risk attitudes. Even though deviations from EU can be expected to be smaller among financial traders than among average human beings, it has now been accepted that those deviations are also prominent in finance, and many studies reckon with them (\citealp{BJW21}; \citealp{G23}). The study of cost sharing under
 non-EU is desirable anyhow given the widespread evidence against EU.

We assume \citeauthor{{Q82}}'s (\citeyear{{Q82}}) RDU where,
 for simplicity, we focus on gains, which has in fact been implicitly done for empirical purposes in all classical studies that assumed universal risk aversion and no reference point. For gains, RDU agrees with \citeauthor{{TK92}}'s  (\citeyear{{TK92}})  prospect theory, and these are the most popular behavioral models for risk (\citealp{BH08}; \citealp{BMOT18}; \citeauthor{{FE12}}, \citeyear{{FE12}}, Section 6; \citealp{HIIW25};
 \citealp{SW10}). Prospect theory does require a specification of a reference outcome, which we denote by 0. Thus, we here give up the flexibility of outcome domains of the preceding sections. Contrary to the preceding sections, our restriction to nonnegative outcomes now implies that we only consider gains.
    
We assume the empirically prevailing shape of utility for gains: it is concave throughout
(\citealp[p.\ 264]{W10}), but (approximately) linear for moderate gains, say on an interval $[0,x_0]$.\footnote{This
 linearity assumption for small stakes has been affirmed by
 \citet[p.~189]{LV19}, \citet[Book V]{M90}, and
\citet[p.\ 264]{W10}
 suggested linear utility below
 two months' salary.}
\citet[p.~2512]{BMOT13} and \citet[p.\ 244]{W10}
explained that utility is more linear than classically thought because deviations from risk neutrality are in part due to factors other than utility curvature. With these assumptions, our analysis will illustrate effects of risk seeking beyond convex utility. For simplicity our main results will, further, assume homogeneous agents.
 \cite{E20} and  \cite{F10}
 provided methods for replacing heterogeneous agents by a homogeneous one.

We now define the RDU model.
As in Sections \ref{sec:EUPO} and \ref{sec:CEWT}, we consider nonnegative bounded random variables: $\X=L^\infty_+$,
 although now outcomes are interpreted as gains.
A function $w:[0,1] \rightarrow [0,1]$ is called a \emph{(probability) weighting function} if it is increasing and satisfies $w(0)=0$ and $w(1)=1$.
For a weighting function $w$ and an increasing utility function $u:\R_+ \to\R_+$, the RDU preference functional is given by
\begin{align}\label{eq:RDU_def}
\mathcal U(Y) = \int_\Omega u(Y)\d(w\circ\p) =
\int_0^\infty w(\p(u(Y)>x)) \d x,~~~Y \in \X.
\end{align}
Here, the first integral is a Choquet integral.
An \emph{RDU agent} has a preference functional given by
 \eqref{eq:RDU_def} for some weighting function $w$ and utility function $u$.
 When $u$ is linear, the RDU agent's preferences are represented by the dual utility functional of \cite{Y87}, denoted by
 $$\rho_w( Y)= \int Y \d (w \circ \p),~~~Y \in \X. $$
The RDU functional $\mathcal U(Y)$ in
 \eqref{eq:RDU_def} thus is $\rho_w(u(Y)).$
Under continuity, an RDU agent is risk seeking if and only if $w$ is concave and $u$ is convex, and is risk averse if and only if $w$ is convex and $u$ is concave
  (\citealp{CKS87}, assuming differentiability; \citealp{SZ08}, in general)

The empirically prevailing weighting functions are neither convex nor concave, but a mix of those: they are inverse
 S-shaped (\citealp{BMOT13}, Section II.A; \citealp{FE12}; \citealp{LV19}). Although different formal definitions have been given, we will use the prevailing one (cavexity) because of its analytical convenience for our purposes. We call $w$ \emph{cavex} if it is continuous and there exists $p\in [0,1]$  such that $w$ is concave on $[0,p]$ and convex on $[p,1]$.\footnote{If
 $p=0$ then $w$ is convex, and if $p=1$ then $w$ is concave. Empirically, $w(p)$ is not far from $p$, but we will not need this assumption.}
Now there is neither entire risk aversion nor entire risk seeking, but there are factors going either way. Risk attitudes result from interactions between these factors, with different implications in different subdomains. Hence no simple general results can be expected to be found, and more complex and refined analyses are needed. The main purpose of this paper is to initiate the study of such results, which is warranted given the importance of achieving empirical realism. We will specify some contexts where we can already draw conclusions about which factors prevail.

 \cite{TK92} introduced the following family of cavex weighting functions
\begin{equation}
    \label{eq:TK}
 w_{\rm{TK}}(t)=\frac{t^\gamma}{(t^\gamma
 +(1-t)^\gamma)^{1/\gamma}}, ~~~t\in [0,1],
\end{equation}
for some parameter $\gamma$ estimated to be $0.71$ on average (e.g., \citealp{WZG04}).
We will use them below.

For a weighting function $w$, the \emph{concave envelope} of $w$, denoted $\overline{w}:[0,1]\rightarrow [0,1]$, will be a useful tool in our analysis. It is defined by
$$
\overline{w}(t) = \inf\{g(t): g\ge w \mbox{~on [0,1] and $g$ is concave}\}.
$$
\noindent It is concave. Because $\overline{w}\ge w$, we have $\rho_{\overline w}(Y)\ge \rho_w(Y)$ for all $Y\in \X$.
For a cavex weighting function $w$, its concave envelope $\overline w$ coincides with $w$ on an interval  $[0,\beta_w]\subseteq[0,1]$ (possibly empty) and
it is linear on $[\beta_w,1]$; see panel (b) of Figure \ref{fig:RDU}. The linear part captures the lowest line emanating from $(1,1)$ that dominates $w$, touching but not crossing $w$'s graph.
As we will see in Theorem  \ref{thm:RDU}, enough risk seeking is implied this way to generate global
 risk-seeking-type results. We next state our main assumptions on the homogeneous RDU agents.

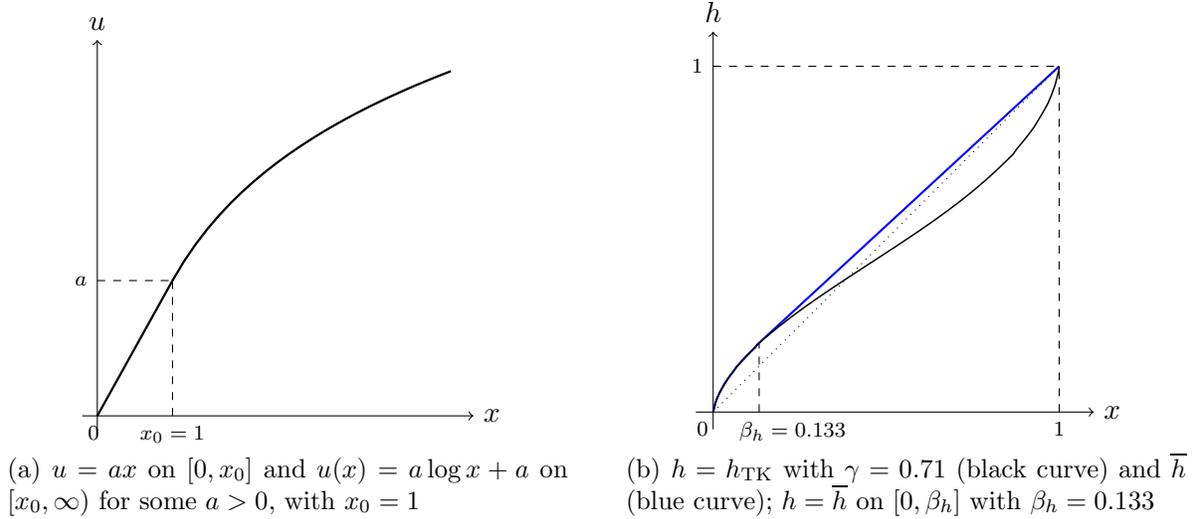
\begin{figure}
   \caption{An example of a utility function (left panel)
    and a weighting function (right panel) satisfying Assumption \ref{ass:RDU} for $n\ge 8$ and condition
 \eqref{eq:tech_con}.
    }
    \label{fig:RDU}
\begin{center}
\begin{subfigure}[b]{0.45\textwidth}
\centering
\begin{tikzpicture}[scale=1]
\draw[->] (-0.2,0) --(5,0) node[right] {$x$};
\draw[->] (0,-0.2) --(0,5) node[above] {$u$};
\draw[ domain=0:1,line width=0.3mm, black] plot ({\x}, {1.8*\x});
\draw[ domain=1:4.7,line width=0.3mm, black] plot ({\x}, {1.8*ln(\x)+1.8});
\draw[dashed] (1,0) -- (1,1.8);
 \draw[dashed] (0,1.8) -- (1,1.8);
\node[left] at (0,1.8) {\scriptsize $a$};
\node[below] at (-0.05,0) {\scriptsize $0$};
\node[below] at (1,0) {\scriptsize $x_0=1$};
\end{tikzpicture}
\caption{$u =a x$  on $[0,x_0]$ and $u(x)= a\log x+a$ on $[x_0, \infty)$ for some $a>0$, with $x_0=1$}
\end{subfigure}
~~~~
\begin{subfigure}[b]{0.45\textwidth}
\centering
\begin{tikzpicture}[scale=4.6]
\draw[->] (-0.05,0) --(1.1,0) node[right] {$x$};
\draw[->] (0,-0.05) --(0,1.1) node[above] {$w$};
\draw[ domain=0:0.133, blue, thick] plot ({\x},
{(\x^0.71)/((\x^0.71 +(1-\x)^0.71)^(1/0.71))});
\draw[domain=0.133:1, blue, thick] plot ({\x},
{(1-(0.133^0.71)/(0.133^0.71
+(1-0.133)^0.71)^(1/0.71))/(1-0.133)*(\x-0.133)
 +(0.133^0.71)/(0.133^0.71 +(1-0.133)^0.71)^(1/0.71))});
\draw[ domain=0:1.00, samples=100, line width=0.2mm, black] plot
({\x}, {(\x^0.71)/((\x^0.71 +(1-\x)^0.71)^(1/0.71))});
\draw[line width=0.2mm, black] (0.998,0.98) --(1,1) ;
\draw[dashed] (1,0) -- (1,1);
\draw[dashed] (0,1) -- (1,1);
\draw[dotted] (0,0) -- (1,1);
\draw[dashed] (0.133,0) -- (0.133,0.19793);
\draw[dashed]   (0.452,0) -- (0.452,0.429);
\node[below] at (-0.03,0) {\scriptsize $0$};

\node[left] at (0,1) {\scriptsize $1$};
\node[below] at (0.133,-0) {\scriptsize $\beta_w$};
\node[below] at (0.452,-0) {\scriptsize $p$};
\node[below] at (1,0) {\scriptsize $1$};
\end{tikzpicture}
\caption{$w=w_{\rm{TK}}$  (black) with $\gamma=0.71$  and $\overline{w}$ (blue); $w=\overline{w}$ on $[0,\beta_w]$ with $\beta_w=0.133$}
\end{subfigure}
\end{center}
\end{figure}

\begin{myassump}{H-RDU}\label{ass:RDU}
Each agent $i \in [n]$ is an RDU maximizer with a concave and increasing utility function $u$ on $\R_+$ that is linear on $[0,x_0]$ with $u(0)=0$ and $u(x_0)>0$, and a cavex weighting function $w$ with $w=\overline{w}$ for $t\in [0,1/n]$.
The domain of allocations is $\X=L^\infty_+$. The total payoff $X\in \X$ satisfies $\p(X>0)>0$.
\end{myassump}

\noindent Figure \ref{fig:RDU} depicts one pair $(u,w)$ that satisfies Assumption \ref{ass:RDU}. The condition that $w= \overline{w}$  on $[0,1/n]$ is equivalent to $n\ge 1/\beta_w$, which is easily satisfied for large $n$. Using $\gamma=0.71$ in \eqref{eq:TK},  we have  $w= \overline{w}$ for $t \in [0, 0.133]$ as depicted in panel (b) of Figure \ref{fig:RDU}.
Here, $n \ge 8$ is sufficient for this condition,
 so that we maintain empirical realism.
It also implies that $w$ is concave on $(0,1/n)$, and we will later use strict concavity of $w$ on $(0,1/n)$ for some statements on strict domination.
The conditions on the set $\X$ and the total wealth $X$ are the same as in Assumption \ref{ass:EURS}.

The next result shows that previous results about Pareto optimality for (entirely)
 risk-seeking EU agents can be extended to RDU agents when they are sufficiently risk seeking for small probabilities. We will compare a jackpot allocation $X\mathbf J$
with the proportional allocation $\overline{\mathbf X}:=(X/n,\dots, X/n)$. The following condition, explained after the next theorem, is used in Statement (iii) there:
\begin{align}
    \label{eq:tech_con}
\limsup_{t\downarrow 0} \frac{w(t/n)}{w(t)}<1,~~w(1/n)<1,\mbox{~~and~~}  \lim_{x\to \infty} \frac{u(x/n)}{u(x)} =1.
\end{align}
In \eqref{eq:tech_con}, the conditions on $w$ hold for  $w_{\rm TK}$ in \eqref{eq:TK}, and the condition on $u$ holds for concave utility functions $u$ that are exponential, or logarithmic on $[z_0,\infty)$ for some $z_0>0$ (see panel (a) of Figure \ref{fig:RDU}), as well as for any bounded utility functions.\footnote{Some
extensions of Theorem \ref{thm:RDU} are presented in Appendix \ref{app:extensions}. In particular, the conclusions hold for utility functions $u$ such that $x\mapsto u(x)/x$ increases on $[0,x_0]$, which is weaker than assuming either linearity or convexity on $[0,x_0]$.}
In all results considered below, probability weighting induces risk seeking but utility induces risk aversion. If the effect of probability weighting is stronger then we get jackpot allocations similarly as in preceding results, but if the effect of utility is stronger then we do not.
 The following theorem presents two positive and then two negative results.

\begin{theorem}[RDU]\label{thm:RDU}
    Suppose that Assumptions \ref{ass:ER} and \ref{ass:RDU} hold.
 Let the jackpot vector
  $\mathbf J\in \mathbb J_n$ be independent of $X$
 satisfying $\E[\mathbf J]=\mathbf 1/n$.
  \begin{enumerate}[(i)]
      \item If  $X\le x_0$, then the jackpot allocation  $X\mathbf J$ is Pareto optimal and
 sum-optimal.
      \item
  If $X\le x_0$, $n\ge 2$, and $w$ is strictly concave on $(0,1/n)$, then $X\mathbf J$ strictly dominates $\overline{\mathbf X}$.
\item  If \eqref{eq:tech_con} holds, then there exists $y_0>0$ such that for $X\ge y_0$, $\overline{\mathbf X}$ strictly dominates $X\mathbf J$.

      \item  If $X$ is a positive constant, $u$ is
strictly increasing and differentiable on $\R_+$, and $w$ is strictly concave on $(0,1/n)$, then
 $\overline{\mathbf X}-\epsilon\mathbf 1 + n\epsilon \mathbf J$
strictly dominates $\overline{\mathbf X}$  for $\epsilon>0$ small enough.

  \end{enumerate}
\end{theorem}

\noindent Statement (i) specifies conditions under which the
 risk-seeking components of RDU prevail for optimal risk sharing, so that essentially the same conclusions hold as in Theorems \ref{thm:CT_improvement} and \ref{thm:Pareto}, with Pareto optimality of some jackpot allocations, maximizing risks. Three points lead to this implication. First, whereas concave utility enhances risk aversion, we are now in a domain where utility is linear, not contributing any risk aversion. Second, there are enough agents to divide the risk over to ensure that the probability for each agent can be pushed below $\beta_w$, where $w$ generates risk seeking. Third, the conditions on $\overline{w}$ then ensure that risk seeking also prevails globally. Statement (ii) reinforces Statement (i).

 Statement (iii)  specifies conditions under which, for large enough outcomes, the
 risk-averse implications of concave utility prevail over the
 risk-seeking implications of $w$ near $p=0$ so much that the jackpot allocation of Statement (i) can no more be Pareto optimal, being dominated by the safer proportional allocation. The latter allocation can even be optimal if the utility function has a satiation point (Example \ref{ex:RDU} below).

 \cite{CCZ20} found risk aversion for very large stakes, supporting the conditions of Statement (iii).
Using an arrangement with state lotteries in China, they could implement real incentives with extremely small probabilities,
 $10^{-5}$ and extremely large winning amounts, \${$10^6$}. They found that risk aversion then prevails, i.e., concavity of utility then dominates concavity of probability weighting, as in Statement (iii).
Statement (iv) specifies opposite conditions. For small enough outcome variations, the
risk-seeking implications of $w$ near $p=0$ (at $p = 1/n$) prevail over the
 risk-averse implications of concave utility so much that the safe proportional allocation can no more be Pareto optimal. Engaging in mutual
 zero-sum games of paying a sure $\epsilon$ in return for a risky $n \epsilon$ with probability $1/n$
(counter-monotonic, but not a jackpot) is surely appreciated for small $\epsilon$.

 \citet[p.~153--154]{M52} speculated that people choose risks for small stakes but safety for large stakes, but sought to accommodate it using EU. Theorem \ref{thm:RDU} has given an empirically realistic basis to his speculations using nonEU
(and adding a market context),
 confirmed by the wide existence of lotteries, sports betting, and casinos.

Unlike in Theorem \ref{thm:Pareto}, we are not able to show that all
 Pareto-optimal allocations are jackpot allocations or that they are
 $\blambda$-optimal. This is due to the
 non-linearity and nonconvexity in probabilistic mixtures of the RDU functionals that we consider.
 We do not know whether the UPS is convex.

Finally, we present two elementary cases where competitive equilibria exist.

\begin{example}
  \label{ex:RDU}
 Assume that $ X\ge n y_0$ and that $u$ is constant on $[y_0,\infty)$ for some $y_0>x_0$; that is, $u$ has a satiation point. Then $\overline{\mathbf X}$ is Pareto optimal because it yields the maximum utility for every agent. It readily gives competitive equilibria, for instance for initial endowment $\overline{\mathbf X}$ with any price vector. The allocation $\overline{\mathbf X}$ strictly dominates any jackpot allocation when $w(1/n)<1$ (details are given in Section \ref{sec:ex-sati}).
\end{example}

 \begin{proposition} \label{thm:CE.RDU}
Suppose that Assumptions \ref{ass:ER} and \ref{ass:RDU} hold,
 $X=x$ is a constant in $(0,x_0]$, and the vector of initial endowments $\boldsymbol\xi=(\xi_1,\dots,\xi_n)\in \mathbb A_n(x)$ satisfies $\E[\xi_i]\le x\beta_w $ for all $i$.
Then $(x\mathbf J,\p)$ is a competitive equilibrium, for any $\mathbf J\in \mathbb J_n$ satisfying  $\E[x\mathbf J]=\E[\boldsymbol\xi]$.
 \end{proposition}

\noindent It is not surprising that $\p$ is the equilibrium price in Proposition \ref{thm:CE.RDU}, as the total endowment is constant across all states.
Although we assume that there is no aggregate uncertainty, the initial endowments in Proposition \ref{thm:CE.RDU} can be random.

Recall that  $\beta_w\ge 1/n$ under Assumption \ref{ass:RDU}. Hence, $(\xi_1,\dots,\xi_n)=(X/n,\dots,X/n)$ satisfies the condition in Proposition \ref{thm:CE.RDU}, and the corresponding equilibrium allocation is $X\mathbf J$ with $\E[\mathbf J]=\mathbf 1/n$ as in Theorem \ref{thm:RDU}.
The condition $\E[\xi_i]\le x\beta_w $ means that
each agent's initial endowment is not too large compared to the total endowment $x$.
Intuitively, agents tend to gamble for
 small-probability gains, which is the case when they have similar initial endowments. On the other hand, if one agent has a relatively large initial endowment, say $0.9x$, then it is no longer optimal for this agent to gamble, because the utility of $0.9x$ is $0.9  u(x)$, and
the utility of $x\id_{A_1}$ with $\p(A_1)=0.9$ is $w(0.9)u(x)$. Typically $w(0.9)<0.9$ (see Figure \ref{fig:RDU}), generating risk aversion for
 large-probability gains.
In this case, $(x\mathbf J,\p)$ is not a competitive equilibrium, and we do not know whether equilibria exist.

  Extending our results to ambiguity (unknown probabilities) is a topic for future research. Some papers did consider the
  case of ambiguity aversion
(\citealp{HMRT26}, and references therein). Here, as for risk, to achieve empirical realism, one has to allow for partial ambiguity seeking, even within individual agents
(\citealp{TV15}).

The results of this section show that no simple general result holds, with different factors going in opposite directions and prevailing factors depending on contexts. We did specify several contexts where conclusions could be drawn, each based on empirically documented assumptions. Those conclusions went in opposite directions. Thus, empirically realistic risk attitudes generate a more refined pattern than reported before in the literature, and more complex analyses are needed. Given the importance of risk sharing and empirical realism, such analyses are an important topic for future research.

\section{Conclusion}

\noindent The results in this paper lay a foundation for studying risk sharing with empirically realistic risk attitudes.
We summarize what we know and what we do not. First, the setting of EU agents reveals a mix of insights and challenges.
 Pareto-optimal allocations and equilibrium allocations for
 risk-seeking agents (which is more realistic for losses than the commonly assumed risk aversion) are fully characterized, and the corresponding welfare theorems are established (Theorems \ref{thm:Pareto} and \ref{thm:1WT}).
 When there are subgroups of agents with different risk attitudes, conditions on
 Pareto-optimal allocations and equilibrium allocations within and across subgroups are obtained (Theorems \ref{thm:EU} and
 \ref{thm:CE_only}).
The case of homogeneous
 risk-seeking agents is better understood, as we can fully describe the competitive equilibria (with a bound needed for risk seeking added) with a unique equilibrium price (Theorem \ref{thm:CE}). Competitive equilibria may or may not exist when some agents are risk seeking and some are risk averse. A characterization is obtained for
 two-point aggregate payoffs (Theorem \ref{thm:CEEUM} in Appendix \ref{app:mixed}), and the more general case is not clear. For
 risk-seeking agents and a given initial endowment, the existence of competitive equilibria is not proved in general (with some results discussed in Appendix \ref{app:CE}) although we suspect that they always exist.

To achieve full empirical realism, we consider RDU agents, where patterns are more refined than reported before in theoretical studies. For the case of homogeneous agents, we obtain
 Pareto-optimal allocations for
 moderate
 payoffs (Theorem \ref{thm:RDU}), but we are not able to offer a complete characterization of
 Pareto-optimal allocations.
We provide two
 cases of competitive equilibria for RDU agents. General results on competitive equilibria for heterogeneous RDU agents are topics for future research, as are extensions to ambiguity, but they will involve nontrivial new analyses. We hope that our paper will inspire future studies on optimal risk sharing under behaviorally realistic models, with further insights into the refined patterns that empirical reality presents.

\newpage
\begin{appendix}

\bigskip
\begin{center}
  \Large {Appendices}
\end{center}

We throughout follow the convention, also followed in the main text, that many claims are implicitly assumed to hold almost surely, that is, with probability one.

\section{Four lemmas} \label{app:lemmas}

\noindent We first provide four technical lemmas that are useful in the proofs of our main results. Denote by $\mathcal L$ the set of all
 random variables $X$ such that there exists a standard uniform random variable $U$ independent of $X$; that is, Assumption \ref{ass:ER} holds for $X$, an element of $\mathcal L$.

\begin{lemma}
\label{lem:existence}
Let $X_1\in \mathcal L$ have distribution $F_1$.
For any distribution  $F$ on $\R^{n}$ with the first marginal $F_1$,
there exists a random vector $(X_2,\dots,X_n)$
such that $(X_1, X_2,\dots,X_n)$ has distribution $F$.
\end{lemma}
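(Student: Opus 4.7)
The plan is to use a noise outsourcing / transfer argument: since $X_1 \in \mathcal L$, there is a uniform $U$ on $[0,1]$ independent of $X_1$, and this unit of extra randomness is enough to realize any prescribed joint distribution whose first marginal is $F_1$.

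First I would disintegrate $F$. Because $\R^{n-1}$ is a standard Borel space, there exists a regular conditional probability kernel $K:\R \times \mathcal B(\R^{n-1})\to[0,1]$ such that
\[
F(B_1\times B_{2:n}) = \int_{B_1} K(x_1,B_{2:n})\,\d F_1(x_1)
\]
for every measurable rectangle; this is the standard existence theorem for regular conditional distributions (e.g.\ Kallenberg, \emph{Foundations of Modern Probability}, Thm.\ 6.3). Thus $F = F_1 \otimes K$, and the whole construction reduces to producing a random vector in $\R^{n-1}$ whose conditional law given $X_1=x_1$ is $K(x_1,\cdot)$.

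Second, I would build a jointly measurable map $g:\R\times[0,1]\to\R^{n-1}$ such that, for every $x_1$, $g(x_1,U)$ has distribution $K(x_1,\cdot)$ when $U$ is uniform on $[0,1]$. One concrete recipe is the Rosenblatt / iterated quantile transform applied to $K(x_1,\cdot)$: let $K_j(x_1,\cdot\,|\,y_2,\dots,y_{j-1})$ be the conditional distribution of the $j$-th coordinate given the earlier ones under $K(x_1,\cdot)$, define $g_j(x_1,u_2,\dots,u_n)$ coordinate-by-coordinate through the corresponding conditional quantile functions, and finally pipe a single uniform $U$ through $n-1$ measurable bijections of $[0,1]$ with $[0,1]^{n-1}$ (keeping the uniform law) to produce $(u_2,\dots,u_n)$. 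Joint measurability of each conditional quantile in $(x_1,y_2,\dots,y_{j-1},u_j)$ follows because the cumulative distribution functions are measurable in the parameter and increasing in the argument, and generalized inverses preserve measurability. (Alternatively, one may simply invoke the noise outsourcing lemma, e.g.\ Kallenberg Thm.\ 6.10, which asserts directly that such a $g$ exists.)

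Third, with $U$ a uniform random variable independent of $X_1$ guaranteed by Assumption~\ref{ass:ER} (that is, by $X_1\in\mathcal L$), I would set
\[
(X_2,\dots,X_n) := g(X_1,U).
\]
To verify $(X_1,X_2,\dots,X_n)\sim F$, fix measurable $B_1\subseteq\R$ and $B_{2:n}\subseteq\R^{n-1}$ and compute
\[
\p\bigl(X_1\in B_1,\ g(X_1,U)\in B_{2:n}\bigr)
= \int_{B_1}\p\bigl(g(x_1,U)\in B_{2:n}\bigr)\,\d F_1(x_1)
= \int_{B_1} K(x_1,B_{2:n})\,\d F_1(x_1),
\]
where the first equality uses independence of $U$ and $X_1$ together with Fubini, and the second uses the defining property of $g$. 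The right-hand side equals $F(B_1\times B_{2:n})$, so the joint law of $(X_1,\dots,X_n)$ agrees with $F$ on all measurable rectangles, hence on $\mathcal B(\R^n)$.

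The main obstacle is the joint measurability of $g$ in $(x_1,u)$; once that is in hand, the rest is a routine Fubini calculation. The Rosenblatt construction handles measurability cleanly because each component function is built from measurably parametrized conditional quantiles, which retain joint measurability. If one prefers to avoid this explicit construction, the identical conclusion is available by directly citing the noise outsourcing lemma on standard Borel spaces.
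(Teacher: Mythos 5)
Your proposal is correct and follows essentially the same route as the paper: disintegrate $F$ into its first marginal $F_1$ and a regular conditional distribution kernel, then use the uniform random variable independent of $X_1$ (available since $X_1\in\mathcal L$) to realize that kernel conditionally on $X_1$. The only difference is that you spell out explicitly (via the Rosenblatt/quantile-transform or noise-outsourcing construction and the Fubini verification) the step the paper compresses into ``which is possible because $X_1\in\mathcal L$.''
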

\begin{proof}
Let $(Y_1,\dots,Y_n)$ have distribution $F$ and let $K_x$ be a  regular conditional distribution of $(Y_2,\dots,Y_n)$ given $Y_1=x$ for each $x\in \R$.
We then let the random vector $(X_2,\dots,X_n)$ conditional on $X_1=x$ follow from $K_x$ for each $x\in \R$, which is possible because  $X_1\in \mathcal L$. This implies that $(X_1,\dots,X_n)$ is identically distributed to $(Y_1,\dots,Y_n)$.
\end{proof}

\begin{lemma}
\label{lem:construct}
For any $X\in \mathcal L$ and $(X_1,\dots,X_n)\in \X^n$, there exist
$(X'_1,\dots,X'_n)\in \X^n$ and a standard uniform random variable $U$
such that
$(X,X_1',\dots,X'_n
)\laweq (X,X_1,\dots,X_n)$
and $U$ is independent of $(X,X_1',\dots,X'_n
)$.
\end{lemma}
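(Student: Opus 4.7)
The plan is to split the uniform witnessing $X\in\mathcal L$ into two independent uniforms, use one of them to realize a joint copy of $(X_1,\dots,X_n)$ via Lemma \ref{lem:existence}, and reserve the other as the external randomizer $U$.

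Since $X\in\mathcal L$, fix a standard uniform $V$ on $(\Omega,\mathcal F,\p)$ with $V$ independent of $X$. Using the binary expansion of $V$ (splitting odd- and even-indexed bits), produce two $\sigma(V)$-measurable random variables $V_1,V_2$ that are mutually independent and each standard uniform. Because $(V_1,V_2)$ is $\sigma(V)$-measurable and $\sigma(V)$ is independent of $\sigma(X)$, the pair $(V_1,V_2)$ is jointly independent of $X$; in particular, both $V_1$ and $V_2$ are independent of $X$.

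Now apply Lemma \ref{lem:existence} with target distribution $F$ equal to the law of $(X,X_1,\dots,X_n)$ and first marginal given by $X$. Unpacking that proof, the regular conditional distribution $K_x$ of $(X_1,\dots,X_n)$ given $X=x$ admits (by a standard measurable parametrization of Borel probability measures on $\R^n$, e.g.\ the iterated quantile/Rosenblatt transform) a jointly Borel map $\Phi:\R\times[0,1]\to\R^n$ such that $\Phi(x,U)$ has distribution $K_x$ whenever $U$ is standard uniform. Define $(X_1',\dots,X_n'):=\Phi(X,V_1)$; since $V_1$ is uniform and independent of $X$, conditioning on $X=x$ gives $(X_1',\dots,X_n')\mid\{X=x\}\sim K_x$, so $(X,X_1',\dots,X_n')\laweq(X,X_1,\dots,X_n)$.

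Finally, set $U:=V_2$. Since $(X_1',\dots,X_n')$ is $\sigma(X,V_1)$-measurable, it suffices to show that $V_2$ is independent of $(X,V_1)$. For any bounded Borel $f,g$,
\begin{equation*}
\E\bigl[f(V_2)\,g(X,V_1)\mid X\bigr]=\int_0^1\!\!\int_0^1 f(v_2)\,g(X,v_1)\,\d v_1\,\d v_2=\E[f(V_2)]\cdot\E\bigl[g(X,V_1)\mid X\bigr],
\end{equation*}
where the first equality uses that $(V_1,V_2)\perp X$ with joint law uniform on $[0,1]^2$, and the second uses $V_1\perp V_2$. Taking expectations gives $\E[f(V_2)g(X,V_1)]=\E[f(V_2)]\,\E[g(X,V_1)]$, hence $V_2\perp(X,V_1)$ and therefore $U\perp(X,X_1',\dots,X_n')$. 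The only mildly technical step is realizing the Lemma \ref{lem:existence} copy as a measurable function of $X$ and $V_1$; once the measurable parametrization of $x\mapsto K_x$ is in hand, the remainder is routine independence bookkeeping.
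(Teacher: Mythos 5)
Your proof is correct, but it takes a different route from the paper's. The paper treats Lemma \ref{lem:existence} as a black box and simply enlarges the target law: it defines $H'$ on $\R^{n+2}$ as the product of the law $H$ of $(X,X_1,\dots,X_n)$ with the uniform distribution on $[0,1]$, and applies Lemma \ref{lem:existence} (with first marginal the law of $X\in\mathcal L$) to produce the whole vector $(X,X_1',\dots,X_n',U)\sim H'$ in one stroke; independence of $U$ from $(X,X_1',\dots,X_n')$ is then built into $H'$ by construction. You instead go inside the proof of Lemma \ref{lem:existence}: you bit-split the uniform $V$ witnessing $X\in\mathcal L$ into independent uniforms $V_1,V_2$, realize the conditional law $K_x$ through a jointly measurable parametrization $\Phi$ (a transfer/noise-outsourcing step via iterated quantile transforms), set $(X_1',\dots,X_n')=\Phi(X,V_1)$, and reserve $U=V_2$; your independence bookkeeping for $V_2\perp(X,V_1)$ is correct, and measurability of $\Phi$ is a standard fact, so there is no gap. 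The trade-off: your construction is more explicit --- it exhibits the copy as a measurable function of $(X,V_1)$ and pinpoints exactly where the reserved randomness comes from, which can be useful elsewhere --- but it needs the measurable-parametrization technicality and the bit-splitting trick, whereas the paper's augmented-distribution argument is shorter and reuses Lemma \ref{lem:existence} verbatim without reopening its proof.
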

\begin{proof}
Let $H: \R^{n+1} \to \R$ be the joint distribution of $(X, X_1, \dots, X_n)$. Define $H': \R^{n+2} \to \R$ as $H'(x,x_1, \dots, x_n, u)=H(x, x_1, \dots, x_n)u$. It is clear that $H'$ is the joint distribution on $
\R^{n+2}$.
By Lemma \ref{lem:existence}, we can find a vector $(X, X'_1, \dots, X'_n, U) \sim H'$. Hence, we have $(X,X_1',\dots,X'_n
)\sim H$, $U \sim \mathrm{U}[0,1]$ and $U$ is independent of $(X, X'_1, \dots, X'_n)$.
\end{proof}


\begin{lemma}\label{lem:max}
Assume that the utility functions $u_1,\dots,u_n$ and
the weighting functions $w_1, \dots, w_n $
are continuous. For $X\in L_+^\infty\cap \mathcal L$ and $\X=L^\infty_+$,
 the set
 \begin{align}
 \label{eq:closed}
\mathrm{UPS}(X)=\Big\{\big(\rho_{w_1}(u_1(X_1)),\dots, \rho_{w_n}(u_n(X_n))\big):(X_1,\dots,X_n)\in \mathbb A_n(X) \Big\}
\end{align}
is compact.
In particular, for $(\lambda_1,\dots,\lambda_n)\in \Delta_n$,
the maximum of
$\sum_{i=1}^n\lambda_i \rho_{w_i} (u_i(X_i))$ over ${(X_1,\dots, X_n)\in \mathbb A_n(X)}$
is attainable. \end{lemma}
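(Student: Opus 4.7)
The plan is to prove compactness of $\mathrm{UPS}(X) \subseteq \R^n$ by exhibiting it as the continuous image of a weakly compact set of joint distributions, after which the existence of a maximizer for any linear functional is immediate. Since the preference functionals $Y \mapsto \rho_{w_i}(u_i(Y))$ depend only on the distribution of $Y$, and the constraint $\sum_i X_i = X$ only depends on the joint law of $(X, X_1, \dots, X_n)$, this reduction to distributions is natural. Boundedness is clear: setting $M = \|X\|_\infty$, the constraints $0 \le X_i \le X$ force $u_i(X_i) \in [0, u_i(M)]$, hence $\rho_{w_i}(u_i(X_i)) \in [0, u_i(M)]$.

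For compactness, I would consider the set $\mathcal P$ of Borel probability measures $\mu$ on $[0,M]^{n+1}$ such that (a) the first marginal of $\mu$ equals the law of $X$, and (b) $\mu\bigl(\{(x, x_1, \dots, x_n) : x_1 + \dots + x_n = x\}\bigr) = 1$. Because $[0,M]^{n+1}$ is compact, $\mathcal P$ is tight, and both defining conditions are preserved under weak convergence: marginal convergence handles (a), and the Portmanteau theorem applied to the closed affine set $\{x_1 + \dots + x_n = x\}$ handles (b). Hence $\mathcal P$ is weakly closed, and therefore weakly compact by Prokhorov's theorem. By Lemma \ref{lem:existence}, since $X \in \mathcal L$, every $\mu \in \mathcal P$ can be realized as the joint law of $(X, X_1, \dots, X_n)$ for some allocation $(X_1, \dots, X_n) \in \mathbb A_n(X)$, so $\mathrm{UPS}(X)$ coincides with the image of $\mathcal P$ under $\mu \mapsto \bigl(\rho_{w_1}(u_1(X_1^\mu)), \dots, \rho_{w_n}(u_n(X_n^\mu))\bigr)$, where $X_i^\mu$ denotes the $i$-th coordinate projection.

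The key step, and the main obstacle, is proving that this map is continuous. Given $\mu_k \to \mu$ weakly in $\mathcal P$ and the continuity of $u_i$, the continuous mapping theorem yields $u_i(X_i^{\mu_k}) \to u_i(X_i^\mu)$ in distribution. Writing
\begin{equation*}
\rho_{w_i}(u_i(X_i^\mu)) = \int_0^{u_i(M)} w_i\bigl(1 - G_i^\mu(t)\bigr)\, \d t,
\end{equation*}
where $G_i^\mu$ is the distribution function of $u_i(X_i^\mu)$, we have $G_i^{\mu_k}(t) \to G_i^\mu(t)$ at every continuity point $t$ of $G_i^\mu$, i.e., for Lebesgue-almost every $t \in [0, u_i(M)]$. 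Continuity of $w_i$ then gives pointwise a.e.\ convergence of the integrands, and since they are uniformly bounded by $1$ on a bounded interval, dominated convergence delivers $\rho_{w_i}(u_i(X_i^{\mu_k})) \to \rho_{w_i}(u_i(X_i^\mu))$.

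Thus $\mathrm{UPS}(X)$ is the continuous image of the weakly compact set $\mathcal P$ in $\R^n$, hence compact. The final assertion follows because for any $(\lambda_1, \dots, \lambda_n) \in \Delta_n$ the map $(v_1, \dots, v_n) \mapsto \sum_{i=1}^n \lambda_i v_i$ is continuous on $\R^n$, and a continuous function on a compact set attains its maximum. Applying Lemma \ref{lem:existence} one final time realizes the optimizer as a bona fide allocation in $\mathbb A_n(X)$.
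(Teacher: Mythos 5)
Your proof is correct and follows essentially the same route as the paper's: both reduce the problem to weak compactness of the admissible joint laws on the bounded cube $[0,M]^{n+1}$ (tightness/Prokhorov), use Lemma \ref{lem:existence} with $X\in\mathcal L$ to realize a limiting law as a genuine allocation in $\mathbb A_n(X)$, and conclude via continuity of $Y\mapsto\rho_{w_i}(u_i(Y))$ under weak convergence, with attainability of the linear maximum then immediate. The only differences are cosmetic: you package the argument as the continuous image of a weakly compact set of measures rather than as sequential closedness plus boundedness, and you prove the weak-convergence continuity of the Choquet functional directly (distribution-function representation plus dominated convergence) where the paper cites Theorem 4 of Wang, Wei and Willmot (2020).
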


\begin{proof}
Let $m$ be the essential supremum of $X$.
Since the distributions of $(X,X_1,\dots,X_n)$ for $(X_1,\dots, X_n)\in \mathbb A_n(X) $ are all supported on the bounded region $[0,m]^{n+1}$, any sequence of such distributions has a weak limit.
Take any sequence of points $\mathbf v_1,\mathbf v_2,\dots$ in $\mathrm{UPS}(X)$ that converges to $\mathbf v_0\in \R^n$, and let $\mathbf X^{(1)},\mathbf X^{(2)},\dots$ be the random vectors in $\mathbb A_n(X)$ with the utility vectors $\mathbf v_1,\mathbf v_2,\dots$, respectively.
Let $F$ be a weak limit of the sequence of distributions of the random vectors $(X,\mathbf X^{(1)}),(X,\mathbf X^{(2)}),\dots$, which we argued above to exist.
Note that the first marginal of $F$ is the distribution of $X$.
By Lemma \ref{lem:existence}, there exists a random vector $\mathbf X'=(X_1',\dots,X_n')$
 such that  $(X,\mathbf X')$ has distribution $F$.
 Note that $ \mathbf 1 \cdot
 \mathbf X^{(n)}-X\to \mathbf 1 \cdot \mathbf X'-X$ in distribution, and hence $\mathbf X'\in \mathbb A_n(X)$.
Moreover, for each $i$, since  $u_i$ on $[0,\infty)$ and $w_i$ on $[0,1]$
are continuous,
the function $Y\mapsto\rho_{w_i}(u_i(Y))$ is continuous with respect to weak convergence by Theorem 4 of \cite{WWW20b}. Therefore,
$\mathbf v_0= \lim_{j\to\infty} \mathbf v_j = (\rho_{w_1}(u_1(X'_1)),\dots,\rho_{w_n}(u_n(X'_n)))$, which shows $\mathbf v_0\in \mathrm{UPS}(X).$
The boundedness statement follows by noting that  $\rho_{w_i}(u_i(X_i)) \le \rho_{w_i}(u_i(X))$ due to monotonicity, which is finite for each $i$.
The last statement on maximization follows from the compactness of $\mathrm{UPS}(X)$.
\end{proof}

Lemma \ref{lem:max} implies in particular that the maxima
$$\max_{(X_1,\dots, X_n)\in \mathbb A_n(X)}\sum_{i=1}^n\lambda_i
 \E[u_i(X_i)]
 \mbox{~~and~~}
 \max_{(X_1,\dots, X_n)\in \mathbb A_n(X)}\sum_{i=1}^n\lambda_i
 \rho_w(u(X_i))
 $$
are attainable under Assumptions \ref{ass:ER},  \ref{ass:EU} (for the first max), and \ref{ass:RDU} (for the second max). Moreover, individually rational
Pareto-optimal allocations always exist, as summarized in the next lemma.

\begin{lemma}
\label{lem:IR}
    Suppose that Assumption \ref{ass:ER} hold, and
    either Assumption \ref{ass:EU} or Assumption \ref{ass:RDU} holds.
    For any initial endowments, individually rational
 Pareto-optimal allocations exist.
\end{lemma}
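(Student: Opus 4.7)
The plan is to exploit the compactness of the utility possibility set from Lemma \ref{lem:max}, combined with the observation that individual rationality imposes a closed constraint on the utility vector. I would first define the set of individually rational allocations
\[
\mathcal A = \{\mathbf X \in \mathbb A_n(X) : \mathcal U_i(X_i) \ge \mathcal U_i(\xi_i) \text{ for all } i \in [n]\},
\]
which is nonempty because $\boldsymbol \xi \in \mathcal A$. Its image under the utility map,
\[
V = \left\{\left(\mathcal U_1(X_1), \dots, \mathcal U_n(X_n)\right) : \mathbf X \in \mathcal A\right\} = \mathrm{UPS}(X) \cap \prod_{i=1}^n [\mathcal U_i(\xi_i), \infty),
\]
is the intersection of the compact set $\mathrm{UPS}(X)$ (compactness follows from Lemma \ref{lem:max}, whose continuity hypotheses hold under both Assumptions \ref{ass:EU} and \ref{ass:RDU}) with a closed product of half-lines, so $V$ is compact and nonempty.

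Next I would consider the continuous linear functional $\mathbf v \mapsto \sum_{i=1}^n v_i$ on $V$. Because $V$ is compact, this functional attains its maximum at some point $\mathbf v^* \in V$, and I would choose any $\mathbf X^* \in \mathcal A$ whose utility vector equals $\mathbf v^*$. By construction, $\mathbf X^*$ is individually rational.

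The remaining step is to verify that $\mathbf X^*$ is Pareto optimal in the full set $\mathbb A_n(X)$, not merely within $\mathcal A$. Suppose, for contradiction, that some $\mathbf Y \in \mathbb A_n(X)$ strictly dominates $\mathbf X^*$: then $\mathcal U_i(Y_i) \ge \mathcal U_i(X_i^*) \ge \mathcal U_i(\xi_i)$ for every $i$, with strict inequality for at least one index. The first chain of inequalities places $\mathbf Y$ in $\mathcal A$, while the strict inequality yields $\sum_i \mathcal U_i(Y_i) > \sum_i \mathcal U_i(X_i^*) = \sum_i v_i^*$, contradicting the maximality of $\mathbf v^*$ on $V$.

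The only mild obstacle is confirming that Lemma \ref{lem:max} applies in both settings. Under Assumption \ref{ass:EU}, the effective weighting function is the identity and utilities are continuous by hypothesis. Under Assumption \ref{ass:RDU}, the cavex weighting function is continuous on $[0,1]$ and the concave utility is continuous on $\R_+$; thus the hypotheses of Lemma \ref{lem:max} are met and $\mathrm{UPS}(X)$ is compact in both cases, so the argument above goes through without modification.
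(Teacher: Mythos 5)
Your proposal is correct and follows essentially the same route as the paper: both use the compactness of $\mathrm{UPS}(X)$ from Lemma \ref{lem:max}, intersect it with the individual-rationality constraints $v_i \ge \mathcal U_i(\xi_i)$, and maximize $\sum_{i=1}^n v_i$ over that compact set to produce an individually rational Pareto-optimal allocation. Your explicit verification of Pareto optimality via the strict-domination contradiction, and the check that Lemma \ref{lem:max} applies under both Assumptions \ref{ass:EU} and \ref{ass:RDU}, simply spell out steps the paper leaves implicit.
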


\begin{proof}
Let $(\xi_1,\dots,\xi_n)$ be the initial endowment vector.
Recall that each $\mathcal U_i$ is an EU preference functional under Assumption \ref{ass:EU}
and an RDU preference function under Assumption \ref{ass:RDU}.  By Lemma \ref{lem:max}, the set $\mathrm{UPS}(X)$ is compact.
Therefore, each set
$$
\mathrm{IR}(X):=\{(v_1,\dots,v_n)\in \mathrm{UPS}(X): v_i\ge \mathcal U_i(\xi_i) \mbox{ for all } i\in [n]\}
$$
is also compact. Maximizing $\sum_{i=1}^n v_i$ over
$\mathrm{IR}(X)$ yields a point in $\mathrm{UPS}(X)$ attained by an individually rational
 Pareto-optimal allocation.
\end{proof}

\section{Proofs} \label{app:proof}
\subsection{Proofs of results in Section \ref{sec:countermon}}

\noindent \begin{proof}[Proof of Proposition \ref{pr:3}]
   This result follows by noting that the conditions in
  \eqref{eq:jackpot2}
  are preserved under probabilistic mixing.
\end{proof}

\begin{proof}[Proof of Theorem \ref{thm:CT_improvement}]
In Section
 \ref{sec:ct.improve}, the result is proved when there exists a standard uniform random variable $U$ independent of $(X_1,\dots,X_n)$, that is, under Assumption \ref{ass:ERS}.
We extend this to the case where such $U$ may not exist.
Let
$(X,X_1,\dots,X_n)\laweq (X,X_1',\dots,X_n')$ be as in Lemma \ref{lem:construct}.
There exists a counter-monotonic improvement $(Y_1,\dots,Y_n)$  of $(X_1',\dots,X_n')$.
Note that $\sum_{i=1}^n X_i'=X=\sum_{i=1}^n X_i$.
Moreover, $Y_i\ge_{\rm cx} X_i'$ is equivalent to $Y_i\ge_{\rm cx} X_i$, because $X_i\laweq X_i'$.
Therefore, $(Y_1,\dots,Y_n)$ satisfies all desired conditions for $(X_1,\dots,X_n)$.
\end{proof}

\begin{proof}[Proof of Proposition
 \ref{coro:ct_improve}]
Suppose that $X_1,\dots,X_n$ are bounded from below. There exist constants $m_1,\dots,m_n$ such that $X_1+m_1,\dots,X_n+m_n$ are nonnegative. Write $m=\sum_{i=1}^n m_i$.  By Theorem \ref{thm:CT_improvement}, there exists a jackpot allocation $(Z_1,\dots,Z_n)\in \mathbb A_n(X+m)$  such that
$Z_i\ge_{\rm cx }X_i+m_i$ for each $i$.
It follows that
 $Z_i -m_i\ge_{\rm cx }X_i $ for all $i$.
Hence, the
 counter-monotonic random vector
 $(Z_1-m_1,\dots,Z_n-m_n)$ satisfies the desired conditions in the proposition.
The case that $X_1,\dots,X_n$  are bounded from above is analogous.
\end{proof}
\subsection{Proofs of results in Section \ref{sec:EUPO}}

\noindent Before proving Theorem \ref{thm:Pareto}, we first present a lemma for (ii)$\Leftrightarrow$(iv) in Theorem \ref{thm:Pareto}.
\begin{lemma}\label{lem:sum_op}
Suppose that Assumption \ref{ass:EURS} holds and  $\blambda=(\lambda_1, \dots, \lambda_n) \in \Delta_n$. Then,
$$\max_{(X_1, \dots, X_n) \in \mathbb{A}_n(X)} \sum_{i=1}^n \lambda_i\E[u_i(X_i)]=\E\left[V_\blambda(X)\right].$$
\end{lemma}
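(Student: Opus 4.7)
The plan is to prove both inequalities of the asserted identity, with the upper bound relying on the counter-monotonic improvement theorem and the lower bound being a direct measurable construction that simultaneously witnesses attainment of the maximum.

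For the upper bound $\le$, I would fix any $(X_1,\dots,X_n) \in \mathbb A_n(X)$ and apply Theorem \ref{thm:CT_improvement} (using Assumption \ref{ass:ER}) to obtain a jackpot allocation $(Y_1,\dots,Y_n) = X(\id_{B_1},\dots,\id_{B_n})$ with $(B_1,\dots,B_n) \in \Pi_n$ and $Y_i \ge_{\rm cx} X_i$ for every $i$. Since each $u_i$ is convex (Assumption \ref{ass:EURS} gives strict risk seeking), $\E[u_i(X_i)] \le \E[u_i(Y_i)]$ for all $i$. Because $u_i(0)=0$, we have $u_i(Y_i) = u_i(X\id_{B_i}) = u_i(X)\id_{B_i}$, so pointwise
$$\sum_{i=1}^n \lambda_i u_i(Y_i) = \sum_{i=1}^n \lambda_i u_i(X)\id_{B_i} \le V_\blambda(X) \sum_{i=1}^n \id_{B_i} = V_\blambda(X),$$
using $\lambda_i u_i(X) \le V_\blambda(X)$ and that the $B_i$'s partition $\Omega$. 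Taking expectations yields $\sum_i \lambda_i \E[u_i(X_i)] \le \E[V_\blambda(X)]$.

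For the lower bound $\ge$, I would construct an explicit jackpot allocation achieving the bound via measurable tie-breaking. Since each $u_i$ is continuous, $V_\blambda\circ X$ and each $\lambda_i u_i\circ X$ are measurable; define
\begin{align*}
A_i = \bigl\{\omega : \lambda_i u_i(X(\omega)) = V_\blambda(X(\omega)) \text{ and } \lambda_j u_j(X(\omega)) < V_\blambda(X(\omega)) \text{ for all } j<i \bigr\}
\end{align*}
for $i\in[n]$. Then $(A_1,\dots,A_n)\in \Pi_n$, and $X_i^\ast := X\id_{A_i}$ defines an allocation in $\mathbb A_n(X)\subseteq (L^\infty_+)^n$. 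By the definition of $A_i$ and $u_i(0)=0$,
$$\sum_{i=1}^n \lambda_i u_i(X_i^\ast) = \sum_{i=1}^n \lambda_i u_i(X)\id_{A_i} = V_\blambda(X),$$
so $\sum_i \lambda_i \E[u_i(X_i^\ast)] = \E[V_\blambda(X)]$, which both matches the upper bound and shows the max is attained.

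The only substantive step is the upper bound, which depends crucially on Theorem \ref{thm:CT_improvement}: without the reduction to a jackpot allocation there is no way to dominate $\sum_i \lambda_i u_i(X_i)$ pointwise by $V_\blambda(X)$, because in general $\sum_i \lambda_i u_i(x_i)$ for $x_i\ge 0$ with $\sum_i x_i = X$ need not be bounded by $V_\blambda(X)$. The lower bound construction is a deterministic function of $X$ and needs no external randomization. Combining the two inequalities yields the lemma.
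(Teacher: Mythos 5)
Your lower bound is fine and is essentially the paper's construction, but your upper bound has a genuine gap: you invoke Theorem \ref{thm:CT_improvement}, which requires Assumption \ref{ass:ER}, while the lemma is stated under Assumption \ref{ass:EURS} alone. As written, your argument only proves the lemma under an extra hypothesis of external randomization, so it does not establish the statement as claimed (even though, in the places the lemma is later applied, \ref{ass:ER} happens to hold).

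Moreover, the justification you give for this detour is incorrect: you assert that $\sum_{i=1}^n \lambda_i u_i(x_i)$ with $x_i\ge 0$, $\sum_i x_i=X$, ``need not be bounded by $V_\blambda(X)$,'' but it always is. Under Assumption \ref{ass:EURS} each $\lambda_i u_i$ is convex, so $V_\blambda=\max_i \lambda_i u_i$ is convex with $V_\blambda(0)=0$, hence superadditive on $\R_+$. Therefore, pointwise,
\begin{align*}
\sum_{i=1}^n \lambda_i u_i(X_i) \le \sum_{i=1}^n V_\blambda(X_i) \le V_\blambda\left(\sum_{i=1}^n X_i\right)=V_\blambda(X),
\end{align*}
and taking expectations gives the upper bound for \emph{every} allocation, with no convex-order comparison, no jackpot reduction, and no randomization. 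This is exactly the paper's argument, and it is what makes the lemma valid without Assumption \ref{ass:ER}. To repair your proof, replace the appeal to Theorem \ref{thm:CT_improvement} by this superadditivity step; your tie-breaking construction for the attaining allocation can stay as is.
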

\begin{proof}

Since $\bigcup_{i=1}^n \{\lambda_i u_i(X)=V_\blambda(X)\} =\Omega,$ we can take $(A_1,\dots, A_n)\in \Pi_n$  such that $\lambda_i u_i(X)=V_\blambda(X) $ on $A_i$ for each $i\in[n]$.
We have
\begin{align*}
    \max_{(X_1,\dots, X_n)\in \mathbb A_n(X)}\sum_{i=1}^n\lambda_i\E[u_i(X_i)]
    &\geq \sum_{i=1}^n\E\left[\lambda_iu_i\left(X \id _{A_i}\right)\right] \\&= \sum_{i=1}^n\E\left[\lambda_iu_i(X) \id_{A_i}\right] = \sum_{i=1}^n \E\left[V_\blambda (X)  \id_{A_i}\right] =\E\left[V_\blambda(X)\right].
\end{align*}
Moreover, for any allocation $(X_1, \dots, X_n)\in \mathbb{A}_n(X)$, we have
\begin{align*}
\sum_{i=1}^n \E[\lambda_iu_i(X_i)]\le \sum_{i=1}^n \E[V_\blambda(X_i)]\le \E\left[V_\blambda\left(\sum_{i=1}^n X_i\right)\right]=\E[V_\blambda(X)],
\end{align*}
where the last inequality follows from the superadditivity of $V_\blambda$ because $V_\blambda$ is convex with $V_\blambda(0)=0$.
Combining the above two inequalities, we get the desired result.
\end{proof}

\begin{proof}[Proof of Theorem \ref{thm:Pareto}]


 The equivalence (i)$\Leftrightarrow$(ii) follows from Proposition \ref{prop:convex} (proved below) and the equivalence
(ii)$\Leftrightarrow$(iv) follows from Lemma \ref{lem:sum_op}. The implications
(v)$\Rightarrow$(iii)$\Rightarrow$(iv)  are straightforward.   It remains to prove (i)$\Rightarrow$(v).


(i)$\Rightarrow$(v): By Theorem \ref{thm:CT_improvement}, there is a jackpot allocation $\mathbf Y=(Y_1,\dots, Y_n)\in \mathbb A_n(X)$ such that $Y_i \geq_{\rm cx} X_i $ holds for all $i$.
As $u_i$ is strictly convex, we have $\E[u_i(Y_i)]= \E[u_i(X_i)]$ by Pareto optimality of $(X_1, \dots, X_n)$. By \citet[Theorem 3.A.43]{SS07}, we obtain  $Y_i\laweq X_i$ for each $i$.
Note that since $\mathbf Y$ is a jackpot allocation, we have $\sum_{i=1}^n\p(Y_i>0) = \p(X>0)$.
This and $Y_i\laweq X_i$ for each $i$ imply
$\sum_{i=1}^n\p(X_i>0) = \p(X>0)$,
and in particular $X_iX_j=0$ for $i\ne j$. It follows from
  \eqref{eq:jackpot2} that $\mathbf X$ is a jackpot allocation.
 For the remaining  statement, note that 
$\lambda_i u_i(X)\id_{A_i}\le V_{\blambda}(X)\id_{A_i}$ and 
$ \E[\sum_{i=1}^n  \lambda_i u_i(X)\id_{A_i}]= \E[\sum_{i=1}^n V_{\blambda}(X)\id_{A_i}]$
as implied by (iv). Therefore 
$\lambda_i u_i(X)\id_{A_i}= V_{\blambda}(X)\id_{A_i}$ holds.  
\end{proof}

\begin{proof}[Proof of Proposition \ref{prop:PO_jackpot}]
Suppose that a jackpot allocation $x\mathbf J$ for $\mathbf J\in\mathbb J_n$ is strictly dominated by an allocation $\mathbf Y$.
By Theorem \ref{thm:CT_improvement}, $\mathbf Y$ is dominated by another
jackpot allocation $x\mathbf J'$ with $\mathbf J'\in\mathbb J_n$.
The strict domination of $x\mathbf J'$ over $x\mathbf J$ implies
$\E[x\mathbf J']  \ge \E[x\mathbf J]$  componentwise with strict inequality for at least one component. This is not possible because both $\mathbf J$ and $\mathbf J'$ have components summing to $1$.
Hence, $x\mathbf J$ cannot be strictly dominated by $\mathbf Y$, and it is Pareto optimal.
\end{proof}

\begin{proof}[Proof of Proposition \ref{prop:UPF}]

We first show that $\mathrm{UPJ}(X) \subseteq \Delta_n(\E[u(X)])$. Let $(X_1, \dots, X_n)$ be a jackpot allocation of $X$ and observe that by construction we have $\E[u(X_i)]\geq 0$ for all $i$, and
 \begin{align*}
     \sum_{i=1}^n \E[u(X_i)]= \sum_{i=1}^n \E[u(X)\id_{A_i}]=\E[u(X)].
 \end{align*}
Hence, $(\E[u(X_1)], \dots, \E[u(X_n)])\in \Delta_n(\E[u(X)])$. Conversely, let  $(\theta_1, \dots, \theta_n)\in \Delta_n$. By Assumption \ref{ass:ER}, we can take $(A_1, \dots, A_n)\in\Pi_n$  independent of $X$
such that $\p(A_i)=\theta_i$  for all $i$. This gives $\E[u(X   \id_{A_i} )]=\p(A_i) \E[u(X) ] =\theta_i \E[u(X)]$ for all $i$.
Therefore, $\mathrm{UPJ}(X) = \Delta_n(\E[u(X)])$.
Note that every point in $\Delta_n(\E[u(X)]) $ is in the UPF as they are not dominated by any other points in $\mathrm{UPJ}(X)$.
Together with  $\mathrm{UPF}(X)\subseteq \mathrm{UPJ}(X)$ guaranteed by Theorem \ref{thm:Pareto}, we can conclude that $\mathrm{UPF}(X)=\mathrm{UPJ}(X) = \Delta_n(\E[u(X)])$.
\end{proof}


\begin{proof}[Proof of Proposition \ref{prop:convex}]
Statement (i): To show the convexity of $\mathrm{UPS}(X)$, take $\alpha\in (0,1)$ and $\mathbf x,\mathbf y\in \mathrm{UPS}(X)$,
and let $A$ be an event with $\p(A)=\alpha$. We can take allocations $\mathbf X,\mathbf Y\in \mathbb A_n(X)$ with utility vectors $\mathbf x$ and $\mathbf y$, respectively, independent of $A$ by using Lemma~\ref{lem:construct}.
Then $\alpha \mathbf x +
 (1-\alpha) \mathbf y$
 is the utility vector of $\id_A \mathbf X +
 (1-\id_A) \mathbf Y$ since the expected utility is affine on probabilistic mixtures. This shows that $\mathrm{UPS}(X)$ is convex.
 The convexity of $\mathrm{UPJ}(X)$ follows from the same argument with the additional fact that, by Proposition \ref{pr:3}, $\id_A \mathbf X +
 (1-\id_A) \mathbf Y$  is a jackpot allocation when
  $\mathbf X$ and $\mathbf Y$ are.

Statement (ii), the ``if" statement:
 Let $I=\{i\in [n]:\lambda_i=0\}$
 and $J=[n]\setminus I$.
 Take an allocation $(Y_1,\dots,Y_n)$ that dominates $(X_1,\dots,X_n)$.
 If $\E[u_i(Y_i)]>\E[u_i(X_i)]$
 for some $i\in J$, then $\E[\lambda_k u_k(Y_k)]> \E[\lambda_k u_k(X_k)]$, a contradiction.  If $\E[u_i(Y_i)]>\E[u_i(X_i)]$
 for some $i\in I$, then take $j\in J$ and let the allocation
 $(Z_1,\dots,Z_n)$ be given by
 $Z_i=0$, $Z_j=Y_j+Y_i$,
and $Z_k=Y_k$ for $k\ne i,j$. We have $\E[\lambda_k u_k(Z_k)]> \E[\lambda_k u_k(X_k)]$, a contradiction. Therefore,
$\E[u_i(Y_i)]=\E[u_i(X_i)]$.
This shows that $\mathbf X $ is Pareto optimal.

Statement (ii), the ``only if" statement: By Statement (i), the UPS of $X$ is convex. By the
 Hahn--Banach Theorem, for every
 Pareto-optimal allocation $\mathbf X$, there exists $\boldsymbol{\lambda}=(\lambda_1, \dots, \lambda_n) \in \Delta_n$ such that $\mathbf X$ is a
 $\blambda$-optimal allocation.
\end{proof}

\begin{proof}[Proof of Theorem \ref{thm:EU}]
     The case of risk-averse agents follows from  \citet[Theorem 3.1]{CDG12}. 
     Next prove the case of risk-seeking agents. 
     If there exists a uniform random variable on $[0,1]$ independent of $\mathbf X_S$, then the statement follows from (i)$\Rightarrow$(v) in Theorem \ref{thm:Pareto}.
     Otherwise, by   Lemma \ref{lem:construct}, there exists $(\mathbf X_T', \mathbf X_S')$ such that there exists a uniform random variable independent of $(\mathbf X_T', \mathbf X_S')$ and   $(X, \mathbf X_T', \mathbf X_S')$ is identically distributed to $(X,\mathbf X_T, \mathbf X_S)$, implying that $(\mathbf X_S',\mathbf X_T')$ is also Pareto optimal. 
     Now, applying (i)$\Rightarrow$(v) in Theorem \ref{thm:Pareto}, we get that $\mathbf X_T'$ is a jackpot allocation, 
     and by equality in distribution we know that $\mathbf X_T$ is also a jackpot allocation.
\end{proof}

\subsection{Proofs of results in Section \ref{sec:CEWT}}

\noindent \begin{proof}[Proof of Theorem \ref{thm:1WT}]
  The proof of Statement (i) is standard and omitted here. A
self-contained proof is in Section \ref{sec:pf_th4}. We show Statement (ii) below.
  Let $\mathbf X=(X_1,\dots,X_n)$ be a
 Pareto-optimal allocation. By Theorem \ref{thm:Pareto},  $\mathbf X=X\mathbf J$ for some $\mathbf J =(\id_{A_1}, \dots, \id_{A_n})\in \mathbb{J}_n$
and  $\boldsymbol{\lambda}=(\lambda_1,\dots,\lambda_n)\in \Delta_n$
with  $(\lambda_1u_1(X_1), \dots, \lambda_n u_n(X_n))=V_\blambda(X)\mathbf J$. From the proof of Theorem \ref{thm:Pareto}, we can take $\lambda_i=0$ if $X_i=0$ and $\lambda_i>0$ if $X_i\ne 0$.

Take the initial endowment $(\xi_1, \dots, \xi_n)=(X_1, \dots, X_n)$. We will show that $(X_1, \dots, X_n, Q)$ is a
 competitive equilibrium for $(\xi_1, \dots, \xi_n)$.
Let $z= \E[V_{\boldsymbol{\lambda}}(X)/X]$ and $Q$ be given by \eqref{eq:general_price}.
 If $\xi_i=0$, it is clear that $X_i=0$ solves individual optimality. Next, we discuss the case that $\xi_i$ is not 0, which implies $\lambda_i>0$.
 For such $i$ let
   $
  x_i=\E^Q[\xi_i] = \E^Q[X_i] =\E^Q[X\id_{A_i}],
  $
and notice that
 $$
  \lambda_i\E[u_i(X_i)] =\E \left[V_{\boldsymbol{\lambda}}(X) \id_{A_i}\right]=\E\left [X\frac{V_{\boldsymbol{\lambda}}(X)}{X}\id_{A_i}\right] =z  \E^Q\left[{X}\id_{A_i}\right]   = zx_i.
  $$
  For any $Y_i$ satisfying $0\le Y_i\le X$ and the budget constraint $\E^Q \left[Y_i \right]     \le   {x_i}$ we have
 \begin{align*}
      \lambda_i\E[u_i(Y_i)] &=\E\left[Y_i \frac{\lambda_iu_i(Y_i)}{Y_i}\right] \le \E\left[Y_i \frac{\lambda_iu_i(X)}{X }\right]  \le \E\left[Y_i \frac{V_{\boldsymbol{\lambda}}(X)}{X }\right]  =z \E^Q[Y_i]\leq zx_i,
 \end{align*}
where the first inequality uses the fact that $x\mapsto u(x)/x$ is increasing and the second that $\lambda_i u_i\leq V_{\boldsymbol{\lambda}}$.
 Therefore, $(X_1,\dots,X_n,Q)$ satisfies individual optimality.
 Market clearance also holds because $ \sum_{i=1}^n \id_{A_i}=1.$
 Therefore, $(X_1,\dots,X_n)$ is an equilibrium allocation, being the initial endowments itself.
\end{proof}

\begin{proof}[Proof of Theorem \ref{thm:CE_only}]
Take $i\in S$, and consider the optimization problem
$$
\max_{0\le Z \le X} \E[u_i(Z)] \mbox{ subject to } \E^Q[Z] \le \E^Q[\xi_i].
$$
Since $u_i$ is strictly convex, the constraint is linear, and the probability space is atomless (by Assumption \ref{ass:ER}), the optimizer $Z=X_i$ of the above problem must be either $0$ or $X$ almost surely. This implies
 $X_i (X-X_i)=0$ for  $i\in S$. By \eqref{eq:jackpot2}, $(\mathbf X_S, Y)$ is a jackpot allocation.
\end{proof}

\begin{proof}[Proof of Proposition \ref{prop:equiv}]
The equivalence between (i),  (ii), (iii) and (v) follows from Theorem \ref{thm:Pareto} and Proposition \ref{prop:UPF}.
The equivalence between them and (iv) follows from Theorem \ref{thm:1WT}.
\end{proof}

\begin{proof}[Proof of Theorem \ref{thm:CE}] As usual, write $\boldsymbol \theta=(\theta_1,\dots,\theta_n)$,  $\boldsymbol\xi=(\xi_1,\dots,\xi_n)$
and $\mathbf J=(\id_{A_1},\dots,\id_{A_n})$.  We first show Statement (i).
    Denote by $x_i= \E^Q[\xi_i]=\theta_i\E^Q[X]$ and $z=\E[u(X)/X] >0$.
    Because $\d Q/\d\p$ is a function of  $X$, $\mathbf J$ has the same expectation $\boldsymbol \theta$ under $\p$ and $Q$, and it is independent of $X$ under both $\p$ and $Q$.
Hence,
  $
  \E^Q[X_i] = \theta_i \E^Q[X] =x_i,
  $
  and thus the budget constraint is satisfied for each $i$.
  Moreover,
  $$
  \E[u(X_i)] = \E[u(X)\id_{A_i}] = \E\left [X\frac{u(X)}{X}\id_{A_i}\right] =z  \E^Q[X\id_{A_i}]   = zx_i.
  $$
 For any $Y_i$ satisfying $0\le Y_i\le X$ and the budget constraint $\E^Q \left[Y_i \right]     \le   {x_i}$,
 using the fact that $x\mapsto u(x)/x$ is increasing,
 we have
  $$
 \E[u(Y_i)] =\E\left[Y_i \frac{u(Y_i)}{Y_i}\right] \le \E\left[Y_i \frac{u(X )}{X }\right]  =  z {\E^Q \left[Y_i \right] }    \le  z  {x_i} =   \E[u(X_i)] .
 $$
 Therefore, $(X_1,\dots,X_n,Q)$ satisfies individual optimality.
 Market clearance clearly holds.

Next, we show Statement (ii). Suppose that the tuple $(\mathbf X,Q)$ is a competitive equilibrium. By Proposition \ref{prop:equiv} we know $\mathbf X$ is a jackpot allocation.
We can write
   $\mathbf X=X \mathbf J$    for some $\mathbf J=(\id_{A_1},\dots,\id_{A_n})\in \mathbb J_n$
   satisfying $\E^Q[X \mathbf J ]=\E^Q[\boldsymbol \xi]$ by the binding budget constraint.
Without loss of generality, assume $\p(X>0)=1$, as $Q$ is arbitrary on $\{X=0\}$. By individual optimality we can see that $\p$ and $Q$ are equivalent measures.
 Define a probability measure $R$ by $\d R/\d Q= X/c$, where $c=\E^Q[X]$, and let  $Z= c (u(X)/X) (\d \p/\d Q )$.
Note that for any $A\in \mathcal F$, we have
\begin{equation}
    \label{eq:convert}
  \E[u(X\id _A)] =      \E^R\left[\frac{\d \p}{\d Q} \frac{\d Q}{\d R}u(X) \id _A \right] =     \E^R\left[ Z\id _A \right].
\end{equation}
   Individual optimality of $X\mathbf J$ implies that for any $i$ and any $A\in \mathcal F$ satisfying $\E^Q[X \id_A] \le \E^Q[X\id_{A_i}]$, we have $   \E[u(X\id _A)]  \le  \E[u(X\id _{A_i})]$, and  by using \eqref{eq:convert}
 it gives    $
   \E^R[Z \id_A ] \le     \E^R[Z\id_{A_i} ].
   $
   Note that $ \E^Q[X \id_A] \le \E^Q[X\id_{A_i}]$ is equivalent to $R(A)\le R(\id_{A_i})$.
   Take $B_1,\dots,B_n$  with $R(B_i)= R(A_i)$ for all $i$ such that  $\id_{B_1},\dots,\id_{B_n}$ are comonotonic with $Z$; this is possible because $R$ is atomless.
We have
$ \sum_{i=1}^n  \E^R[Z \id_{B_i}]
\le \sum_{i=1}^n  \E^R[Z  \id_{A _i}]
=\E^R[Z].
$
Note that $\sum_{i=1}^n \id_{B_i}$ is not a constant because at least two of $ {A_1},\dots, {A_n}$ have positive probability under $R$ by the binding budget constraint and the assumption that $\boldsymbol \xi$ is nontrivial.
If $Z$ is not a constant, then the Fr\'echet-Hoeffding inequality gives
$\E^R[Z \sum_{i=1}^n \id_{B_i}] >
\E^R[Z] \E^R [\sum_{i=1}^n \id_{B_i}]=\E^R[Z] $, a contradiction.
 Therefore,  $Z$ is a constant, and $\d Q/\d P $ is equal to a constant times $u(X)/X$, showing that $Q$  is uniquely given by \eqref{eq:price}.

Statement (iii) follows immediately from (ii), by noting that for the unique price $Q$, there can only be one maximum utility value for every agent.
If $\boldsymbol \xi$ is trivial, say $\xi_1=X$, then the  equilibrium allocation is $(X,0,\dots,0)$, which has the unique utility vector $\E[u(X)]\boldsymbol \theta $.
\end{proof}

\subsection{Proofs of results in Section \ref{sec:RDU}}

\noindent \begin{proof}[Proof of Theorem \ref{thm:RDU}]

In (i) and (ii) below, as $X$ takes values in $[0,x_0]$, we can without loss of generality let $u$ be the identity function. Thus, each agent has a preference functional  $\rho_w.$

(i)  Write $\mathbf J=(\id_{A_1},\dots,\id_{A_n})$.
First, for each $i$, since $w=\overline{w}$ on $[0,1/n]$ and $\p(A_i)=1/n$, we have
 $$
\rho_w(X\id_{A_i})=
\int w(\p(X\id_{A_i}>x))\d x =
\int \overline w(\p(X\id_{A_i}>x))\d x
=\rho_{\overline w}(X\id_{A_i}).
 $$
Moreover,
\begin{equation}
    \label{eq:RDU-pf3}
\rho_w(X\id_{A_i})=  \rho_{\overline w}(X\id_{A_i})= \int \overline{w}\left(\frac 1n \p(X>x)\right)\d x .
\end{equation}
Next, we will show that
\begin{equation}
    \label{eq:RDU_pf2}
\sum_{i=1}^n \rho_w(X\id_{A_i})
=\sup_{(X_1,\dots,X_n)\in \mathbb A_n(X)} \rho_{\overline{w}}(X_i),
\end{equation}
and since $w\le \overline{w}$, this gives
 sum-optimality of $(X\id_{A_1},\dots,X\id_{A_n})$.
To show \eqref{eq:RDU_pf2}, it suffices to consider jackpot allocations $(X_1,\dots,X_n)$ because the preference functional $\rho_{\overline{w}}$ is risk seeking (\citealp{SZ08}), and we can apply Theorem \ref{thm:CT_improvement}.
Using the representation $(X\id_{B_1},\dots,X\id_{B_n})$ of jackpot allocations, we get
  \begin{align*}
     \sup_{(X_1,\dots,X_n)\in \mathbb A_n(X)} \rho_{\overline{w}}(X_i)  &=  \sup_{(B_{1}, \ldots, B_{n})\in \Pi_{n}} \rho_{\overline{w}}(X \id_{B_i})
     \\&=\sup \left\{\sum_{i=1}^{n} \int_{0}^{\infty} \overline{w}(\mathbb{P}(X \id_{B_{i}}>x)) \mathrm{d} x: (B_{1}, \ldots, B_{n})\in \Pi_{n}\right\}\\
        &\le \int_{0}^{\infty}       \sup \left\{\sum_{i=1}^{n}  \overline{w} (p_{i}) : \sum_{i=1}^{n}p_{i}=\mathbb{P}(X>x)\right\} \mathrm{d} x\\
        &= \int_{0}^{\infty} n \overline{w}
        \left(\frac{1}{n}\mathbb{P}(X >x)\right) \mathrm{d} x,
    \end{align*}
    where the last equality is due to concavity of $\overline{w}$.
    Using \eqref{eq:RDU-pf3}, we get \eqref{eq:RDU_pf2}, and hence the allocation $(X\id_{A_1},\dots,X\id_{A_n})$ is
 sum-optimal and Pareto optimal.

(ii)  Because $\overline{w}$ is strictly concave on $(0,1/n)$, $t\mapsto t \overline{w}(x/t)$ is strictly increasing in $t\ge 1$ for each $x\in (0,1/n)$. This implies
    $n\overline{w} (x/n)> w(x)$ for all $x\in (0,1)$. Hence,
    $$
  \frac 1n \int_{0}^{\infty}  \overline{w}
        \left( \mathbb{P}(X >x)\right) \mathrm{d} x  <
    \int_{0}^{\infty}  \overline{w}
        \left(\frac{\mathbb{P}(X >x)}{n}\right) \mathrm{d} x.
    $$
As a consequence,
    $$\rho_w(X/n)\le \rho_{\overline{w}}(X/n)= \frac{1}{n} \int_{0}^{\infty}  \overline{w}
        \left(\mathbb{P}(X >x)\right) \mathrm{d} x <   \int_{0}^{\infty}   \overline{w} \left(\frac{ \mathbb{P}(X >x)}{n}\right) \mathrm{d} x=\rho_w(X\id_{A_i}).$$
        This shows strict dominance.

(iii) Note that $w(t/n)/w(t)<1$ for $t\in (0,1)$ because $w$ is concave  on $[0,1/n]$.
Together with the continuity of $w$ and condition
 \eqref{eq:tech_con},
 we get
 $\sup_{t\in (0,1)} {w(t/n)}/{w(t)}<1$.
 Hence, we can take $\theta<1$ such that
$
\sup_{t\in (0,1)}  {w(t/n)}/{w(t)}<\theta,
$
and take $y_0>0$ such that $u(x/n)>  \theta u(x)$ for $x\ge y_0$, also guaranteed by
 \eqref{eq:tech_con}.
        Note that $X\ge y_0$ implies $u(X/n) \ge  \theta  u(X)$.
      For any event $A$ with $\p(A)=1/n$ independent of $X$,
     \begin{align*}\rho_w(u(X\id_{A}))& = \int_{0}^{\infty}   {w}
        \left(\mathbb{P}(u(X\id_A ) >x)\right) \mathrm{d} x
        \\  & =\int_{0}^{\infty}   {w}
        \left(\frac {\mathbb{P}(u(X  ) >x)}n\right) \mathrm{d} x
         \\&  <\theta \int_{0}^{\infty}  {w}
        (\mathbb{P}(  u(X)  >x )) \mathrm{d} x =  \rho_w(\theta u(X ))   \le \rho_w(u(X/n)).
        \end{align*}
Therefore, $\overline{\mathbf X}$ yields higher utility for each agent
than $X\mathbf J$ does, and thus strict domination holds.

(iv) Denote by $y=X/n>0$
and $Z^\epsilon$ the first component of
 $\overline{\mathbf X}-\epsilon \mathbf 1 + n\epsilon \mathbf J$.
 We can immediately compute
 \begin{align*}
\rho_w(u(Z^\epsilon))
=u(y-\epsilon)+ ( u(y+(n-1) \epsilon)-u(y-\epsilon)) w(1/n).
\end{align*}
Taking derivative yields
\begin{align*}
\frac {\d }{\d \epsilon} \rho_w(u(Z^\epsilon)) &=
 - u'(y-\epsilon) +
 ( (n-1) u'(y+(n-1)\epsilon)+ u'(y-\epsilon)) w(1/n),
\end{align*}
 which converges to
 $ nu'(y)( w(1/n)-1/n) $ as $\epsilon \downarrow 0$.
Using $w(1/n)>1/n$, which is implied by the strict concavity of $w$ on $(0,1/n)$, we get that $ \d \rho_w(u(Z^\epsilon)) /\d \epsilon >0$ for $\epsilon>0$ small, and hence
$\rho_w(u(Z^\epsilon)) > u(y)$ for $\epsilon>0$ small enough. This shows that
$\overline{\mathbf X}-\epsilon \mathbf 1 + n\epsilon \mathbf J$ strictly dominates $\overline{\mathbf X}$.
\end{proof}

 \begin{proof}[Proof of Proposition \ref{thm:CE.RDU}]

Without loss of generality, assume that $u$ is the identity on $[0,x_0]$.
The budget conditions and market clearance hold by construction of $x\mathbf J$. We only need to show individual optimality.
Write $(\id_{A_1},\dots,\id_{A_n})=\mathbf J$.
Note that for any random variable $Y$ taking values in $[0,1]$
we have $Y\le_{\rm cx} \id_{A}$ where
$\p(A)=\E[Y]$.
Therefore, we have $\rho_{\overline{w}}(Y)\le \rho_{\overline{w}} (\id_{A} )
=\overline{w}(\E[Y])
$
because $\rho_{\overline{w}}$ represents a
risk-seeking preference.
It follows that for any random variable $X_i$ in $[0,x]$ satisfying the constraint $\E[X_i]= \E[\xi_i]$,
we have $$
\rho_{w}(X_i) \le
\rho_{\overline{w}}(X_i) \le \rho_{\overline{w}}(x\id_{A_i}) = x\overline{w}(\p(A_i))=x w(\p(A_i))=
\rho_w( x\id_{A_i}). $$
This extends to the case $\E[X_i]< \E[\xi_i]$ by monotonicity.
Therefore, individual optimality holds, showing that $(x\mathbf J,\p)$ is a competitive equilibrium.
 \end{proof}

\end{appendix}

\newpage

\renewcommand{\thepage}{O-\arabic{page}}
\setcounter{page}{1} 
\begin{appendix}

\setcounter{table}{0}
\setcounter{figure}{0}
\setcounter{equation}{0}
\renewcommand{\thetable}{O.\arabic{table}}
\renewcommand{\thefigure}{O.\arabic{figure}}
\renewcommand{\theequation}{O.\arabic{equation}}

\setcounter{theorem}{0}
\setcounter{proposition}{0}
\renewcommand{\thetheorem}{O.\arabic{theorem}}
\renewcommand{\theproposition}{O.\arabic{proposition}}
\setcounter{lemma}{0}
\renewcommand{\thelemma}{O.\arabic{lemma}}

\setcounter{corollary}{0}
\renewcommand{\thecorollary}{O.\arabic{corollary}}

\setcounter{remark}{0}
\renewcommand{\theremark}{O.\arabic{remark}}
\setcounter{definition}{0}
\renewcommand{\thedefinition}{O.\arabic{definition}}
\setcounter{example}{0}
\renewcommand{\theexample}{O.\arabic{example}}
\setcounter{section}{0}
\renewcommand{\thesection}{O.\arabic{section}}

\begin{center}
    \Large
     Online Appendices for ``Optimal risk sharing, equilibria, and welfare with empirically realistic risk attitudes"
\end{center}

The seven online appendices provide additional results.

\section{Background on counter-monotonicity}
\label{app:CT}

\noindent We provide some technical background on
 counter-monotonicity. First, \cite{D72SM} obtained some necessary conditions for
 counter-monotonicity in dimensions more than two.

\begin{proposition}
[\citealp{D72SM}]\label{lem:pairwise}
If at least three of $X_1,\dots,X_n$ are
 non-degenerate, then $(X_1,\dots,X_n)$ are
counter-monotonic if and only if one of the following two cases holds:
\begin{align}
\label{eq:PCT01}
&\p(X_i>\essinf X_i,~ X_j>\essinf X_j)=0 \mbox{ for all  $i,j\in [n]$ with $i\ne j$}; \\
\label{eq:PCT02}
&\p(X_i<\esssup X_i,~ X_j<\esssup X_j)=0 \mbox{ for all $i,j\in [n]$ with $i\ne j$}.
\end{align}
A necessary condition for \eqref{eq:PCT01} is
$
\sum_{i=1}^n \p(X_i>\essinf X_i)\le 1,
  $
 and
a necessary condition for \eqref{eq:PCT02} is
$
\sum_{i=1}^n \p(X_i<\esssup X_i)\le 1.
  $
\end{proposition}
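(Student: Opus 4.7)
My plan proceeds in three steps: sufficiency of either disjointness condition, necessity of the disjunction under the non-degeneracy hypothesis, and the two summation bounds. The main external tool is the stochastic representation for counter-monotonic random vectors with at least three non-degenerate components (Proposition \ref{prop:PCT}), which produces the jackpot/scapegoat form $X_i = Y\id_{A_i} + m_i$ with either $Y \ge 0$ or $Y \le 0$ almost surely.

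For the \emph{sufficiency} direction, suppose \eqref{eq:PCT01} holds; the case \eqref{eq:PCT02} is the mirror argument with $\esssup$ in place of $\essinf$. Fix $i \ne j$ and set $B_k = \{X_k > \essinf X_k\}$. Outside a $\p\times\p$-null subset of $\Omega^2$, neither coordinate lies in $\{X_i < \essinf X_i\} \cup \{X_j < \essinf X_j\} \cup (B_i \cap B_j)$, where the last event has probability zero by \eqref{eq:PCT01}. Whenever $X_i(\omega) > X_i(\omega')$ on such a pair, we get $\omega \in B_i$, hence $\omega \notin B_j$, hence $X_j(\omega) = \essinf X_j \le X_j(\omega')$; the product $(X_i(\omega)-X_i(\omega'))(X_j(\omega)-X_j(\omega'))$ is then non-positive. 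The symmetric and equality cases are immediate, so the pair is counter-monotonic.

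For the \emph{necessity} direction, I would invoke Proposition \ref{prop:PCT} to write $X_i = Y\id_{A_i} + m_i$ for some $(A_1,\dots,A_n) \in \Pi_n$, constants $m_i$, and $Y$ of definite sign. In the case $Y \ge 0$, every non-degenerate component $X_i$ satisfies $\essinf X_i = m_i$: the presence of at least two other non-degenerate components forces $\p(A_i^c) > 0$ (otherwise the partition would concentrate on $A_i$ and every other component would collapse to a constant), so $X_i = m_i$ on a positive-probability set while $X_i \ge m_i$ almost surely. Hence $\{X_i > \essinf X_i\} \subseteq A_i$, and disjointness of the partition yields \eqref{eq:PCT01}; degenerate components contribute empty events. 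The case $Y \le 0$ gives \eqref{eq:PCT02} by symmetry, and $Y \equiv 0$ would contradict the presence of three non-degenerate components.

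For the \emph{summation bounds}, the events $B_i$ are pairwise $\p$-almost disjoint under \eqref{eq:PCT01}, so all intersection terms in inclusion-exclusion vanish and $\sum_i \p(B_i) = \p(\bigcup_i B_i) \le 1$; the companion bound under \eqref{eq:PCT02} is identical in structure. The chief obstacle, and the reason I lean on Proposition \ref{prop:PCT}, is that pairwise counter-monotonicity in dimension at least three imposes rigid joint constraints that are opaque from the bivariate Fr\'echet--Hoeffding lower copulas alone: a purely copula-based necessity proof would need to rule out mixed configurations among triples of pairwise lower copulas (combinatorially painful), while the stochastic representation packages precisely this rigidity into a transparent algebraic form from which both disjointness condition and both summation bounds drop out at once.
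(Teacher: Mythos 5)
First, a point of comparison: the paper does not prove this proposition at all --- it is stated with the attribution to Dall'Aglio (1972) and used as an imported fact, so there is no in-paper proof to measure yours against. On its own terms, your argument is essentially sound. The sufficiency direction is correct and self-contained: on the full-$(\p\times\p)$-measure set of pairs $(\omega,\omega')$ where both coordinates satisfy $X_k\ge \essinf X_k$ and avoid $B_i\cap B_j$, the case $X_i(\omega)>X_i(\omega')$ forces $\omega\in B_i$, hence $X_j(\omega)=\essinf X_j\le X_j(\omega')$, so the product is nonpositive. The summation bounds follow at once from pairwise almost-sure disjointness of the $B_i$. Your necessity argument via Proposition \ref{prop:PCT} is also formally non-circular within this paper's logical structure: the paper derives only the ``if'' half of Proposition \ref{prop:PCT} from Proposition \ref{lem:pairwise}, while the ``only if'' half (the one you use) is obtained by citing Theorem 1 of the pairwise counter-monotonicity paper of Lauzier, Lin and Wang. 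Your case analysis there --- $\p(A_i^c)>0$ because two other components are non-degenerate, hence $\essinf X_i=m_i$ and $\{X_i>\essinf X_i\}\subseteq A_i$, with degenerate components contributing null events --- is correct.

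Two caveats. First, the necessity direction is where the entire difficulty of Dall'Aglio's theorem lives, and you have outsourced it to the stochastic representation; the published proof of that representation itself goes through Dall'Aglio's characterization, so your argument for necessity is, at the level of the literature, a citation in different clothing rather than an independent proof. That is fine for verifying the statement inside this paper, but you should flag explicitly that the hard direction is not being reproved. Second, Proposition \ref{prop:PCT} is stated for $L^1$ random vectors, whereas Proposition \ref{lem:pairwise} carries no integrability hypothesis. This is easy to repair: apply a bounded, continuous, strictly increasing map $\phi_i$ to each component; this preserves counter-monotonicity, non-degeneracy, and the events $\{X_i>\essinf X_i\}$, and puts you in the bounded (hence integrable) setting where the representation applies. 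A sentence to this effect would close the argument.
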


The alternative formulation \eqref{eq:jackpot2}  of jackpot allocations in Section
 \ref{sec:countermon} directly follows from Proposition \ref{lem:pairwise}.

Recall that $\Pi_n$ is the set of all
 $n$-compositions of $\Omega$.
The next proposition, which is a reformulation of \citet[Theorem 1]{LLW23aSM}, simplifies the stochastic representation of
 counter-monotonicity.

\begin{proposition} \label{prop:PCT}
 For $X \in \X=L^1$, suppose that at least three of $(X_1, \dots, X_n)\in \mathbb{A}_n(X)$ are
 non-degenerate. Then
$(X_1,\dots,X_n)$ is
 counter-monotonic if and only if there exist constants $m_1,\dots,m_n$ and $(A_1,\dots,A_n)\in \Pi_n$ such that
\begin{align}\label{eq:END_type1}
 \mbox{either}~~~~~&X_i = (X-m) \id_{A_i} +m_i~~~\mbox{for all $i$ with}~m=\sum_{i=1}^n m_i\le \essinf X;
 \\\label{eq:END_type2}
 \mbox{or}~~~~~ & X_i =  (X-m)\id_{A_i} +m_i~~~\mbox{for all $i$ with}~ m=\sum_{i=1}^n m_i\ge \esssup X.
 \end{align}
\end{proposition}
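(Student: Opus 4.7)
The plan is to deduce the representation from Proposition \ref{lem:pairwise}, which already isolates the two boundary configurations \eqref{eq:PCT01} and \eqref{eq:PCT02} under the assumption that at least three of $X_1,\dots,X_n$ are non-degenerate. The ``only if'' direction then becomes a partition-and-reconstruction argument, while the ``if'' direction reduces to a finite case analysis.

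For the ``only if'' direction, suppose $(X_1,\dots,X_n)$ is counter-monotonic. Proposition \ref{lem:pairwise} gives us two cases. In case \eqref{eq:PCT01}, I would set $m_i=\essinf X_i$ and $Y_i=X_i-m_i\ge 0$. The condition \eqref{eq:PCT01} says precisely that the events $B_i=\{Y_i>0\}$ are pairwise disjoint up to $\p$-null sets. Extending $(B_1,\dots,B_n)$ to an ordered partition $(A_1,\dots,A_n)\in\Pi_n$ by absorbing the residual set $\Omega\setminus\bigcup_i B_i$ into $A_n$ (say), I note that on $A_i$ at most one component of $(Y_1,\dots,Y_n)$ is non-zero, namely $Y_i$, and on the residual portion all $Y_j$ vanish. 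Therefore, setting $Y=\sum_j Y_j=X-m$ with $m=\sum_j m_j$, we get $Y_i=Y\id_{A_i}$ and hence $X_i=(X-m)\id_{A_i}+m_i$. Since $Y\ge 0$, we have $m\le\essinf X$, yielding \eqref{eq:END_type1}. Case \eqref{eq:PCT02} is entirely symmetric: take $m_i=\esssup X_i$, $Y_i=X_i-m_i\le 0$, define $A_i$ via $\{Y_i<0\}$ (plus residual absorption), and obtain \eqref{eq:END_type2} with $m\ge\esssup X$.

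For the ``if'' direction, assume the representation \eqref{eq:END_type1}, so $Y:=X-m\ge 0$ and $X_i=Y\id_{A_i}+m_i$. To verify counter-monotonicity of an arbitrary pair $(X_i,X_j)$ with $i\ne j$, I would do a four-case analysis on the membership of $(\omega,\omega')$ in $A_i\times A_j$ and its analogues. If $\omega\in A_i$ and $\omega'\in A_j$, then $X_i(\omega)-X_i(\omega')=Y(\omega)\ge 0$ and $X_j(\omega)-X_j(\omega')=-Y(\omega')\le 0$, so their product is non-positive; the symmetric case is analogous; and if both points fall outside $A_i\cup A_j$, or both fall in the same $A_k$, one of the factors vanishes. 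Thus $(X_i(\omega)-X_i(\omega'))(X_j(\omega)-X_j(\omega'))\le 0$ for a.e.\ $(\omega,\omega')$, which is exactly pairwise counter-monotonicity, and the definition extends componentwise. Case \eqref{eq:END_type2} follows by the same computation with $Y\le 0$.

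The main obstacle will be the bookkeeping in the ``only if'' direction, specifically that we need $(A_1,\dots,A_n)$ to be a genuine ordered partition of $\Omega$, not merely a disjoint family. The fix is to observe that the residual set $\Omega\setminus\bigcup_i\{Y_i>0\}$ is precisely where $Y=0$, so its assignment to any single $A_k$ is invisible in the product $Y\id_{A_k}$; this means the construction does not perturb the identity $X_i=Y\id_{A_i}+m_i$. A secondary care point is the almost-sure nature of all equalities arising from \eqref{eq:PCT01}--\eqref{eq:PCT02}, which is absorbed into our standing convention that $\p$-null modifications are ignored.
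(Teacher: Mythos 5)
Your proposal is correct, and it reaches the representation by a genuinely more self-contained route than the paper. For the ``only if'' direction the paper simply invokes the stochastic representation theorem of Lauzier, Lin and Wang (2023, Theorem 1), which already delivers $X_i=(X-m)\id_{A_i}+m_i$ with $m_i=\essinf X_i$ (or $\esssup X_i$), and then only checks the bound on $m$ via $\sum_i\essinf X_i\le\essinf X$ (resp.\ $\sum_i\esssup X_i\ge\esssup X$); you instead re-derive that representation from Dall'Aglio's conditions in Proposition \ref{lem:pairwise}, via the disjointness of the sets $\{X_i>\essinf X_i\}$ and the observation that the residual set is contained in $\{X=m\}$, which is exactly why absorbing it into one cell of the partition is harmless. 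For the ``if'' direction the paper argues through Proposition \ref{lem:pairwise} again (the representation clearly implies \eqref{eq:PCT01} or \eqref{eq:PCT02}), while you verify pairwise counter-monotonicity directly; your case list omits the configuration where one point lies in $A_i$ and the other in some $A_k$ with $k\ne i,j$, but there the $j$-factor vanishes, so nothing breaks. The trade-off: the paper's proof is shorter because it leans on the cited theorem, whereas yours is self-contained and makes transparent where the hypothesis of at least three non-degenerate components enters (through Proposition \ref{lem:pairwise}). One small point worth adding in your write-up: the hypothesis also guarantees that the constants $m_i=\essinf X_i$ (resp.\ $\esssup X_i$) are finite, since an infinite essential infimum of one component would force all other components to be degenerate under \eqref{eq:PCT01}.
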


\begin{proof}
 The ``if" part follows from the fact that $\sum_{i=1}^nX_i=X$ and Proposition \ref{lem:pairwise}. We will show the ``only if" part.
 Assume that $(X_1, \dots, X_n)\in \mathbb{A}_n(X)$ is
 counter-monotonic. By \citet[Theorem 1]{LLW23aSM}, there exists $(A_1,\dots,A_n)\in\Pi_n$ such that
$$
X_i =  (X-m)  \id_{A_i} +m_i~~~\mbox{for all $i$},
$$
 where either  $m_i=\essinf X_i$ for all $i$ or $m_i=\esssup X_i$ for all $i$, and $m=\sum_{i=1}^n m_i$.
 If $m_i=\essinf X_i$  for all $i$, we have $m=\sum_{i=1}^n \essinf(X_i) \le \essinf (\sum_{i=1}^n X_i)= \essinf X$. If $m_i=\esssup X_i$ for all $i$, we have $m=\sum_{i=1}^n \esssup(X_i) \ge \esssup (\sum_{i=1}^n X_i)= \esssup X$.
  \end{proof}

Adding constants $m_i$ to $X_i$ does not affect
 counter-monotonicity. The terms with $m$ serve market clearance, with an inequality imposed to avoid crossing a boundary. The allocation $(X,0,\dots,0)$ is
counter-monotonic by taking $A=\Omega$ and $m=m_1=\essinf X$, and comonotonicity is trivial. Notice now that the allocations defined in Equations \eqref{eq:END_type1} and \eqref{eq:END_type2} reflect the conditions in Proposition \ref{lem:pairwise}. In \eqref{eq:END_type1}, for almost every $\omega \in \Omega$,  at most one agent receives more than their essential infimum. Conversely, in \eqref{eq:END_type2}, at most one agent receives less than their essential supremum.

\section{Discussion on Assumptions \ref{ass:ER} and \ref{ass:ERS}
}
\label{app:ER}

\noindent This appendix explains why  Assumption \ref{ass:ER} is more convenient than
 Assumption \ref{ass:ERS} in Theorem \ref{thm:CT_improvement}, although the latter is quite intuitive and it allows for a simple proof.
Recall that
   \ref{ass:ER} assumes
external randomization $U$ for $X$
and    \ref{ass:ERS}
assumes external randomization $U$  for $(X_1,\dots,X_n)$, which is stronger.

To characterize Pareto optimality or competitive equilibria,
we consider all random vectors $(X_1,\dots,X_n)\in \mathbb A_n(X)$ in the given probability space $(\Omega,\mathcal F,\p)$.
For instance, if we want to show as in Theorem \ref{thm:Pareto} that any
 Pareto-optimal allocation is a jackpot allocation,
then we need to be able to
 apply Theorem \ref{thm:CT_improvement}
 to any $(X_1,\dots,X_n)\in \mathbb A_n(X)$.
 Therefore, if we only have Theorem \ref{thm:CT_improvement} under Assumption \ref{ass:ERS},
we need that for \emph{all} random vectors an independent standard uniform random variable exists. Despite simple intuition, this assumption is very strong and rules out any standard Borel probability space; see   \citet[Example 7]{LWW20SM} for an illustration.
The intuition is that, in any  standard Borel probability space, there exists some random variable that exhausts randomness; that is, no external randomization is allowed  for that random variable.
Hence, to apply Theorem \ref{thm:CT_improvement} under Assumption \ref{ass:ERS} and obtain the desired results in subsequent theorems, we must exclude standard Borel probability spaces, whereas Assumption \ref{ass:ER} conveniently avoids this issue.
Thus, the difference between Assumptions \ref{ass:ER}   and  \ref{ass:ERS}  affects the applicability of the improvement theorem in risk sharing.
Nevertheless, for an application in which assuming the existence of a uniform random variable independent of all allocations is safe, Assumption \ref{ass:ERS} would suffice.


On a related note,
Assumption \ref{ass:ER} or \ref{ass:ERS} is not needed for the classic comonotonic improvement theorem, and the reason is also intuitive: for
 risk-averse agents, external randomization does not enhance their utility, and therefore it is not needed. Mathematically, all comonotonic allocations of $X$ are measurable with respect to the
$\sigma$-algebra generated by $X$ (\citealp{D94SM}); this is certainly not true for
 counter-monotonic allocations.

\section{Additional technical details}
\label{app:ex}

\subsection{Computing   Pareto-optimal allocations}
 \label{sec:details_ex2}

\noindent We explain how to compute
 Pareto-optimal allocations in Section
 \ref{sec:EU_gen} for general EU agents.
For  $\boldsymbol{\lambda}=(\lambda_1,\dots,\lambda_n)\in \R_+^n \setminus\{\mathbf 0\}$, the goal is to find $(X_1,\dots,X_n)\in \mathbb A_n(X)$ that maximizes
 $ \sum_{i=1}^n \lambda_i\E[u_i(X_i)].
 $
Since the constraint $(X_1,\dots,X_n)\in \mathbb A_n(X)$  is
 pointwise on $\Omega$, and EU has a simple integral form,
the above maximum can be computed by
 point-by-point optimization for each value $x$ of $X$, that is,
 $$
 W_{\blambda}(x)= \sup\left\{ \sum_{i=1}^n   \lambda_i u_i(x_i): \sum_{i=1}^n x_i = x \mbox{ and $(x_1,\dots,x_n)\in \R_+^n $}\right\},~~x\in \R_+.
  $$
 An optimizer $(x_1(x),\dots,x_n(x))$  exists due to continuity (it may not be unique), and it
 yields a $\blambda$-optimal allocation
 $X_i=x_i(X)$ for $i\in[n]$, assuming measurability. We then have
  $$
  \E[W_{\blambda}(X)]=
 \max_{(X_1,\dots,X_n) \in \mathbb A_n(X)} \sum_{i=1}^n \lambda_i\E[u_i(X_i)].
  $$
 For any subset $S\subseteq [n]$ and $x\in \R_+$, we write
  $$ W_{\blambda}^S(x)= \sup\left\{ \sum_{i\in S}   \lambda_i u_i(x_i): \sum_{i\in S} x_i = x \mbox{ and $x_i\in \R_+ $ for $i\in S$}\right\}, \mbox{ with $W_{\blambda} ^{\varnothing}=0$.}$$
Let $T=[n]\setminus S$.  Then, we have
  \begin{equation}
  \label{eq:2groups}
  W_{\blambda} (x)= \sup\left\{  W_{\blambda}^{S}(y)+
 W_{\blambda}^{T}(x-y) : y\in [0,x]\right\},
  \end{equation}
  which is a one-dimensional optimization problem if $W_{\blambda}^{S}$
 and $W_{\blambda}^{T}$ are computable.

 In the mixed case specified in Assumption \ref{ass:EUM}, we have by Theorem \ref{thm:Pareto} that
 $W_{\blambda}^{S} (x)= \max_{i\in S} \lambda_i u_i(x)$,
 and the computation of  $W_{\blambda}^{T}$ is a standard convex program (maximization of a concave function on $\R_+^n$ under a linear constraint).
  In this case, both
$W_{\blambda}^{S}$
and $W_{\blambda}^{T}$  are easy to compute, so the overall problem boils down to a
 one-dimensional optimization for the sum of a convex and a concave function in \eqref{eq:2groups}.
Moreover, the optimal allocation can be obtained in two steps: first,
compute $(X_T,X_S)$ from \eqref{eq:2groups}; second, construct a
Pareto-optimal jackpot allocation of $X_S$ among agents in $S$ and a
 Pareto-optimal comonotonic allocation of $X_T $ among agents in $T$.

Next, let us specialize in the setting of Example \ref{ex:RARS} and verify the claims therein.
We first recall the setting.
For $i\in S$,   $u_{i}$
is the convex function $u_i(x)= 3x+x^2 $ and for $i\in T$,  $u_i$ is
a strictly increasing and strictly concave function satisfying
$u_i(x)= 5 x - tx^2  $ on $[0, 2/t]$, where $t$ is the cardinality of $T$.
The aggregate payoff $X$ is distributed on $[0,2]$.
Suppose that $\blambda =(\lambda_1,\dots,\lambda_n)\in \R_+^n$ satisfy $\lambda_i=\lambda_T >0$ for $i\in T$ and $\lambda_i=\lambda_S>0$ for $i\in S$.

First, let us compute $W_{\blambda}^S$
and $W_{\blambda}^T$. Using Theorem \ref{thm:Pareto}, we have
$$
W_{\blambda}^S(x) =  \max_{i\in S} \lambda_S u_i(x) = \lambda_S (3x+x^2),~~~x\in \R_+$$
By standard convexity argument, we also have
$$
W_{\blambda}^T (x) =
\sum_{i\in T} \lambda_T u_i\left(\frac{x}{t}\right) =
 \lambda_T (5x -x^2),~~~x\in [0,2].
$$
This means that the
 $\blambda$-optimal allocation among agents in $T$ is proportional.
Note that we do not need to specify $W_{\blambda}^T (x)$ for $x>2$. We analyze the two cases of $\blambda$ separately.

\begin{enumerate}
    \item[(a)] Let $\lambda_S=5/4$ and $\lambda_T =1$.
    In this case, $ W_{\blambda}^{S}(y)+
 W_{\blambda}^{T}(x-y)$ is convex in $y$; that is, risk seeking prevails.
To compute a $\blambda$-optimal allocation, we need to maximize, as in \eqref{eq:2groups},
$$ W_{\blambda}^{S}(y)+ W_{\blambda}^{T}(x-y)=
 \frac 5 4(3y+y^2) + 5(x-y) -(x-y)^2
$$
over $y\in [0,x]$. By convexity, we have either $y=x$ or $y=0$ at the optimum, leading to
$X_T =X\id_{\{X\le c\}}$
and
$X_S=X\id_{\{X>c\}}$, where the threshold $c=5/9$ can be easily
 computed by comparing the end-points.
Finally, using Theorem \ref{thm:EU} and Proposition \ref{prop:UPF}, we get that a
 $\blambda$-optimal allocation $(X_1,\dots,X_n)$ is
given by
 $$
X_i  = \frac{X}{t}\id_{\{X\le c\}}, ~i\in T
\mbox{~~~and~~~}X_i  = X J_i \id_{\{X> c\}} , ~i\in S,
 $$
where $(J_i)_{i\in S}$ is any jackpot vector.

    \item[(b)] Let  $\lambda_S=1$ and $\lambda_T =2$.  In this case,  $ W_{\blambda}^{S}(y)+
 W_{\blambda}^{T}(x-y)$ is concave in $y$; that is, risk aversion prevails.
    To compute a
 $\blambda$-optimal allocation, we need to maximize, as in \eqref{eq:2groups},
$$ W_{\blambda}^{S}(y)+
 W_{\blambda}^{T}(x-y)=  3y+y^2 + 10(x-y) -2(x-y)^2 .
$$
The above function is concave in $y$, and its maximum may not necessarily be attained by $y=x$ or $y=0$. For instance, with $x=2$, the maximum is uniquely attained at $y=1/2$. Therefore, if we take $X=2$, then $X_T =3/2$ and $X_S=1/2$ necessarily hold, and
the $\blambda$-optimal allocation cannot be a jackpot allocation.
\end{enumerate}

\subsection{A self-contained proof of Theorem \ref{thm:1WT}, Statement (i)}
 \label{sec:pf_th4}

\noindent \begin{proof}[Proof of Theorem \ref{thm:1WT}, Statement (i)]
 Suppose for contradiction that $(X_1,\dots,X_n,Q)$ is an equilibrium but $(X_1,\dots, X_n)$ is strictly dominated by another allocation $(Y_1,\dots, Y_n)\in \mathbb A_n(X)$.
    There exists  $j\in[n]$ such that $\E[u_j(Y_j)]>  \E[u_j(X_j)]$, and by
    the fact that $(X_1,\dots,X_n,Q)$ is an equilibrium, we have that $\E^Q[Y_j]>\E^Q[\xi_j]\ge \E^Q[X_j]$.
    Because $\sum_{i=1}^n \E^Q[Y_i] = \E^Q[X] = \sum_{i=1}^n \E^Q[X_i],$
   there exists $i\in[n]$, $i\neq j$ such that  $\E^Q[Y_i]<\E^Q[X_i] $. By Pareto dominance we also have  $\E[u_i(Y_i)]\ge\E[u_i(X_i)]$.
    Let $\alpha = \E^Q[X_i-Y_i]/\E^Q[X-Y_i]$.
    Because $\E^Q[Y_i]<\E^Q[X_i] \le \E^Q[X]$, we have $\alpha\in (0,1]$. Let
    $
Z_i = Y_i +(X-Y_i)\alpha.
    $
It is clear that $Y_i\le Z_i\le X$ and $\E^Q[Z_i]=\E^Q[X_i]\le \E^Q[\xi_i]$. Recall that $\E^Q[Y_i]< \E^Q[X]$, which implies $Q(Z_i >Y_i)>0$, and hence,
$\p(Z_i >Y_i)>0$. Because $u_i$ is strictly increasing, we obtain  $\E[u_i(Z_i)]>\E[u_i(Y_i)]\geq \E[u_i(X_i)]$, contradicting individual optimality for agent $i$.
\end{proof}

\subsection{Details in Example \ref{ex:RDU}}
\label{sec:ex-sati}

\noindent Let $v=u(y_0)$. The allocation $\overline{\mathbf X}$ yields the maximum utility $v$ to all agents, and hence it is Pareto optimal. To show strict domination, let $(X_1,\dots,X_n)$ be a jackpot allocation and $p_i=\p(X_i>0)$ for all $i$. With this allocation, agent $i$ has utility
$$\int_0^{v} \overline{w}( \p(u(X_i)>x))\d x \le v w(p_i).$$
Since at least one $p_i$ is less than or equal to $1/n$, the condition $w(1/n)<1$ guarantees that at least one agent has a utility less than $v$, and thus the jackpot allocation is strictly dominated by $\overline{\mathbf X}$.

\section{Existence of competitive equilibria}
\label{app:CE}

\noindent We discuss the existence of competitive equilibria in the setting of Section \ref{sec:CEWT} under Assumption \ref{ass:EURS} for a given initial endowment $(\xi_1, \dots, \xi_n) \in \mathbb A_n(X)$.
We obtain two results in this appendix in the case of two
 risk-seeking EU agents and in the case of proportional initial endowments.

\subsection{Two risk-seeking EU agents}

\noindent Our next result shows that, for any two
 risk-seeking EU agents, a competitive equilibrium exists for any initial endowment. The result also illustrates that the equilibrium price is not unique.

\begin{proposition}\label{prop:two_agents}
Assume $n=2$ and Assumptions \ref{ass:ER} and \ref{ass:EURS}. For any initial endowment vector $(\xi_1, \xi_2) \in \mathbb{A}_2(X)$, there exists a competitive equilibrium $(X_1, X_2, Q)$, where
\begin{align}
      \frac{\d Q}{\d \p} = \frac{u_1(X)}{X}\frac{1}{\E[u_1(X)/X]} ~ .
      \label{eq:existence}
\end{align}
  \end{proposition}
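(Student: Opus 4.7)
The plan is to fix $Q$ as in \eqref{eq:existence} and produce a jackpot allocation $(X_1,X_2)=(X\id_A,X\id_{A^c})$ that, together with $Q$, is a competitive equilibrium for $(\xi_1,\xi_2)$. Market clearance is built in, and $\E^Q[X\id_A]+\E^Q[X\id_{A^c}]=\E^Q[X]=\E^Q[\xi_1]+\E^Q[\xi_2]$, so it will suffice to find an event $A$ for which both individual optimality conditions hold when the two budgets bind.

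I would first dispose of agent 1, whose optimality comes essentially for free from the structure of $Q$. Writing $z=\E[u_1(X)/X]$ and using that $y\mapsto u_1(y)/y$ is increasing (by convexity of $u_1$ with $u_1(0)=0$), for any $0\le Y\le X$ one has
\[
\E[u_1(Y)]=\E\!\left[Y\cdot \tfrac{u_1(Y)}{Y}\right]\le \E\!\left[Y\cdot \tfrac{u_1(X)}{X}\right]=z\,\E^Q[Y],
\]
with equality whenever $Y\in\{0,X\}$ pointwise. Thus any jackpot $X_1=X\id_A$ with $\E^Q[X\id_A]=\E^Q[\xi_1]$ automatically solves agent 1's problem. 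This is the same computation that drives the proof of Theorem~\ref{th:1WT}(ii).

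The heart of the argument is agent 2's individual optimization,
\[
\max\bigl\{\E[u_2(Y)]:0\le Y\le X,\ \E^Q[Y]\le c_2\bigr\},\qquad c_2:=\E^Q[\xi_2].
\]
Strict convexity of $u_2$ (with $u_2(0)=0$) together with the linearity of the constraint forces the optimizer to be bang-bang: at each state $\omega$ the Lagrangian $u_2(y)-\mu(dQ/d\p)(\omega)\,y$ is strictly convex in $y\in[0,X(\omega)]$ and hence attained at an endpoint, so every optimizer has the form $Y=X\id_B$. A Neyman-Pearson / threshold argument then pins down $B$: setting $\phi(x):=u_2(x)/u_1(x)$, the pointwise maximizer of $u_2(X)\id_B-\mu(dQ/d\p)X\id_B$ is
\[
B=\{\phi(X)>t\},\qquad t:=\mu/z.
\]

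The final step is to tune $t$ so that the budget binds, $\E^Q[X\id_B]=c_2$, and then set $A:=B^c$. The map $t\mapsto \E^Q[X\id_{\{\phi(X)>t\}}]$ is decreasing and right-continuous, equal to $\E^Q[X]$ for small $t$ and to $0$ for large $t$; its only jumps correspond to atoms of $\phi(X)$. Here is where Assumption~\ref{ass:ER} does real work: using the independent uniform $U$, I would split a possible atom at the threshold $t^*$ by taking
\[
B:=\{\phi(X)>t^*\}\cup\bigl(\{\phi(X)=t^*\}\cap\{U\le s\}\bigr)
\]
for a suitable $s\in[0,1]$, which delivers $\E^Q[X\id_B]=c_2$ exactly. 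The main obstacle is precisely this atom-splitting step: the Neyman-Pearson selection of a threshold set is classical, but a clean treatment of possible atoms of $\phi(X)$ at $t^*$ is what makes external randomization essential.
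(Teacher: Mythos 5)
Your proposal is correct, but it reaches the equilibrium by a genuinely different route than the paper. You and the paper fix the same price $Q$ and dispose of agent 1 by the identical computation (the one from Theorem \ref{th:1WT}(ii)), but for agent 2 the paper never introduces a multiplier: it constructs $A_1,A_2$ directly as \emph{tail events} of the ratio $W=u_1(X)/u_2(X)$, invoking the existence (from \citealp{WZ21}) of nested tail events of every probability $\lambda\in(0,1)$ on an atomless space, pins down the level $\lambda^*$ by continuity of $\lambda\mapsto \E^Q[X\id_{A^\lambda}]$ and the intermediate value theorem, and then verifies agent 2's optimality by a bare-hands chain of inequalities that exploits the threshold structure ($u_2(X)$ dominated by a multiple of $u_1(X)$ on $A_1$ and dominating it on $A_2$), after a preliminary reduction to $\E[\xi_1],\E[\xi_2]>0$ and $\p(X=0)=0$. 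Your route instead runs the Lagrangian/Neyman--Pearson machine: pointwise maximization of $u_2(y)-\mu(\d Q/\d\p)y$ over $y\in[0,X(\omega)]$ gives the bang-bang structure and the threshold set in $\phi(X)=u_2(X)/u_1(X)$, and you match the budget by splitting the atom at $t^*$ with the independent uniform $U$. What your approach buys is self-containedness with respect to Assumption \ref{ass:ER}: you use $U$ explicitly rather than through the existence of tail events of arbitrary probability, and the atom-splitting is transparent (note $U$ remains uniform and independent of $X$ under $Q$ since $\d Q/\d \p$ is a function of $X$, so the split is linear in $s$). What the paper's approach buys is avoiding duality altogether: its verification for agent 2 is a direct estimate valid for every budget-feasible $Y$, whereas in your write-up the corresponding step is left implicit --- to finish you must state the weak-duality/complementary-slackness line that converts ``$X\id_B$ maximizes the Lagrangian over all $0\le Y\le X$'' plus ``binding budget'' and $\mu\ge 0$ into optimality for the constrained problem (your bang-bang remark about optimizers of the constrained problem is not by itself enough, since it presumes an optimizer exists). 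Two further small points you should make explicit: the degenerate endowments $\E^Q[\xi_i]=0$ (where $t^*$ may be infinite because $\phi(X)$ can be unbounded near $0$) are handled trivially as in the paper's opening reduction, and the states $\{X=0\}$ are harmless under the convention $0/0=0$.
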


\begin{proof}
Without loss of generality, we can assume $\E[\xi_1]>0$ and $\E[\xi_2]>0$; otherwise $(X,0)$ or $(0,X)$ is an equilibrium allocation with any equilibrium price.
Moreover, we can assume $\p(X=0)=0$, because the allocation on the event $\{X=0\}$ is trivial.

For a random variable $W$,
a \emph{tail event} is an event $A$ such that for some $w\in \R$, $W\ge w$ on $A$ and $W\le w$ on $A^c$. In an atomless probability space, a tail event with any given probability $\lambda \in(0,1)$ exists, as shown by \cite{WZ21}.
Let $W=u_1(X)/u_2(X)$.
For $\lambda \in (0,1)$,
let $(A^\lambda)_{\lambda\in (0,1)}$ be an increasing family of tail events of $W$ such that $\p(A^\lambda)=\lambda$.
We can check that the mapping $\lambda \mapsto \E^Q[X\id_{A^\lambda}]$ is continuous (because $\lambda\mapsto Q(A^\lambda)$ is continuous) and its range is the open interval $(0,\E^Q[X])$.
Therefore, there exists $\lambda^*\in (0,1)$ such that $\E^Q[X\id_{A^\lambda}]=\E^Q[\xi_1]\in(0,\E^Q[X])$.
Write $A_1=A^\lambda$ and $A_2=(A^\lambda)^c$.
By definition of the tail event,  for some $w^*\ge 0$, we have  $W\ge w^*$ on $A_1$ and $W\le w^*$ on $A_2$.
Since  $W>0$ almost surely and $\E^Q[X\id_{A_1}]<\E^Q[X]$, thus $\p(A_1)<1$, we know $w^*>0$.

We will show that $(X_1,X_2,Q)=(X\id_{A _1}, X\id_{A_2},Q)$ is a competitive equilibrium.
The budget condition is satisfied by $\E^Q[X\id_{A_1}]=\E^Q[\xi_1]$ and
$\E^Q[X\id_{A_2}]=\E^Q[X]-\E^Q[\xi_1]=\E^Q[\xi_2]$.
Market clearance is immediate.
It remains to show individual optimality. 
  Denote by $z=\E[u_1(X)/X]$. For any $Y$ with $0\le Y\le X$ such that $\E^Q[Y]\le \E^Q[\xi_1]=\E^Q[X\id_{A_1}]$, using that $x\mapsto u_1(x)/x$ is increasing, we have
\begin{align*}
 \E\left[Y\frac{u_1(Y)}{Y}\right]
\le\E\left[Y\frac{u_1(X)}{X}\right]=z\E^Q[Y]\le z\E^Q[\xi_1]=z\E^Q[X\id_{A_1}]=\E[u_1(X_1)].
\end{align*}
Hence, $\E[u_1(Y)]\le \E[u_1(X_1)]$.
For any $Y$  with $0\le Y\le X$  such that $\E^Q[Y]\le \E^Q[\xi_2]=\E^Q[X\id_{A_2}]$,
we have
\begin{align*}
\E[u_2(Y)] \le\E\left[Y\frac{u_2(X)}{X}\right]&\le \E\left[Y\frac{u_1(X)}{w^* X}\id_{A_1}\right]+\E\left[Y\frac{u_2(X)}{X}\id_{A_2}\right]
\\&=\frac{z}{w^*}\E^Q[Y\id_{A_1}]+\E\left[Y\frac{u_2(X)}{X}\id_{A_2}\right].
\end{align*}
Moreover,  $\E^Q[Y]\le \E^Q[X\id_{A_2}]$ implies
 $\E^Q[Y\id_{A_1}]\le\E^Q[(X-Y)\id_{A_2}]$.
Hence,
\begin{align*}
\E[u_2(Y)]&\le
\frac{z}{w^*}\E^Q[(X-Y)\id_{A_2}] +\E\left[Y\frac{u_2(X)}{X}\id_{A_2}\right]\\
&\le \E\left[\frac{X-Y}{w ^* X} u_1(X)\id_{A_2}\right]+\E\left[Y\frac{u_2(X)}{X}\id_{A_2}\right]\\
&\le \E\left[\frac{X-Y}{X}u_2(X)\id_{A_2}\right] +\E\left[Y\frac{u_2(X)}{X}\id_{A_2}\right]=\E[u_2(X\id_{A_2})].
\end{align*}
Hence, $\E[u_2(Y)]\le \E[u_2(X_2)]$.
Therefore, individual optimality holds, and $(X_1,X_2,Q)$ is a competitive equilibrium.
\end{proof}
The equilibrium price in \eqref{eq:existence}
has the form of \eqref{eq:general_price} with $(\lambda_1,\lambda_2) =(1,0)$.
Because the positions of agents $1$ and $2$ are symmetric, we immediately get another equilibrium price by replacing $u_1$  in \eqref{eq:general_price} with $u_2$. Therefore, the equilibrium price is generally not unique, unless $u_1=u_2$ (seen in Theorem \ref{thm:CE}).
Different equilibrium prices correspond to different equilibrium allocations, but they can be derived from the same vector of initial endowments. Therefore, for a given set of initial endowments, the competitive equilibrium is generally not unique, which is in contrast to the case of strictly
 risk-averse agents, where a unique competitive equilibrium can often be derived from given initial endowments.

The proof techniques for Proposition \ref{prop:two_agents} do not generalize to the case of $n\ge 3$ agents because our construction of the equilibrium (jackpot) allocation $(X_1,X_2)=(X\id_{A_1},X\id_{A_2})$ heavily relies on the ratio $u_1(x)/u_2(x)$. Roughly speaking, we choose $A_1$ as the event where $u_1(X)/u_2(X)$ is large,
and we choose $A_2$ as the event where $u_1(X)/u_2(X)$ is small. This approach is similar to Statement (v) of Theorem \ref{thm:Pareto}, but its generalization to $n\ge 3$ agents is unclear.

\subsection{A general fixed-point approach}

\noindent We outline a general approach under some assumptions, and prove the existence of competitive equilibria in the special case of proportional endowments. First, we make an assumption of
 no-ties in the weighted utility functions.

\begin{myassump}{NT}\label{ass:NT}
For   $i\neq j $,   $ \{x\in \R_+: \lambda_iu_i(x)=\lambda_ju_j(x)\}$ is finite for any $\lambda_i, \lambda_j >0$, and $X$ is continuously distributed.
\end{myassump}

A simple example of utility functions satisfying Assumption \ref{ass:NT} is that
agent $i$ is more risk seeking than agent $i+1$ for $i\in [n-1]$; that is, $u_i=T_{i}\circ u_{i+1}$ for some increasing and strictly convex function $T_{i}$ for  $i \in
 [n-1]$. In this case, for any $\lambda_i,\lambda_j> 0$, $i\ne j$, we have that $\lambda_i u_i$ and $\lambda_j u_j$ cross at most once under Assumption \ref{ass:EURS}.
For instance, this holds for $u_i(x)=x^{\alpha_i}$, $i\in [n]$, with distinct values of $\alpha_i$. For an illustration, see the left panel of Figure \ref{fig:example_UPF_not_simplex}.

For any given $\blambda=(\lambda_1, \dots, \lambda_n) \in \Delta_n$,
define the sets $A^\blambda_i = \{\lambda_iu_i(X)=V_\blambda(X)\}$ for $i\in [n]$ and let $\mathbf J^{\blambda} =(\id_{A^{\blambda}_1}, \dots, \id_{A^{\blambda}_n})$.  For $i\ne j$,  Assumption \ref{ass:NT} implies that  $ \p( \lambda_iu_i(X)=\lambda_ju_j(X))=0$ because $X$ is continuously distributed. Hence,  $\p(A_i^{\blambda}
\cap A_j^{\blambda})=0$ and $\sum_{i=1}^n \id_{A^\blambda_i}=1$ (almost surely).
Thus, $X\mathbf J^{\blambda}$ is a jackpot allocation.

Next, we introduce three useful objects.
Define a probability measure $Q^\blambda$ by
 $$
      \frac{\d Q^\blambda}{ \d \p} = \frac{V_{\boldsymbol{\lambda}}(X)}{X}\frac{1}{\E[V_{\boldsymbol{\lambda}}(X)/X]}   \mbox{ ~~ with the convention $0/0=0$},
$$
 a function $f:\Delta_n \to \Delta_n$ by
$$f(\blambda) =\left(\frac{\E^{Q^\blambda}[X\id_{A^\blambda_1}]}{\E^{Q^\blambda}[X]}, \dots,  \frac{\E^{Q^\blambda}[X\id_{A^\blambda_n}]}{\E^{Q^\blambda}[X]}\right) = \frac{\E^{Q^\blambda}[X\mathbf J^{\blambda}]}{\E^{Q^\blambda}[X]}, ~~~\blambda \in \Delta_n,$$
and a function $g:\Delta_n \to \Delta_n$ by
$$g(\blambda)=\left(\frac{\E^{Q^\blambda}[\xi_1]}{\E^{Q^\blambda}[X]}, \dots, \frac{\E^{Q^\blambda}[\xi_n]}{\E^{Q^\blambda}[X]}\right), ~~~\blambda \in \Delta_n.$$
Note that $\E^{Q^\blambda}[X]>0$ holds under Assumption \ref{ass:EURS}.
Our goal is to find $\blambda \in \Delta_n$ such that $f(\blambda)=g(\blambda)$. The next proposition justifies that finding such $\blambda$ is sufficient for finding a competitive equilibrium.
\begin{proposition}
    \label{prop:existence}
    Under Assumptions \ref{ass:EURS} and \ref{ass:NT}, if  $\blambda  \in \Delta_n$ and
  $f(\blambda)=g(\blambda)$,
     then
     $ (X\mathbf J^{\blambda}, Q^{\blambda})$ is a competitive equilibrium for the vector of initial endowments $(\xi_1, \dots, \xi_n)\in \mathbb A_n(X)$.
\end{proposition}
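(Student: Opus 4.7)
I will verify the three defining conditions of a competitive equilibrium for $(X\mathbf J^{\blambda}, Q^{\blambda})$: market clearance, the budget constraint for each agent, and individual optimality in \eqref{eq:indi_2}. The argument closely parallels the proof of Theorem \ref{th:1WT}(ii); the only new ingredient is that the fixed-point equation $f(\blambda) = g(\blambda)$ replaces the tautological choice $\boldsymbol\xi = \mathbf X$ used there.

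\textbf{Clearance and budget.} Under Assumption \ref{ass:NT} with $X$ continuously distributed, $\p(\lambda_i u_i(X) = \lambda_j u_j(X)) = 0$ for $i \ne j$, so the events $A_i^{\blambda}$ are pairwise disjoint up to null sets and their union is $\Omega$. Hence $\sum_{i=1}^n \id_{A_i^{\blambda}} = 1$ almost surely, giving $\sum_{i=1}^n X\id_{A_i^{\blambda}} = X$, i.e., market clearance. The budget equality $\E^{Q^{\blambda}}[X\id_{A_i^{\blambda}}] = \E^{Q^{\blambda}}[\xi_i]$ for every $i$ then follows from $f(\blambda) = g(\blambda)$ after multiplying both sides by $\E^{Q^{\blambda}}[X]$, which is positive because $\p(X>0) > 0$ under Assumption \ref{ass:EU}.

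\textbf{Individual optimality.} Fix $i \in [n]$, set $X_i = X\id_{A_i^{\blambda}}$, and write $z = \E[V_{\blambda}(X)/X]$. A direct computation yields
$$
\lambda_i \E[u_i(X_i)] = \E[\lambda_i u_i(X)\id_{A_i^{\blambda}}] = \E[V_{\blambda}(X)\id_{A_i^{\blambda}}] = z\,\E^{Q^{\blambda}}[X\id_{A_i^{\blambda}}] = z\,\E^{Q^{\blambda}}[\xi_i].
$$
For any feasible $Y_i$ with $0 \le Y_i \le X$ and $\E^{Q^{\blambda}}[Y_i] \le \E^{Q^{\blambda}}[\xi_i]$, I would then chain
$$
\lambda_i \E[u_i(Y_i)] = \E\left[Y_i \frac{\lambda_i u_i(Y_i)}{Y_i}\right] \le \E\left[Y_i \frac{\lambda_i u_i(X)}{X}\right] \le \E\left[Y_i \frac{V_{\blambda}(X)}{X}\right] = z\,\E^{Q^{\blambda}}[Y_i] \le z\,\E^{Q^{\blambda}}[\xi_i],
$$
where the first inequality uses that $x \mapsto u_i(x)/x$ is increasing (since $u_i$ is convex with $u_i(0) = 0$ under Assumption \ref{ass:EURS}) combined with $Y_i \le X$, and the second uses $\lambda_i u_i \le V_{\blambda}$ pointwise. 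For any $i$ with $\lambda_i > 0$ this immediately gives $\E[u_i(Y_i)] \le \E[u_i(X_i)]$, so $X_i$ solves \eqref{eq:indi_2}.

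\textbf{Main obstacle.} The only delicate point is the case $\lambda_i = 0$, where the chain above cannot be divided by $\lambda_i$. Here $A_i^{\blambda} = \{V_{\blambda}(X) = 0\}$ coincides with $\{X = 0\}$ up to a $\p$-null set (because $u_j(0) = 0$ and $u_j$ is strictly increasing for any $j$ with $\lambda_j > 0$), so $X_i = 0$ almost surely. At the same time $g(\blambda)_i = 0$ forces $\E^{Q^{\blambda}}[\xi_i] = 0$, hence any feasible $Y_i$ satisfies $\E^{Q^{\blambda}}[Y_i] = 0$ and therefore $Y_i = 0$ $Q^{\blambda}$-almost surely. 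Since $\d Q^{\blambda}/\d\p > 0$ on $\{X > 0\}$ and $0 \le Y_i \le X = 0$ on $\{X = 0\}$, we conclude that $Y_i = 0$ $\p$-almost surely, so $\E[u_i(Y_i)] = 0 = \E[u_i(X_i)]$. This closes the last case and completes the verification.
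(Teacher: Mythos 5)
Your proof is correct and follows essentially the same route as the paper's: the same clearance/budget verification from $\sum_{i=1}^n \id_{A_i^{\blambda}}=1$ and $f(\blambda)=g(\blambda)$, and the same individual-optimality chain $\lambda_i\E[u_i(Y_i)]\le \E[Y_i\lambda_i u_i(X)/X]\le \E[Y_i V_{\blambda}(X)/X]=z\,\E^{Q^{\blambda}}[Y_i]$ resting on $x\mapsto u_i(x)/x$ being increasing and $\lambda_i u_i\le V_{\blambda}$. Your explicit treatment of the case $\lambda_i=0$ (where $f(\blambda)=g(\blambda)$ forces $\E^{Q^{\blambda}}[\xi_i]=0$ and hence $Y_i=0$ $\p$-a.s.) is a sound extra detail that the paper's proof passes over when it divides by $\lambda_i$.
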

\begin{proof}
Write $(X_1, \dots, X_n, Q)=(X\mathbf J^{\blambda}, Q^{\blambda})$ and
$(\lambda_1, \dots, \lambda_n)=\blambda$.
    The equality $f(\blambda)=g(\blambda) $ implies $\E^Q[X_i]=\E^Q[\xi_i]$ for $i\in [n]$.
Let $z=\E[V_\blambda(X)/X]$. Fix $i\in [n]$. For any $Y$ with $0\le Y\le X$ such that $\E^Q[Y]\le \E^{Q}[\xi_i]$, we have
\begin{align*}
\E[\lambda_iu_i(Y)]&=\E\left[Y\frac{\lambda_iu_i(Y)}{Y}\right]\\ &\le \E\left[Y\frac{\lambda_iu_i(X)}{X}\right] \le\E\left[Y\frac{V_\blambda(X)}{X}\right]=z\E^Q[Y]\le z\E^Q[\xi_i]=z\E^Q[X_i].
\end{align*}
Moreover, since $A^\blambda_i = \{\lambda_iu_i(X)=V_\blambda(X)\}$, we have $V_\blambda(X)\id_{A^\blambda_i}=\lambda_iu_i(X)\id_{A^\blambda_i}=\lambda_iu_i(X_i)$, and this implies
$$z\E^Q[X_i]=\E\left[X\id_{A^\blambda_i}\frac{V_\blambda(X)}{X}\right]=\E[\lambda_iu_i(X_i)].$$
Hence, $\E[u_i(Y_i)]\le \E[u_i(X_i)]$ and thus $X_i$ satisfies individual optimality for agent $i$. The market clearance condition $\sum_{i=1}^n X_i=X$ holds
true because $\sum_{i=1}^n \id_{A^\blambda_i}=1$. Therefore, $(X_1, \dots, X_n, Q)$ is a competitive equilibrium.
\end{proof}

The remaining task is to find  $\blambda$ with $f(\blambda)=g(\blambda)$. We do not know a general solution to this problem, but in the simplified scenario of proportional endowments, the problem can be solved.
\begin{myassump}{PE}\label{ass:PE}
The initial endowment vector $(\xi_1,\dots,\xi_n)$ is equal to $(\theta_1 X, \dots, \theta_n X)$ for some $(\theta_1, \dots, \theta_n) \in \Delta_n$.
\end{myassump}
\noindent Under Assumption \ref{ass:PE}, we have $g(\blambda)=(\theta_1, \dots, \theta_n)$ for any $\blambda \in \Delta_n$.
In this situation, we can show that  $f(\blambda)=g(\blambda)$ holds, through a technique established by \cite{JR76SM}.

\begin{proposition}
\label{prop:PE}
If Assumptions \ref{ass:EURS}, \ref{ass:NT} and \ref{ass:PE} hold, then there exists a competitive equilibrium of the form
     $ (X\mathbf J^{\blambda}, Q^{\blambda})$ for some $\blambda\in \Delta_n$.
\end{proposition}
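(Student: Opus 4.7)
\textbf{Proof plan for Proposition \ref{prop:PE}.} By Proposition \ref{prop:existence} it suffices to produce $\blambda^*\in\Delta_n$ with $f(\blambda^*)=g(\blambda^*)$, and under Assumption \ref{ass:PE} the map $g$ is identically equal to $\boldsymbol\theta=(\theta_1,\dots,\theta_n)$. So the task reduces to solving the equation $f(\blambda^*)=\boldsymbol\theta$ on $\Delta_n$. My plan has two steps: (a) establish continuity of $f$ on the whole simplex together with a boundary vanishing property; (b) apply a Gale--Nikaido-type fixed-point argument.

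\textbf{Step (a): Continuity and boundary property.} First I would show that $f:\Delta_n\to\Delta_n$ is continuous. The key mechanism is Assumption \ref{ass:NT}: for distinct $i,j$ and any positive $\lambda_i,\lambda_j$ the level set $\{x\in\R_+:\lambda_i u_i(x)=\lambda_j u_j(x)\}$ is finite, and since $X$ is continuously distributed the tie event $\{\lambda_i u_i(X)=\lambda_j u_j(X)\}$ has probability zero. Consequently, whenever $\blambda^{(k)}\to\blambda^*$ in $\Delta_n$, one has $\id_{A_i^{\blambda^{(k)}}}\to\id_{A_i^{\blambda^*}}$ almost surely for every $i$ (even at boundary points of the simplex, where coordinates vanish, since $V_{\blambda^*}(X)>0$ on $\{X>0\}$ forces the sets attached to collapsing weights to become null). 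The densities $V_{\blambda}(X)/X$ are uniformly bounded on $\Delta_n$ because $u_i$ convex with $u_i(0)=0$ gives $u_i(X)/X\leq u_i(\esssup X)/\esssup X$; dominated convergence then yields $f(\blambda^{(k)})\to f(\blambda^*)$. The boundary vanishing property is then immediate: if $\lambda_i=0$, then $A_i^{\blambda}\subseteq\{V_{\blambda}(X)=0\}\subseteq\{X=0\}$ under Assumption \ref{ass:EURS}, so $f(\blambda)_i=0$.

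\textbf{Step (b): Gale--Nikaido fixed-point argument.} Define the continuous self-map $\Phi:\Delta_n\to\Delta_n$ by
$$\Phi(\blambda)=\frac{\blambda+(\boldsymbol\theta-f(\blambda))_+}{1+\sum_{j=1}^n(\theta_j-f(\blambda)_j)_+},$$
where $(\cdot)_+$ denotes the componentwise positive part. Brouwer's fixed-point theorem yields $\blambda^*\in\Delta_n$ with $\Phi(\blambda^*)=\blambda^*$. Setting $S=\sum_j(\theta_j-f(\blambda^*)_j)_+$, the fixed-point equation reduces to $\lambda_i^* S=(\theta_i-f(\blambda^*)_i)_+$ for every $i$. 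If $S=0$, then $\theta_i\leq f(\blambda^*)_i$ for all $i$, and since both vectors lie in $\Delta_n$ the inequalities must be equalities, giving $f(\blambda^*)=\boldsymbol\theta$. If instead $S>0$, let $I=\{i:\lambda_i^*>0\}$; then $I=\{i:\theta_i>f(\blambda^*)_i\}$, while for $i\notin I$ the boundary property of Step (a) forces $f(\blambda^*)_i=0$ and hence $\theta_i=0$. Summing over $I$ gives $\sum_{i\in I}\theta_i=\sum_{i\in I}f(\blambda^*)_i=1$, contradicting the strict inequality $\theta_i>f(\blambda^*)_i$ on $I$. Hence $S=0$ and $f(\blambda^*)=\boldsymbol\theta$, yielding the desired competitive equilibrium via Proposition \ref{prop:existence}.

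\textbf{Main obstacle.} The delicate point is the continuity of $f$ at the boundary of $\Delta_n$: as some coordinates of $\blambda$ cross zero, the partition $(A_1^\blambda,\dots,A_n^\blambda)$ can reshuffle mass discontinuously unless ties between weighted utilities occur on a null set. This is exactly what Assumption \ref{ass:NT}, combined with $X$ being continuously distributed, is designed to secure; without it the argument breaks down because $\Phi$ need not be continuous. Once continuity and the boundary vanishing property are in hand, the Gale--Nikaido-style fixed-point step is essentially routine.
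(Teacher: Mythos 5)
Your proposal is correct, and its skeleton (reduce via Proposition \ref{prop:existence} to solving $f(\blambda)=\boldsymbol\theta$, then establish continuity of $f$ plus the property that $\lambda_i=0$ forces $f(\blambda)_i=0$) coincides with the paper's Lemma \ref{lem:PE}. Where you genuinely diverge is in the two proofs of these ingredients and, more importantly, in the final topological step. For continuity, the paper bounds the probability of the symmetric difference $A_i^{\blambda}\,\triangle\,A_i^{\bzeta}$ by localizing the switching near the finitely many crossing points of $\lambda_i u_i$ and $\lambda_j u_j$, whereas you argue almost-sure convergence of the indicators $\id_{A_i^{\blambda^{(k)}}}$ (a.s.\ uniqueness of the maximizer under Assumption \ref{ass:NT}, plus $\p(X=0)=0$) and conclude by dominated convergence with the bound $V_\blambda(X)/X\le \max_i u_i(\esssup X)/\esssup X$; both hinge on Assumption \ref{ass:NT} in the same way, and your route is arguably cleaner. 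For the existence of a solution, the paper cites Jamison and Ruckle's lemma that a continuous self-map of $\Delta_n$ carrying each face into itself is surjective, while you construct the explicit Gale--Nikaido map $\Phi$ and apply Brouwer; your case analysis at a fixed point ($S=0$ versus $S>0$, with the face-preservation property killing the $S>0$ case via $\sum_{i\in I}\theta_i=\sum_{i\in I}f(\blambda^*)_i=1$ against the strict inequalities on the nonempty set $I$) is sound, including the fact that $f(\blambda)\in\Delta_n$ because the sets $A_i^\blambda$ are a.s.\ disjoint and exhaustive. What each approach buys: the paper's citation is shorter and yields surjectivity of $f$ in one stroke; your argument is self-contained, requiring only Brouwer's theorem, and of course also handles every $\boldsymbol\theta\in\Delta_n$ since it was arbitrary. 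Either way the conclusion follows from Proposition \ref{prop:existence} exactly as you state.
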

\begin{proof}
A face of $\Delta_n$ is the set  $\Delta_n^D=\{(x_1, \dots, x_n)\in \Delta_n:  x_j=0~ \mbox{for}~ j \in D\}$ for some  $D \subseteq [n]$.
Lemma \ref{lem:PE} below guarantees that $f$ is a continuous function that carries each face of $\Delta_n$ into itself.
This condition allows us to apply \citet[Lemma 2.1]{JR76SM}, which implies that $f$ is surjective. Hence, there exists $\blambda\in \Delta_n$ such that  $f(\blambda)=(\theta_1, \dots, \theta_n)=g(\blambda)$. By Proposition \ref{prop:existence}, $(X\mathbf J^{\blambda}, Q^\blambda)$ is a competitive equilibrium.
\end{proof}

\begin{lemma}
\label{lem:PE}
If Assumptions \ref{ass:EURS} and  \ref{ass:NT} hold, then $f$ is a continuous function that carries each face of $\Delta_n$ into itself.
\end{lemma}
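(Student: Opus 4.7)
The plan is to rewrite $f$ in a way that separates how $\blambda$ enters the density $\d Q^{\blambda}/\d \p$ from how it enters the partition $(A_i^{\blambda})_{i\in[n]}$. Substituting the explicit density and cancelling the factor $X$ together with the normalization $1/\E[V_{\blambda}(X)/X]$ gives $f(\blambda)_i=\E[V_{\blambda}(X)\id_{A_i^{\blambda}}]/\E[V_{\blambda}(X)]$ for each $i\in[n]$. Under Assumption \ref{ass:EURS} each $u_j$ is strictly increasing with $u_j(0)=0$, so $V_{\blambda}(X)>0$ on $\{X>0\}$ whenever $\blambda\ne \mathbf 0$, and hence the denominator is strictly positive. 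For the face property, fix $\blambda\in \Delta_n^D$, i.e.\ $\lambda_j=0$ for $j\in D$; since $X$ is continuously distributed under Assumption \ref{ass:NT} we have $\p(X=0)=0$, and on $\{X>0\}$ one has $\lambda_j u_j(X)=0<V_{\blambda}(X)$, so $\p(A_j^{\blambda})=0$, giving $f(\blambda)_j=0$ and thus $f(\blambda)\in \Delta_n^D$.

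For continuity, fix $\blambda\in \Delta_n$ and take $\blambda_k\to \blambda$ in $\Delta_n$. By Assumption \ref{ass:NT} together with the continuity of the distribution of $X$, one has $\p(\lambda_i u_i(X)=\lambda_j u_j(X))=0$ whenever $i\ne j$ and $\lambda_i,\lambda_j>0$; combined with the observation from the previous paragraph that an index with $\lambda_\ell=0$ cannot attain the maximum on $\{X>0\}$, this shows that for almost every $\omega$ the maximum defining $V_{\blambda}(X(\omega))$ is attained at a unique index $i^{*}(\omega)$. For such $\omega$, continuity of $\blambda\mapsto (\lambda_\ell u_\ell(X(\omega)))_{\ell\in[n]}$ forces $i^{*}(\omega)$ to remain the unique maximizer at $\blambda_k$ once $k$ is large enough, so $\id_{A_i^{\blambda_k}}(\omega)\to \id_{A_i^{\blambda}}(\omega)$ for every $i$. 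Since $X\in L^{\infty}_+$ and each $u_\ell$ is continuous, $V_{\blambda_k}(X)$ is uniformly dominated by $\max_\ell \|u_\ell(X)\|_{\infty}$ and converges pointwise to $V_{\blambda}(X)$; dominated convergence then yields $\E[V_{\blambda_k}(X)\id_{A_i^{\blambda_k}}]\to \E[V_{\blambda}(X)\id_{A_i^{\blambda}}]$ and $\E[V_{\blambda_k}(X)]\to \E[V_{\blambda}(X)]>0$, so $f(\blambda_k)\to f(\blambda)$.

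The main obstacle is that $\id_{A_i^{\blambda}}$ is not continuous in $\blambda$ pointwise, as it jumps whenever a tie in the pointwise maximization $\max_\ell \lambda_\ell u_\ell(X(\omega))$ is broken. Assumption \ref{ass:NT} is precisely designed to rule out such ties on a set of positive probability, so the discontinuities of the indicators form a null set and dominated convergence applies without further work. All remaining ingredients (continuity of $V_{\blambda}$ in $\blambda$, uniform $L^{\infty}$ bounds from $X\in L^{\infty}_+$, and positivity of the denominator) are routine.
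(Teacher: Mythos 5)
Your proposal is correct, and the overall skeleton matches the paper's: rewrite $f_i(\blambda)$, up to a common normalization, as $\E[V_{\blambda}(X)\id_{A_i^{\blambda}}]$, note that the denominator $\E[V_{\blambda}(X)]$ (respectively $\hat f=\sum_i f_i$ in the paper) is strictly positive under Assumption \ref{ass:EURS}, and get the face property from $\p(A_i^{\blambda})=0$ when $\lambda_i=0$. Where you genuinely diverge is the continuity step, which is the heart of the lemma. The paper argues quantitatively: for $\bzeta$ close to $\blambda$ it bounds $\p(A_i^{\blambda}\triangle A_i^{\bzeta})$ by the probability that $X$ falls in small neighbourhoods of the finitely many crossing points of $\lambda_i u_i$ and $\lambda_j u_j$, and deduces continuity of $\blambda\mapsto\E[u_i(X)\id_{A_i^{\blambda}}]$ from that estimate. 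You instead argue qualitatively: Assumption \ref{ass:NT} (ties among positive-weight indices are null, zero-weight indices never attain the maximum on $\{X>0\}$, and $\p(X=0)=0$) gives an almost surely unique pointwise argmax, hence $\id_{A_i^{\blambda_k}}\to\id_{A_i^{\blambda}}$ almost surely along any $\blambda_k\to\blambda$, and sequential continuity follows from dominated convergence with the constant bound $\max_\ell u_\ell(\|X\|_\infty)$ coming from $X\in L^\infty_+$. Both routes use Assumption \ref{ass:NT} in an essential way, but your version avoids the paper's explicit constants $c_1,c_2$ (whose uniform-in-$x$ justification the paper leaves implicit) at the cost of being non-quantitative; since only continuity is needed to invoke the fixed-point result of Jamison and Ruckle, your DCT argument is fully sufficient and arguably the cleaner of the two.
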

\begin{proof}

For $i\in [n]$, define $f_i: \Delta_n \to \R_+$  by  $f_i(\blambda)=\E[V_\blambda(X)\id_{A^\blambda_i}]  $ for $\blambda \in \Delta_n$.
Note that
$$ f_i(\blambda)=\E[V_\blambda(X)\id_{A^\blambda_i}]=\E[\lambda_i u_i(X)\id_{A^\blambda_i}]=\E[V_\blambda (X)/X] {\E^{Q^{\blambda}}[X\id_{A_i^{\blambda}}]}.$$
Let $\boldsymbol{\lambda}^m\to\boldsymbol{\lambda}$ in $\Delta_n$. As $X$ is continuously
distributed and $u_1,\ldots,u_n$ are continuous, Assumption \ref{ass:NT} implies  
$$
   \p(\lambda_j u_j(X)=\lambda_k u_k(X))=0
   \quad\text{for all } j\ne k
$$
whenever $\lambda_j,\lambda_k>0$. Hence the maximizer of
$k\mapsto \lambda_k u_k(X)$ is almost surely unique. 
Indeed, outside the null set of ties, let \(r\) be the unique maximizer of
\(k\mapsto \lambda_k u_k(X)\). Then
\[
   \delta
   :=
   \min_{k\ne r}\{\lambda_r u_r(X)-\lambda_k u_k(X)\}>0.
\]
Since \(\boldsymbol{\lambda}^m\to \boldsymbol{\lambda}\), the same strict inequalities hold for
\(\boldsymbol{\lambda}^m\) in place of \(\lambda\) for all sufficiently large \(m\). Hence
the maximizer is eventually unchanged, which gives 
$$
   \mathbf 1_{A_i^{{\blambda}^m}}\to \mathbf 1_{A_i^{\blambda}}
   \quad\text{almost surely}.
$$
Since $X$ is bounded and $u_i$ is continuous, dominated convergence yields
$$
   \E\!\left[\lambda_i^m u_i(X)\mathbf 1_{A_i^{\blambda^m}}\right]
   \to
   \E\!\left[\lambda_i u_i(X)\mathbf 1_{A_i^\blambda}\right].
$$
Thus $f_i$ is continuous.
Therefore,   $\hat f:=\sum_{i=1}^n f_i$ is also a continuous function. Moreover, under Assumption \ref{ass:EURS}, we have $\hat f>0$.
Since $f=(f_1/\hat f, \dots, f_n /\hat f)$, we know that $f$ is continuous. Moreover, if $\lambda_i=0$, then $\p(A^\blambda_i)=0$ and thus $\E^Q[X\id_{A^\blambda_i}]/\E^Q[X]=0$. Hence, for any face $\Delta_n^D$ and $\blambda\in \Delta_n^D$, we have $f(\blambda) \in \Delta_n^D$.
\end{proof}

\section{An equilibrium for mixed agents and binary aggregate endowment}
\label{app:mixed}

\noindent Here, we   consider the general Assumption \ref{ass:EUM}, and for simplicity assume that $X$ only takes two values $a$ and $b$, with $0<a<b$.
Based on Theorem
 \ref{thm:CE_only} and Example \ref{ex:RARS}.a,
a candidate allocation $\mathbf X=(X_1,\dots,X_n)$ is given by
\begin{align}
    \label{eq:CE_cand}
X_i &= a_i \id_{\{X=a\}}, ~i\in T
\mbox{~~~and~~~}X_i  = b J_i \id_{\{X=b\}} , ~i\in S,
\end{align}
where $a_i\in \R_+$
satisfies $\sum_{i\in T}a_i=a$ and
$(J_i)_{i\in S}$ is a jackpot vector.
A possible equilibrium price $Q$ is given by
\begin{equation}
    \label{eq:Q_cand}
\frac{\d Q}{\d \p}= \alpha \id_{\{X=a\}} + \beta\id_{\{X=b\}} ~~ \mbox{ for some $\alpha,\beta >0$.}
\end{equation}
The next result shows that
 \eqref{eq:CE_cand}--\eqref{eq:Q_cand} yield the only possible form of competitive equilibria in this setting.

\begin{theorem}[An equilibrium for the mixed case]
\label{thm:CEEUM} Suppose that Assumptions \ref{ass:ER} and \ref{ass:EUM} hold and that $X$ only takes two values $a,b$, with $0<a<b$. If
$(\mathbf X,Q)$ has the form
\eqref{eq:CE_cand}--\eqref{eq:Q_cand} and satisfies\footnote{If
 $u_i$ is not differentiable,
$u_i'(a_i)$
designates the left derivative
and $u_i'(0)$ the right derivative.}
\begin{equation}
\label{eq:cond_CE_EUM}
    \min_{i\in T}\frac{u_i'(a_i)}{u_i'(0)} \ge \frac{\alpha}{\beta}\ge \max_{j\in S} \frac{b u_j (a) }{a u_j (b)},
\end{equation}
then it is a competitive equilibrium.
If
$(\mathbf X,Q)$ is a competitive equilibrium
and $\mathbf X_S$
and $\mathbf X_T $
are nontrivial, then
 \eqref{eq:CE_cand}, \eqref{eq:Q_cand}, and
\eqref{eq:cond_CE_EUM} hold.
\end{theorem}

\begin{proof}

We first show that $(\mathbf X,Q)$ is a competitive equilibrium with the initial endowment vector chosen as $\mathbf X$.
Write $p=\p(X=a)$ and $q=\p(X=b)=1-p$.
Fix an agent $i\in [n]$,
and let $x=\E^Q[X_i].$  The optimization problem for agent $i$ is
\begin{equation}
    \label{eq:CEEUM_RA}  \mbox{maximize}~ \E[u_i(Y)] ~~~~ \mbox{over $0\le Y\le X$} ~~~~\mbox{subject to} ~\E^Q [ Y] \le \E^Q[X_i] =x.
\end{equation}

Suppose $i\in T$. By the strict concavity of $u_i$, Jensen's inequality guarantees that any optimal $Y$ for
 \eqref{eq:CEEUM_RA} takes at most two values, denoted by $y$ and $z$, when $X=a$ and $X=b$, respectively. Noting that the budget
constraint in \eqref{eq:CEEUM_RA} is binding, we have
 $z= (x-y\alpha p) /(\beta q)$ and $y\in [0,a_i]$.
  Taking the derivative of $\E[u_i(Y)]$, which exists for almost every $y$, and using the concavity of $u_i$, we get
\begin{align*}
    \frac{\d }{\d y}\E[u_i(Y)]
     =    \frac{\d }{\d y}( pu_i(y)+qu_i(z))
     = pu_i'(y)-q \frac{\alpha p }{\beta q} u_i'(z)
     \ge p u_i'(a_i) -\frac{\alpha p }{\beta  } u_i'(0)\ge 0,
\end{align*}
where the last inequality follows from the first inequality in
\eqref{eq:cond_CE_EUM}.
 Hence, the maximum of  \eqref{eq:CEEUM_RA} is attained by  $y=a_i$, and
   $Y=X_i$ is optimal for agent $i\in T$.
 From the above argument, the first inequality in
\eqref{eq:cond_CE_EUM} is necessary for $Y=X_i$ to be optimal.

Now, suppose $i\in S$.
By the strict convexity of $u_i$ and the linearity of the constraint, any optimal $Y$ for
 \eqref{eq:CEEUM_RA} takes the values $0$, $a$ and $b$ with probabilities $p_0$, $p_1$ and $p_2$, respectively, where $p_0+p_1+p_2=1$.  Noting that the budget constraint in
\eqref{eq:CEEUM_RA} is binding, we have
$p_1=(x-b\beta p_2)/(a\alpha)$.
With this, the objective in
 \eqref{eq:CEEUM_RA}, that is,
 $ p_1 u_i(a)  + p_2 u_i(b)
 $,
 is linearly increasing in $p_2$ by the second inequality in
\eqref{eq:cond_CE_EUM}.
  Hence, the maximum of
 \eqref{eq:CEEUM_RA} is attained by the largest possible value of $p_2$, and
   $Y=X_i$ is optimal for agent $i\in S$.
 From the above argument, the second inequality in
\eqref{eq:cond_CE_EUM} is also necessary for $Y=X_i$ to be optimal.

We next show that for a competitive equilibrium $(\mathbf X,Q)$ the forms
 \eqref{eq:CE_cand}--\eqref{eq:Q_cand} are necessary. With this established,
 condition \eqref{eq:cond_CE_EUM}  is verified above.
First, $\d Q/ \d \p>0$ because $X$ is positive.
On each of $\{X=a\}$ and $\{X=b\}$, $\d Q/\d\p$ must be a constant due to the constant supply and at least two agents in each group
participate. Hence, the form  \eqref{eq:Q_cand} holds.
Let $X_S= \sum_{i\in S} X_i$ and $X_T =X-X_S$. By
 Theorem \ref{thm:CE_only}, we know that $(X_T,\mathbf X_S)$ is a jackpot allocation.
Note that by the strict concavity of the utility functions for agents in $T$,  they will have constant payoffs on $\{X=a\}$ and on $\{X=b\}$.
As a consequence, and also noting that  $(X_T,X_S)$ is a nontrivial jackpot allocation, we have $X_T =a\id_{\{X=a\}}$ or $X_T =b\id_{\{X=b\}}$.
Moreover, individual optimization implies that the
 risk-averse agents invest in the cheaper one between $\{X=a\}$ and $\{X=b\}$.
If $\beta \le \alpha$, then all
 risk-seeking agents will prefer payoffs on $\{X=b\}$, and thus $X_S=b\id_{\{X=b\}}$, but the
 risk-averse agents also invest in $\{X=b\}$, contradicting $X_T =a\id_{\{X=a\}}$.
Therefore, $\beta >\alpha$, and this implies $X_T =a\id_{\{X=a\}}$.
Now using Theorem \ref{thm:EU} we obtain
 \eqref{eq:CE_cand}.
\end{proof}

 The  equilibrium price in \eqref{eq:Q_cand}--\eqref{eq:cond_CE_EUM}  is typically not unique.
To interpret the equilibrium in Theorem \ref{thm:CEEUM}, the
 risk-seeking agents prefer gambling on $\{X=b\}$  over gambling on $\{X=a\}$ because their utility function is convex and $b>a$, whereas the
 risk-averse agents do not take payoffs on the event $\{X=b\}$ because they are more expensive. These two considerations together require the price ratio $\beta/\alpha$ to be large enough, reflected by the first inequality in
 \eqref{eq:cond_CE_EUM}, to drive away the
 risk-averse agents from $\{X=b\}$, but not too much, reflected by the second inequality in
 \eqref{eq:cond_CE_EUM},  to attract  the
 risk-seeking agents. Among their own groups,
 risk-seeking agents gamble and
 risk-averse agents proportionally share, fitting with intuition.

\begin{example}
    \label{ex:RARS3} In the setting of
 Example \ref{ex:RARS}.a, assume that $X$  takes the values $a=1/2$ and $b=3/2$, each with probability $1/2$.
 By Theorem \ref{thm:CEEUM},  $(X_1,\dots,X_n)$  in
\eqref{eq:RARS_a}
 is an equilibrium allocation with equilibrium price $Q$ given by
 $\d Q/\d \p= (1-\epsilon )\id_{\{X=1/2\}} +  (1+\epsilon)\id_{\{X=3/2\}} $ for any $\epsilon \in [1/9,1/8]$. Here,  $\d Q/\d \p$ is close to $1$ and slightly larger on $\{X=b\}$.
\end{example}

In the setting of Theorem \ref{thm:CEEUM}, because $a/n \le a_i$ and $ {u_i'(a/n)} \ge {u_i'(a_i)}
$ for some $i\in T$, the condition
\begin{equation}
    \label{eq:cond-CE-2}\max_{i\in T}  \frac{u_i'(a/n)}{u_i'(0)} \ge  \max_{j\in S}  \frac{b u_j (a) }{a u_j (b)}
\end{equation}
is necessary for  \eqref{eq:cond_CE_EUM}, and thus also for  an equilibrium allocation $\mathbf X$ with nontrivial $\mathbf X_S,\mathbf X_T$ to exist.
In the limiting case $b \downarrow a$,  \eqref{eq:cond-CE-2} fails because the right-hand side converges to $1$ and the left-hand side is less than $1$. Hence, with a constant aggregate endowment $X$, there is no competitive equilibrium $(\mathbf X,Q)$ with nontrivial $\mathbf X_S,\mathbf X_T$.

\section{Extensions of Theorem \ref{thm:RDU}}
\label{app:extensions}

\noindent We briefly discuss a few dimensions in which the statements in Theorem \ref{thm:RDU} can readily be generalized. We did not pursue these generalizations because they do not seem to offer stronger empirical relevance than Assumption \ref{ass:RDU}.

\begin{enumerate}[(a)]
    \item Assumption \ref{ass:RDU}
    allows for $w$ to be concave. In this case, $\overline{w}=w$, and the agents are risk seeking for payoffs valued in  $[0,x_0]$.

\item By inspecting the proof of Theorem \ref{thm:RDU}, it suffices to require $w=\overline{w}$ on $[0,1/n]$, and whether $w$ is concave or convex beyond $1/n$ is irrelevant.
    \item The result remains true if $u$ is convex on $[0,x_0]$ instead of being linear, following the same proof, by noting that an agent with a convex utility function and the probability weighting function $\overline w$ is risk seeking, which is the main step to apply Theorem \ref{thm:CT_improvement}.
    \item
A careful inspection of the proofs of main results reveals that, for most of our results on
 risk-seeking EU agents, it suffices to assume that $x\mapsto u(x)/x$  is increasing instead of the convexity of $u$ (this condition is weaker than convexity with $u(0)=0$). Moreover, for the RDU agents in Assumption \ref{ass:RDU}, we can use this condition on $[0,x_0]$ instead of linearity, and the results in Theorem \ref{thm:RDU} hold true.
\end{enumerate}
We formally prove the assertion in (d) below.
Let $u$ be an increasing function with $u(0)=0$
and $x\mapsto u(x)/x$  is increasing.
We first show an analogue of Theorem \ref{thm:CT_improvement}.
Let $(X_1,\dots,X_n)$ and $(Y_1,\dots,Y_n)$ be as in the proof of Theorem \ref{thm:CT_improvement}.
Note that
 $$
 u(Y_i) =u\left(X \id_{\{Z_{i-1}\le U  < Z_i\}}\right)  =u (X)
\id_{\{Z_{i-1}\le U  < Z_i\}}   \ge
\frac{Xu(X_i)}{X_i}\id_{\{Z_{i-1}\le U  < Z_i\}} \id_{\{X_i>0\}}.
 $$
 Hence,
\begin{align*}
 \E\left[u(Y_i) \mid X_1,\dots,X_n\right]
& \ge \frac{Xu(X_i)}{X_i}\id_{\{X_i>0\}}
 \E\left[ \id_{\{Z_{i-1}\le U  < Z_i\}}   \mid X_1,\dots,X_n \right] \\
&  = \frac{Xu(X_i)}{X_i}\id_{\{X_i>0\}}   \frac{X_i}{X}\\
&  =  u(X_i )\id_{\{X_i>0\}}=  u(X_i ).
\end{align*}
This shows $u(Y_i)\ge_{\rm icx} u(X_i)$, where $\ge_{\rm icx}$
is increasing convex order (meaning $\E[\phi(Y_i)]\ge \E[\phi(X_i)]$ for all increasing convex $\phi$).
This implies $\rho_{\overline{w}}(u(Y_i))\ge \rho_{\overline{w}}(u(X_i))$ because $\rho_{\overline{w}}$ is increasing in convex order (e.g., \citealp[Theorem 3]{WWW20bSM}),
and increasing convex order can be decomposed into convex order and
 first-order stochastic dominance (e.g., \citealp[Theorem 4.A.6]{SS07SM}).
Therefore, the jackpot allocation $(Y_1,\dots,Y_n)$ dominates $(X_1,\dots,X_n)$, and for
 sum-optimality it suffices to consider jackpot allocations.
 The rest of the proof follows the same arguments in the proof of Statements (i) and (ii) of Theorem \ref{thm:RDU} with $X$ replaced by $u(X)$. Statements (iii) and (iv) do not rely on the properties of $u$ on $[0,x_0]$.

We finally note that if $u$ is convex on $[0,a]$ and concave on $[a,\infty)$, then  $x\mapsto u(x)/x$ is increasing on some interval $[0,b]$ with $b\ge a$ (often $b>a$).

\section{Upper bounds on outcomes and scapegoat allocations under expected utility}
\label{onl.app.:scapegoats}

\noindent This online appendix gives results for outcome sets that are bounded above and not below, in which case scapegoat allocations result for
 risk-seeking agents. The proofs of these results follow by symmetry from their counterparts.

\begin{theorem}[Counter-monotonic improvement; dual to Theorem \ref{thm:CT_improvement}]\label{thm:dualCT_improvement}
Assume that $X_1,\dots,X_n\in L^1$ are nonpositive, $X=\sum_{i=1}^n X_i$, and Assumption \ref{ass:ER} holds.
Then there exists $(Y_1,\dots,Y_n)\in \mathbb A_n(X)$ such that
\begin{enumerate}[(i)]
\item $(Y_1,\dots,Y_n)$ is counter-monotonic;
\item $Y_i\ge_{\rm cx} X_i$ for all $ i \in [n]$;
\item $ Y_1,\dots,Y_n $ are nonpositive.
\end{enumerate}
Moreover, $(Y_1,\dots,Y_n)$ can be chosen as a scapegoat allocation of $X$.
\end{theorem}

The results below will use some of the following assumptions.

\begin{myassump}{EU-dual}\label{ass:dualEU}
 Each agent maximizes  EU with a strictly increasing continuous utility function $u_i:\R_- \rightarrow \R_-$ with $u_i(0)=0$.
The domain of allocations is $\X=L^\infty_-$, where $L^\infty_-$ is the set of all nonpositive bounded random variables. The total payoff $X\in \X$ satisfies $\p(X<0)>0$.
\end{myassump}

\begin{myassump}{EURS-dual}\label{ass:dualEURS}
On top of Assumption \ref{ass:dualEU}, all agents are strictly risk seeking.
\end{myassump}

\begin{myassump}{EUM-dual}\label{ass:dualEUM}
On top of Assumption \ref{ass:dualEU},  agents in a subgroup
$S\subseteq[n]$ are strictly risk seeking and the others in  $T=[n]\setminus S$ are strictly risk averse, with $S,T$ nonempty.
\end{myassump}

\begin{myassump}{H-EURS-dual}\label{ass:dualHO}
On top of Assumption \ref{ass:dualEURS}, agents are {homogeneous}; that is, $u_1 = \dots = u_n = u$ holds.
\end{myassump}

The function $V_{\boldsymbol{\lambda}}$ below was defined in (\ref{eq:V}).

\begin{theorem}[Pareto optimality for risk seekers; dual to Theorem \ref{thm:Pareto}]\label{thm:dualPareto}
Suppose that Assumptions \ref{ass:ER} and \ref{ass:dualEURS} hold. For an allocation $\mathbf X=(X_1,\dots, X_n)\in \mathbb A_n(X)$,
the following statements are equivalent:
\begin{enumerate}[(i)]
 \item $\mathbf X$ is Pareto optimal;
 \item $\mathbf X$ is
 $\blambda$-optimal for some $\boldsymbol{\lambda} \in \Delta_{n}$;
\item  $\mathbf X$ satisfies $\sum_{i=1}^n \lambda_iu_i(X_i) = V_{\boldsymbol{\lambda}}(X)$ for some $\boldsymbol{\lambda}=(\lambda_1,\dots,\lambda_n)\in \Delta_{n}$;
 \item  $\mathbf X$ satisfies $\sum_{i=1}^n \lambda_i\E[u_i(X_i)] = \E\left[V_{\boldsymbol{\lambda}}(X)\right]$ for some $\boldsymbol{\lambda}=(\lambda_1,\dots,\lambda_n)\in \Delta_{n}$;
 \item $\mathbf X$ is a scapegoat allocation satisfying the following restriction:  for some $\boldsymbol{\lambda}=(\lambda_1,\dots,\lambda_n)\in \Delta_{n}$,
 $ \lambda_i u_i(X)\id_{A_i}  =V_{\blambda}(X) \id_{A_i}$  for each $i\in [n]$, where we write $\mathbf X=X(\id_{A_1},\dots,\id_{A_n})$.

 \end{enumerate}
 \end{theorem}

\begin{proposition}[Dual to Proposition \ref{prop:PO_jackpot}]
 \label{prop:dualPO_jackpot}
If $X=x<0$ is a constant and Assumptions \ref{ass:ER} and \ref{ass:dualEURS} hold, then all scapegoat allocations of $X$ are Pareto optimal.
\end{proposition}

\begin{proposition}[Dual to Proposition \ref{prop:UPF}]
\label{prop:dualUPF}
 Under Assumptions \ref{ass:ER} and \ref{ass:dualHO},
 $\mathrm{UPF}(X)=\mathrm{UPJ}(X)=\Delta_n(\E[u(X)])$.
\end{proposition}

   \begin{proposition}[Dual to Proposition \ref{prop:convex}]
    \label{prop:dualconvex}
    Under Assumptions \ref{ass:ER} and \ref{ass:dualEU},
    \begin{enumerate}[(i)]
    \item
    both $\mathrm{UPJ}(X)$
    and $\mathrm{UPS}(X)$  are convex;
    \item  an allocation of $X$ is Pareto optimal if and only if
it is $\blambda$-optimal for some $\boldsymbol{\lambda} \in \Delta_{n}$.
    \end{enumerate}
\end{proposition}

Remember that for $\mathbf X=(X_1,\dots,X_n)$ and $S\subseteq[n]$, we write $\mathbf X_S=(X_i)_{i\in S} $.

\begin{theorem}[Subgroups; dual to Theorem \ref{thm:EU}]
\label{thm:dualEU}
Suppose that Assumptions \ref{ass:ER} and \ref{ass:dualEU} hold, and that $\mathbf X \in \mathbb A_n(X)$ is Pareto optimal.
 For any set $S\subseteq[n]$ of strictly
 risk-seeking agents, $\mathbf X_S$ is a scapegoat allocation.
 For any set $T\subseteq[n]$ of strictly
 risk-averse agents,
 $\mathbf X_T$ is a comonotonic allocation.
\end{theorem}

\end{appendix}
\end{document}